\newcommand{\schema}[1]{\ensuremath{\mathbf{#1}}}
\newcommand{\smap}[1]{\ensuremath{{#1}}}
\newcommand{\query}[1]{\ensuremath{{\MakeUppercase{#1}}}}
\newcommand{\vect}[1]{\ensuremath{\bar{#1}}}
\newcommand{\sol}[3]{\ensuremath{\llbracket(#1,\smap{#2})\rrbracket_{\sf{#3}}}}
\newcommand{\eval}[3]{\ensuremath
{\mbox{\sc EVAL}^{\sf{#3}}(\query{#1},\smap{#2})}}
\newcommand{\cert}[4]{\ensuremath{\mbox
{\bf cert}^{\sf{#4}}(\query{#1},(#2,\smap{#3}))}}
\newcommand{\insta}[1]{\ensuremath{Inst(\schema{#1})}}
\newcommand{\defined}{\ensuremath{\mbox{:=}}}
\newcommand{\setdef}{\ensuremath{\mbox{ : }}}
\newcommand{\comma}{\ensuremath{\mbox{,}}}
\newcommand{\cq}{\ensuremath{{\sf CQ}}}
\newcommand{\ucq}{\ensuremath{{\sf UCQ}}}
\newcommand{\cqn}{\ensuremath{{\sf CQ}^{\neg}}}
\newcommand{\cqnp}{\ensuremath{{\sf CQ}^{\neg,1}}}
\newcommand{\ucqn}{\ensuremath{{\sf UCQ}^{\neg}}}
\newcommand{\ucqin}{\ensuremath{{\sf UCQ^{\neq}}}}
\newcommand{\ucqino}{\ensuremath{{\sf UCQ^{\neq,1}}}}
\newcommand{\fo}{\ensuremath{{\sf FO}}}
\newcommand{\repe}[1]{\ensuremath{Rep}(#1)}
\newcommand{\ad}[1]{\ensuremath{\#_{\sf D}(#1)}}
\newcommand{\adr}[2]{\ensuremath{\#_{\sf D}(#1,#2)}}
\newcommand{\ac}[1]{\ensuremath{\#_{\sf C}(#1)}}
\newcommand{\acr}[2]{\ensuremath{\#_{\sf C}(#1,#2)}}
\newcommand{\aegd}{{\em{aegd}}}
\newcommand{\aegds}{{\em{aegds}}}
\newcommand{\abd}{{\em{abd}}}
\newcommand{\abds}{{\em{abds}}}
\newcommand{\tgd}{{\em{tgd}}}
\newcommand{\tgds}{{\em{tgds}}}
\newcommand{\egd}{{\em{egd}}}
\newcommand{\egds}{{\em{egds}}}
\newcommand{\sttgd}{{\em{s-t tgd}}}
\newcommand{\sttgds}{{\em{s-t tgds}}}
\newcommand{\cons}{\ensuremath{{\sf Cons}}}
\newcommand{\vars}{\ensuremath{{\sf Vars}}}
\newcommand{\nulls}{\ensuremath{{\sf Nulls}}}
\newcommand{\nullso}{\ensuremath{{\sf Nulls^{o}}}}
\newcommand{\nullsc}{\ensuremath{{\sf Nulls^{c}}}}
\newcommand{\nullv}{\ensuremath{\bot}}
\newcommand{\nullo}{\ensuremath{\bot^{\sf o}}}
\newcommand{\nullc}{\ensuremath{\bot^{\sf c}}}
\newcommand{\letbe}{\ensuremath{\gets}}
\newcommand{\dom}[1]{\ensuremath{dom(#1)}}
\newcommand{\domc}[1]{\ensuremath{cons(#1)}}
\newcommand{\mgu}{\ensuremath{mgu}}
\newcommand{\false}{\ensuremath{\bf{false}}}
\newcommand{\true}{\ensuremath{\bf{true}}}
\newcommand{\np}{\ensuremath{\sf{NP}}}
\newcommand{\conp}{\ensuremath{\sf{coNP}}}
\newcommand{\ptime}{\ensuremath{\sf{P}}}
\newcommand{\ar}[1]{\ensuremath{arity(#1)}}
\newcommand{\aff}[1]{\ensuremath{{\sf aff}(#1)}}
\newcommand{\abdset}{\ensuremath{\Sigma^{\leftrightarrow}}}
\newcommand{\abdsetn}{\ensuremath{\Sigma^{\leftrightarrow}_{\diamond}}}
\newcommand{\abdsetabd}{\ensuremath{\Sigma^{\leftrightarrow}_{abd}}}
\newcommand{\abdsetaegd}{\ensuremath{\Sigma^{\leftrightarrow}_{aegd}}}
\newcommand{\abdsetleft}{\ensuremath{\Sigma^{\leftarrow}}}
\newcommand{\abdsetright}{\ensuremath{\Sigma^{\rightarrow}}}
\newcommand{\body}[1]{\ensuremath{body(#1)}}
\newcommand{\annt}[2]{\ensuremath{#1^{#2}}}
\newcommand{\annot}{\ensuremath{annot}}
\begin{document}

\mainmatter  

\title{Inference-based semantics in Data Exchange}

\titlerunning{Inference-based semantics in Data Exchange}

%
%
\author{Adrian Onet}
\authorrunning{Adrian Onet}

\institute{Morgan Stanley, Montreal, Canada\\
\mailsa
}

%
%

\toctitle{Inference-based semantics in Data Exchange}
\tocauthor{Adrian Onet}
\maketitle

\begin{abstract}
Data Exchange is an old problem that was 
firstly studied from a theoretical point of view only in 2003.
Since then many approaches were considered when
it came to the language describing the relationship 
between the source and the target schema. These 
approaches focus on
what it makes a target instance a ``good'' solution for
data-exchange.
In this paper we propose the inference-based semantics 
 that solves many 
certain-answer anomalies existing in current 
data-exchange semantics.
To this we introduce a new mapping language 
between the source and the target schema based on
{\em annotated bidirectional dependencies} (\abd)
and, consequently define the semantics for this new language.
It is shown that the ABD-semantics can properly 
represent the inference-based semantics, for any source-to-target
mappings.
We discovered three dichotomy results under the 
new semantics for solution-existence, solution-check and
$\ucq$ evaluation problems. These results rely on 
 two factors describing the annotation used 
in the mappings (density and
cardinality).
Finally we also investigate 
the certain-answers evaluation problem under ABD-semantics
and discover many tractable classes for non-$\ucq$ queries
even for a subclass of~$\cqn$.
 \end{abstract}

\bigskip
\section{Introduction}\label{intro}
The data-exchange problem is that of transforming a database existing under 
a source schema
into another database under a different target schema.
This database transformation is based on mappings that describe the 
relationship between
the source and the target database. A mapping $M$ can be viewed as a, possibly 
infinite, set of
pairs $(I,J)$, where $I$ is a source instance and $J$ a target instance.
In this case, $J$ is called a {\em data-exchange solution} for $I$ and $M$.
The mapping between the source and the target database is usually specified in some
logic formalism.
The most widely accepted mapping language is the one based on sets of 
tuple-generating
dependencies (\tgds) and equality-generating dependencies (\egds).
A tuple-generating dependency is a $\fo$ sentence of the form:
$\forall \vect{x} (\forall \vect{y}\; \alpha(\vect{x},\vect{y}) \rightarrow 
\exists \vect{z}\; \beta(\vect{x},\vect{z}))$,
where $\alpha$ and $\beta$
represent conjunctions of atoms. 
In case all atoms from $\alpha$ are over the source schema and all the 
atoms from  $\beta$
are over the target schema, then we name the dependency a 
{\em source-to-target tgd} (\sttgd).
An equality-generating dependency (\egd) is a $\fo$ 
sentence of the form:
$\forall \vect{x} (\alpha(\vect{x}) \rightarrow x=y)$,
where $\alpha(\vect{x})$ is a conjunction of atoms over the target schema
and variables $x$ and $y$ occur in $\vect{x}$.

There may be several solutions for a given source instance
and a mapping, thus 
the set of solutions can be viewed  as
an incomplete database 
\cite{DBLP:journals/jacm/ImielinskiL84,DBLP:books/sp/Grahne91}.
Most commonly, the target solutions need to be 
queried. This takes us to the problem of querying incomplete databases 
\cite{DBLP:conf/sigmod/AbiteboulKG87}.
The {\em exact answers semantics} defines
the answer to a query over an incomplete database as the set of
all answers to the query for each instance from its semantics.
This approach is rarely feasible in practice, 
therefore two
approximations are commonly accepted: 
{\em certain answer} (answers occurring for all solutions)
and {\em maybe answer semantics} (answers occurring for some solutions).

Based on the interpretation of the mapping language there may be
several semantics in data exchange.
The first semantics was introduced in 
\cite{DBLP:journals/tcs/FaginKMP05} and it is here referred to as 
the OWA-semantics (open world assumption).
In order to improve 
the certain answer behavior for non-$\ucq$ queries under OWA-semantics,
Libkin \cite{DBLP:conf/pods/Libkin06}  introduced the first
closed-world semantics in data exchange,
known as CWA-semantics. This semantics was later on extended 
by Hernich and Schweikardt \cite{DBLP:conf/pods/HernichS07}
to include target \tgds.
In CWA-semantics a solution incorporates only tuples that are ``justified''
by the source and the dependencies.

Hernich \cite{DBLP:phd/de/Hernich2010} 
enumerated three rules that a ``good'' data-exchange semantics should follow:
1) Implicit information in schema mapping and source instances are
taken into account;
2) Logical equivalence of schema mapping is respected; and
3) The standard semantics of $\fo$ quantifier is reflected.
Based on these rules 
Hernich introduced the GCWA$^{*}$-semantics for data exchange.
Unfortunately most of these semantics have a rather
strange behavior when looking for certain answers over the set of solutions.
More recently Arenas et al. \cite{DBLP:conf/amw/ArenasD014}
introduced a new semantics for data-exchange based on bidirectional 
constraints. This new semantics solves most of the 
query anomalies present in the other semantics
but it comes with price of non-tractable
data complexity even for the simplest data-exchange problems.
Also, as we will see, there are simple mappings that can be specified
by a set of \tgds, under a closed-world semantics, 
but can not be specified by only using bidirectional 
constraints.

To better understand the type of
anomalies we may encounter 
under these semantics, 
consider the
next simple example. 
A company 
has in the source schema a binary relation $P$ 
representing the relationship between projects and 
employees.
After some reorganization it was realized that each projects 
financing should be provided by one or more cost-centers,
each employee belonging to one of these cost-centers.
With this the company creates the target schema
with two binary relations $PC$ and $CE$ representing the 
project to cost-center and cost-center to employee relationship respectively.
In the case of the unidirectional
semantics the process can be specified by the following \tgd: 
\begin{equation}\label{cwa-notworking1}
\begin{split}
\xi_1: \forall p\; \forall e\; P(p,e) \rightarrow \exists cc\; PC(p,cc)\wedge CE(cc,e).
\end{split}
\end{equation}

\noindent 
Let source $I$
be  $P^{I}=\{ (p_1,e_1), (p_1,e_2), (p_2,e_3) \}$,
stating that there are two employees $e_1$ and $e_2$ working on project 
$p_1$ and only employee $e_3$ working on project~$p_2$.
Consider target boolean query:
$\query{q}\defined \forall p,cc\; PC(p,cc)\wedge CE(cc,"e_3")\rightarrow p="p_2"$
({\em Is $e_3$ involved only in project $p_2$?}).
Because target instance $J$,
with $CE^{J}=\{ (cc_1,e_1), (cc_1,e_2), (cc_1,e_3) \}$
and with $PC^{J}=\{ (p_1,cc_1), (p_2,cc_1) \}$,
is part of all unidirectional semantics,
the certain answer for the given query will be
un-intuitively \false\;
under these semantics.
One may expect this answer to be \true\; based input data,
thus instance $J$ should not be part of the semantics.
For the same problem, consider the 
mapping represented by the next bidirectional constraint:
\begin{equation}\label{cwa-notworking2}
\begin{split}
\xi_2: \forall p\; \forall e\; \big( P(p,e) \leftrightarrow 
 \exists cc\; PC(p,cc)\wedge CE(cc,e)\big).
\end{split}
\end{equation}
\noindent
This mapping together with the bidirectional-constraints 
semantics
solves the previous anomaly but introduces
others. Consider query:
``{\em Does each cost-center have an employee?}''
Under bidirectional constraint semantics the 
certain answer to this query will be counter-intuitively 
\false. This happens because bidirectional semantics does not 
require target tuples to be ``justified'', thus 
 instance $J'$, with
 $CE^{J'}=\{ (cc_1,e_1), (cc_1,e_2), (cc_2,e_3) \}$
and $PC^{J'}=\{ (p_1,cc_1), (p_2,cc_2),(p_2,cc_3)  \}$ is part of 
the bidirectional constraint semantics. 
Note that this anomaly can be fixed with a rather 
complicated bidirectional constraint mapping where the right-hand 
side is a first-order expression.

{\bf Contributions}
Motivated by certain answer anomalies and 
expressivity issues in the current data-exchange semantics,
in this paper we propose a new data-exchange semantics
based on logical inference
for mappings represented by 
sets of  \sttgds\; and target \egds. This  semantics
eliminates most
 anomalies related to certain-answers and 
keeps the same certain answers with the other semantics
for union of conjunctive queries.
To this, we show that for any set of \sttgds\; and safe \egds\;
the inference-based semantic can easily be 
represented in a much richer language of 
annotated bidirectional dependencies (abd)
and safe annotated 
target \egds\footnote{Intuitively safe \egds\; do not 
allow joining with attributes that may contain null values.}
(safe \aegd).
The restriction to safe \aegds\; instead of regular \egds\;
exists because by
allowing non-safe \aegds, it 
opens the door to certain answer anomalies.

We discovered two important characteristics for each set $\abdset$
of \abds, 
namely the {\em annotation density} ($\ad{\abdset}$) and
{\em annotation cardinality} ($\ac{\abdset}$).
Intuitively, the annotation density measures the number of occurrences
a relational symbol is annotated with the same label in~$\abdset$.
The annotation cardinality measures the number
of labels used for a relational symbol in~$\abdset$.
We found that data-exchange 
solution-existence and solution-check problems
have a dichotomic behavior
based on annotation density and cardinality, respectively.

On the expressibility side, we show that
for any set $\Sigma$ of \sttgds\; and \egds\;
one can simply compute $\abdset$, a set of \abds\; and \egds\,
of density 1 such that for any instance $I$ 
the inference-based semantics for $\Sigma$ and $I$ coincide with the 
ABD-semantics for $\abdset$ and $I$. We also show that there exists
sets of \abds, even with density 1, that can't be expressed under 
inference-based semantics.
Next, we found  that $\ucq$ query-evaluation problem
is tractable when $\ad{\abdset}=1$
and it is $\conp$-complete for $\ad{\abdset}>1$.
In case $\ad{\abdset}=1$
we prove that there exists an exact table
representation for the ABD-semantics (following that 
this representation can be used for the inference-based semantics too).
This representation is computable with a new chase-based process.
We call this table 
 {\em universal representative} and we believe it is a good candidate to 
be materialized on the target schema 
as the result of the data-exchange process.
We show that for $\ad{\Sigma}=1$
the evaluation of universal queries is tractable 
and introduce a large subclass of $\cqn$
for which the evaluation problem is tractable.
To the best of our knowledge, none of the data-exchange semantics
has tractable results for any subclasses of $\cqn$ containing 
at least one negated atom.

\medskip
{\bf Organization}.
We start with preliminary notions, 
followed by  
Section~\ref{desemantics} in which we 
overview existing semantics and introduce
the new inference-based semantics.
Next section is allocated to present the new mapping 
language based on \abds. 
Section \ref{universal} introduces
 a new
type of na\"ive table called  semi-na\"ive, 
able to exactly represent the ABD-semantics specified by mappings with
density~1.
Section \ref{query} is devoted to 
the problem of certain answers evaluation. 

\bigskip
\section{Preliminaries}\label{prelim}
This section reviews the basic technical preliminaries 
and definitions. More information on relational database theory
can be obtained from 
\cite{DBLP:books/aw/AbiteboulHV95}.
We will consider the complexity classes  
$\ptime$, $\np$ and $\conp$. 
For the definition of these classes we refer to
\cite{DBLP:books/daglib/0072413}.

A finite mapping $f$, where $f(a_i)=b_i$, 
for $i\in \{1,\ldots,n \}$, will
be represented as $\{ a_1/b_1, a_2/b_2,\ldots, a_n/b_n \}$.
When it is clear from the context $f$,
it will be also viewed as the following formula
$a_1=b_1 \wedge a_2=b_2 \wedge \ldots \wedge a_n=b_n$.
For a mapping $f$ and set $A$, with $f|_{A}$ will denote
the mapping $f$ restricted to values from $A$.
By abusing the notation, a vector $\vect{x}=(x_1,x_2,\ldots,x_n)$
will be often viewed as the set $\{x_1,x_2,\ldots,x_n \}$,
thus we may have set operations like $x_i \in \vect{x}$
or $\vect{x} \cap \vect{y}$. 

\noindent
{\bf Databases}. A {\em schema} $\schema{S}$ is a finite set 
$\{ S_1, S_2, \ldots, S_n \}$ of relational symbols, each 
symbol $S_i$ having a fixed arity $\ar{S_i}$.
Let $\cons$, $\nulls$ and $\vars$ 
be three countably infinite sets of
constants, nulls and variables
such that there are no common elements between any two 
of these sets.
Elements from $\cons$ are symbolized by lower case 
(possibly subscripted)
characters from the beginning of the alphabet (e.g. $a$, $b_1$).
Elements from $\vars$ are represented by lower case 
(possibly subscripted)
characters from the end of the alphabet (e.g. $z$, $x_2$).
Each element from the countable set $\nulls$ 
is represented by subscripted symbol~$\nullv$ (e.g. $\nullv_i$).
A {\em na\"ive table} $T$ of $\schema{S}$ is an  interpretation
that assigns to each relational symbol $S_i$ a finite
set $S_i^{T} \subset (\cons \cup \nulls)^{\ar{S_i}}$, sometimes
we also view $S_i$ as a relation between elements of $\cons \cup \nulls$.
The set $\dom{T}$ means all elements 
that occur in $T$, clearly $\dom{T}\subseteq \cons\cup \nulls$.
A na\"ive table $T$ is called an {\em instance}
if $\dom{T}\subset \cons$. 
In contrast to general na\"ive tables,
which are identified by capitalized characters from the end of the alphabet
(e.g. $T$, $V$), instances are represented by capitalized 
characters from the middle of the alphabet (e.g. $I$, $J$).
The set of all instances over schema $\schema{S}$ is denoted
$\insta{S}$.
A {\em valuation } $v$ is a mapping over the set $\cons\cup \nulls$ such that 
$v(a)=a$, for all $a\in \cons$, and $v(\nullv)\in \cons$, for all 
$\nullv \in \nulls$. Valuations are extended to
tuples and na\"ive tables as follows. 
For each tuple $\vect{t}=(t_1,t_2,\ldots,t_n)$,
let $v(\vect{t})\defined (v(t_1),v(t_2),\ldots,v(t_n))$;
and for a na\"ive table $T$ over schema $\schema{S}$, define
$v(T)$ as 
 $R^{v(T)}\defined \{ v(\vect{t}) \setdef \vect{t} \in R^{T} \}$,
for all $R\in \schema{S}$. The interpretation
of a na\"ive table $T$ is given by 
$\repe{T} \defined \{ v(T)\setdef \;v \mbox{ valuation}  \}$.

\noindent 
{\bf Schema mappings}.
A data-exchange {\em schema mapping} is a triple 
$\smap{M}=(\schema{S},\schema{T},\Sigma)$,
where $\schema{S}$ and $\schema{T}$ are two disjoint schemas
named the source and target schema respectively;
$\Sigma$ is a set of formulae expressing the relationship
between the source and the target database.
Most commonly, $\Sigma$ is represented by a set of 
source-to-target tuple-generating dependencies (\sttgds) and
target equality-generating dependencies (\egds).
Where a source-to-target tuple-generating dependency is a 
$\fo$ sentence $\xi$ of the form:
$\forall \vect{x}\; (\forall \vect{y}\; \alpha(\vect{x},\vect{y}) 
\rightarrow \exists \vect{z}\; \beta(\vect{x},\vect{z}))$,
where $\vect{x}$, $\vect{y}$ and $\vect{z}$ are vectors of 
variables from $\vars$;
$\alpha(\vect{x},\vect{y})$ (often referred to as the body of the \tgd)
is a conjunction of atoms
over the source schema; 
and $\beta(\vect{x},\vect{z})$ (often referred to as the head of the \tgd)
is a conjunction of atoms over the target schema.
In case $\beta$ contains a single atom, the \tgd\; is referred to 
as GAV (global as view) \tgd.
In case $\vect{z}=\epsilon$ (empty), the \tgd\; is called a 
{\em full } \tgd. 
We will often view a conjunction of atoms
as a na\"ive table where each atom from the conjunction is 
a tuple 
and each variable from the conjunction corresponds to a null in the 
table.
An {\em equality-generating} dependency is a $\fo$ sentence $\xi$
of the form:
$\forall \vect{x}\; (\alpha(\vect{x}) 
\rightarrow x=y)$,
where $\alpha(\vect{x}) $ is a conjunction of atoms over the target
schema and $x$,$y$ are variables from the vector~$\vect{x}$.
A source instance $I\in \insta{S}$ and a target instance
$J \in \insta{T}$ are said to satisfy a \sttgd\; $\xi$,
denoted $(I,J)\models \xi$;
if $I\cup J$ is a model of $\xi$ in the model-theoretic sense.
Similarly, a target instance $J$ satisfies an \egd\; $\xi$, denoted 
$J \models \xi$, if $J$ is a model for $\xi$ in the 
model-theoretic sense.
This is extended to a set of \sttgds\; and \egds\; $\Sigma$
by stipulating that $(I,J)\models \xi$, for all \sttgd\; $\xi\in \Sigma$
and $J\models \xi$, for all \egd\; $\xi\in \Sigma$.
When the schemas are known or not relevant in the context,
we usually interchange the notion of schema mapping and 
the set of dependencies that defines it.

A data-exchange semantics $\mathcal{O}$ associates 
for a schema mapping $\smap{M}=(\schema{S},\schema{T},\Sigma)$
and a source instance $I$ a possible infinite
 set of target instances
$\sol{I}{M}{\mathcal{O}}$. We refer to each element of 
$\sol{I}{M}{\mathcal{O}}$ as a solution for $I$ and $\smap{M}$
under semantics $\mathcal{O}$.

\medskip 
\noindent 
{\bf Queries}.
$\cq$, $\ucq$, $\ucqn$ and $\ucqin$
denote the classes of {\em conjunctive queries}, 
{\em union of conjunctive queries}, 
{\em union of conjunctive queries with negation}
and {\em union of conjunctive queries with unequalities},
respectively. For complete definitions of these classes, please 
refer to~\cite{DBLP:books/aw/AbiteboulHV95}.
For a given data-exchange semantics $\mathcal{O}$,
schema mapping $\smap{M}$ and source instance $I$,
the certain answers for a given query $\query{q}$ is defined as:
\begin{equation}
\begin{split}
\cert{q}{I}{M}{\mathcal{O}} \; \defined \bigcap_{J \in \sol{I}{M}{\mathcal{O}}}  \query{q}(J).
\end{split}
\end{equation}

\section{On data-exchange semantics}\label{desemantics}
As mentioned in the introduction, there are many
semantics considered in data-exchange.
In this section we briefly review the most prominent of these 
semantics and present some certain-answers 
anomalies associated with these.
In Section \ref{inf-sem-sec} 
we introduce a new semantics
for mappings specified by sets of \sttgds\; and \egds\;
that addresses all certain-answers anomalies presented in this paper.
In the final part of this section we
will review the semantics based on 
bidirectional constraints.

\bigskip
\subsection{OWA Data Exchange}
The OWA-Semantics is the first semantics considered in data exchange 
\cite{DBLP:journals/tcs/FaginKMP05}. This is, by far, the
most studied  
\cite{DBLP:journals/tcs/FaginKMP05,DBLP:conf/pods/ArenasBLF04,DBLP:journals/tods/FaginKP05,DBLP:conf/pods/DeutschNR08,DBLP:conf/pods/ArenasPR11,DBLP:conf/icdt/GrahneO12,DBLP:conf/icdt/Hernich12}. 
Under this semantics, given a data-exchange mapping
specified by
$\Sigma$ a set of \tgds\; and \egds\;
and given a source instance $I$,
the OWA-semantics for $I$ under $\Sigma$ is defined as
\begin{equation}\label{sol-owa}
\begin{split}
\hspace{0cm}\sol{I}{\Sigma}{owa} \;\defined\; \{ J\in \insta{T} \setdef 
I\cup J \models 
\Sigma \}.
\end{split}
\end{equation}
\noindent

In their seminal paper Fagin et al. \cite{DBLP:journals/tcs/FaginKMP05}
show that, when dealing with conjunctive queries,
the set of certain answers $\cert{q}{I}{\Sigma}{owa}$
can be computed by na\"{i}vely evaluating \cite{DBLP:dblp_conf/pods/GheerbrantLS13}
 the query $\query{q}$
on a special instance, called {\em universal solution}, 
obtainable in polynomial time, if exists, 
through the chase process.

\begin{example}\label{owa-sample}
Consider source schema consisting 
of two relations: $FTEmployees$ and $Consultants$.
The first relation maintains a list of full-time employees 
and the second relation maintains a list of all consultants from a company. 
Consider target schema consisting 
of relations $AllEmp$ and $Cons$  for all employees and consultants.
The mapping between the source and target represented by 
the following \sttgds:
\begin{equation*}\label{sol-owa_eg}
\begin{split}
 \Sigma= \{ & FTEmployees(eid) \rightarrow AllEmp(eid); \\
& \hspace{-0cm}Consultants(eid)  \rightarrow Cons(eid),AllEmp(eid) \}. 
\end{split}
\end{equation*}

\noindent
Let source instance $I$ be $FTEmployee^{I}=\{ dan \}$ and
$Consultants^{I}=\{ john  \}$. Under this settings, 
instance
 $J \in \sol{I}{\Sigma}{owa}$, where 
$AllEmp^{J}=\{ john, dan \}$ and
$Cons^{J}=\{ john, dan \}$.
Thus, the certain answer to the query: ``{\em Are there any employees that are not consultants?}'', 
expressible by 
a simple $\cqn$ query, will return \false. 
This is an unexpected result
as clearly, based on the source instance,
``$dan$`` is an full-time employee and not a consultant.
\end{example}

In the previous example we note that the
given $\cqn$ query 
will return \false\; for any source instance $I$. 
This uniformity
in the certain query answering was first noted by Arenas et al.
in \cite{DBLP:conf/pods/ArenasBLF04},
who have also shown that
even for the ``copy'' dependencies there are source $\fo$ queries 
that can't be rewritten as target $\fo$ queries that return the same
results under OWA-semantics for all input sources. 

To avoid some of the anomalies, like the one from the previous example,
in~\cite{DBLP:conf/pods/FaginKP03} Fagin et al. proposed a more restricted 
semantics:
\begin{equation}\label{sol-rowa}
\begin{split}
\hspace{0cm}\sol{I}{\Sigma}{rowa} \defined \{ J\in \repe{T} \setdef 
 T& \mbox{ universal solution  for $I$ and $\Sigma$} \}.
\end{split}
\end{equation}

As shown in \cite{DBLP:journals/corr/abs-1107-1456}, 
even with this restriction the certain answers still preserve 
some of OWA anomalies with respect to certain answers.

\medskip
\subsection{CWA Data Exchange}

To outcome the counter-intuitive behavior
under the open-world semantics
Libkin \cite{DBLP:conf/pods/Libkin06} introduced
the CWA-semantics for mappings specified by 
a set of \sttgds. Hernich and Schweikardt
\cite{DBLP:conf/pods/HernichS07} extended the semantics
by adding target \tgds.
Given a set $\Sigma$  of \sttgds\;
and a source instance 
$I$ a CWA-solution for $I$ and $\Sigma$ 
is defined
as any na\"ive table $T$ over the target schema 
that satisfies the following three requirements:
\begin{enumerate}
\item each null from $\dom{T}$ is justified by some tuples from
$I$ and a \sttgd\; from $\Sigma$; and
\item each justification for nulls is used only once; and
\item each fact in $T$ is justified by $I$ and $\Sigma$.
\end{enumerate}

\begin{example}\label{CWA-sample}
Consider mapping specified by the \sttgd\; 
stating that
each student that has a library card reads at least one book:
\begin{equation}\label{tgd-cwa}
\begin{split}
\hspace{0cm} LibraryCard(sid,cid) \rightarrow 
\exists bid\;
Read(sid, bid). 
\end{split}
\end{equation}
\noindent
Let source instance $I$ be  
$LibraryCard^{I}=\{(john,stateLib1),(john,univLib2) \}$,
reviling that
student ``john'' has a library card at the 
municipal library ``stateLib1''
and one card for the university library ``univLib1''.
A CWA-solution for this  settings is
$Read^{T}=\{(john,\nullv_1),(john,\nullv_2)  \}$.
\end{example}

 Libkin \cite{DBLP:conf/pods/Libkin06} considered
4 query-answering semantics: {\em certain}, {\em potential certain},
{\em persistent maybe} and {\em maybe}. 
In this paper we will focus on certain-answer semantics.
More detailed information on the other semantics can be obtained from
 \cite{DBLP:conf/pods/Libkin06,DBLP:journals/tods/HernichLS11}. 
The CWA-semantics for certain-answers is defined as:
\begin{equation}\label{cwa-sol}
\begin{split}
\hspace{-0.1cm}\sol{I}{\Sigma}{cwa} \defined \{ J\in \repe{T} \setdef 
T & \mbox{ is a CWA-solution for $I$ under $\Sigma$} \}.
\end{split}
\end{equation}
Returning to the settings from Example \ref{CWA-sample}, 
clearly, under this semantics  the number of books 
read by a student depends on the number of library cards the student owns.
Thus in our case any target instance
from $\repe{T}$, where $T$ is a CWA-solution, will 
mention that student ``john'' read maximum two books.
Following that the certain answer to the query:
"{\em Did 'john' read exactly one or two book?}"
will be \true.
Clearly this is counter-intuitive as the dependencies do not
exclude that ``john'' read more than 1 book from each library.

Later on, this semantics was improved by designating 
attributes from the mapping as either open or 
closed~\cite{DBLP:journals/jcss/LibkinS11}. 
For example, consider the following 
annotated
\tgd\; that maps each graduated 
student with a course and with a grade for that course:
\begin{equation*}\label{cwa-owa}
\begin{split}
\hspace{0cm}Graduate(s) \rightarrow \exists c\exists g\; 
Attend(s^{cl},c^{op}), Grade(s^{cl},c^{op},g^{cl}). 
\end{split}
\end{equation*}
\noindent
Without the open/closed annotation using the  CWA-semantics each
student is considered to have attended exactly one course for which
the student received exactly one grade. 
To be able to 
represent that a student may have attended more than one course,
the course attribute for both $Attend$ and $Grade$ relation 
are marked as open, all the rest being marked as closed. 
In this case, even if the annotation takes care of the previous issue, 
it introduces another problem as the certain-answer semantics 
for source $\{ Graduate(john) \}$  includes instance $J$, where
 $Grade^J=\{ (john,c01,A),(john,c02,A) \}$ and 
$Attend^J=\{(john,c01) \}$.
Consequently, the certain answer to the query
 "{\em Does ``john'' have grades only
for the courses he attended?}" will be \false\; 
even if from
the mapping we expect the certain answer to be $\true$ 
for any student.

Without further details, it needs to be mentioned here that
Grahne and O. \cite{Grahne:2011:CWC:1966357.1966360} introduced 
another semantics for data exchange called the 
constructible-solution. This semantics,
 when restricted only on source-to-target \tgds, 
coincides with the CWA-semantics.

\subsection{GCWA$^*$ Data Exchange}

The GCWA$^*$ semantics was inspired from 
Minker's~\cite{DBLP:conf/cade/Minker82} 
GCWA- semantics 
and nicely adapted by 
Hernich~\cite{DBLP:journals/corr/abs-1107-1456} for data exchange.
For a source instance $I$ and 
 $\Sigma$ a set of \sttgds\; and \egds\;
let the set $Sol_{min}(I,\Sigma)$
denote all the subset-minimal target instances $J$
with $I\cup J \models \Sigma$.
With this, the CGWA$^{*}$-semantics is defined
as:
\begin{equation*}\label{sol-gcwa}
\begin{split}
\hspace{-0.3cm}\sol{I}{\Sigma}{gcwa^*} \defined \big\{ J \setdef 
J=\bigcup_{i\letbe 1}^{n}J_n \mbox{ for some $n$, } 
 J_i\in Sol_{min}(I,\Sigma) 
\mbox{ and } I\cup J \models \Sigma \big\}
\end{split}
\end{equation*}
\noindent 
Even if the GCWA$^{*}$ semantics solves
all aforementioned anomalies, it introduces new ones 
exemplified hereafter.

\bigskip
\begin{example}\label{cgwa-eg}

Consider source instance with binary relation $DeptC$ 
for departments and names of 
 consultant employees working in that department
and ternary relation $DeptFTE$ for departments and full-time
employee (name and id) from the given department.
Suppose the company hires all the consultants as full-time
employee, thus the target schema will be the ternary relation
$DeptEmp$ with the same structure as $DeptFTE$. 
The exchange mapping is represented as:   
\begin{equation*}\label{eg-gcwan}
\begin{split}
\hspace{0cm} DeptC(did,name) \rightarrow \exists eid\; DeptEmp(did,name,eid); \\
\hspace{0cm} DeptFTE(did,name,eid) \rightarrow DeptEmp(did,name,eid). 
\end{split}
\end{equation*}
Consider source instance with consultants ``$john$'' and ``$adam$'' 
part of the ``$hr$''
department and full-time employee 
``$adam$'' with employee id $1$ part of
``$hr$''. Let target query be: 
{\em Is there exactly one employee named {\em adam} in {\em hr} department?}. 
Under  the CGWA$^{*}$-semantics
the query will return the counter-intuitive answer \true, even 
if based on the source instance and given mapping one would expect the
answer to be \false, because beside full-time employee ``$adam$'' there  
maybe a consultant named ``$adam$'' in the same ``$hr$'' department too.

%
%
%
%
\end{example}

\medskip
\subsection{Inference-based semantics}\label{inf-sem-sec}
In order to avoid the certain-answer anomalies
presented in this section and in the introduction,
we present 
a new closed-world semantics for mappings specified 
by a set of \sttgds\; and \egds.
For this let us first introduce a few definitions.

\begin{definition}\label{inference1}
Let 
$\xi:  \alpha(\vect{x},\vect{y})\rightarrow \exists \vect{z}\; \beta(\vect{x},\vect{z})$ be a \sttgd\; and $I$ a source instance.
A set of facts $J'$ is said
to be {\em inferred} from $I$ and $\xi$ with function $f$ 
denoted with $I \xrightarrow{\xi}_f J'$ 
if $f(\alpha(\vect{x},\vect{y}))\subseteq I$ and
$f(\beta(\vect{x},\vect{z}))= J'$.
Tuple $t\in J'$ is said to be {\em strongly inferred} from 
$I$ and $\xi$ with function $f$ if 
$t \in f|_{\bar{x}}(\beta(\vect{x},\vect{z}))$, 
otherwise we say that $t$ is {\em weakly inferred}.
\end{definition}

Intuitively a tuple $t$ 
is strongly inferred from $I$ and $\xi$
with $f$ if $t$ does not depend on the assignment given by $f$
to existential variables from $\xi$.

\begin{definition}\label{inference2}
Let $\Sigma$ be a set of \sttgds, $I$ a source instance and
$J$ a target instance. 
A function $\kappa$ that assigns for each 
$\xi \in \Sigma$ a set of functions $\{ f_1,f_2,\ldots, f_n \}$
such that $I \xrightarrow{\xi}_{f_i} J_i \subseteq J$
is called an {\em inference strategy} for $I$ and $J$ with $\Sigma$.
Note that the function that 
allocates the empty set for each $\xi\in \Sigma$ is an inference strategy
for any $I$ and $\Sigma$.
Given an inference strategy $\kappa$ and $\xi\in \Sigma$, a tuple 
$t\in J$ is said to be: 
\begin{itemize}
\item {\em strongly inferred}
with strategy $\kappa$ for $\xi$ if there exist 
$f\in \kappa(\xi)$ such that $t$ is strongly inferred from
$I$ and $\sigma$ with $f$;
\item {\em not inferred} with strategy $\kappa$ for $\xi$
if there is no $f\in \kappa(\xi)$ with
$I \xrightarrow{\xi}_f J'\ni t$ and
\item {\em weakly inferred} with strategy $\kappa$ for $\xi$ otherwise.
\end{itemize}
A tuple $t$ is said to be {\em inferred} with strategy $\kappa$ for $I$, 
$\Sigma$ and $J$ if there exists $\xi\in \Sigma$ such that
$t$ is strongly or weakly inferred with $\kappa$ for $\xi$.
\end{definition}

\begin{example}
Consider the following abstract set of \sttgds\; $\Sigma=\{ \xi \}$,
where:
\begin{equation*}\label{infeq1}
\begin{split}
\xi\;=\; & R(x,y) \rightarrow \exists z\; S(x,z),T(z,y),T(x,y).
\end{split}
\end{equation*}
Consider also  source instance $I$ and target instance $J$ with
$R^I=\{ (a,b),(c,d) \}$, 
$S^J=\{ (a,a),(a,c),(c,a) \}$ and 
$T^J=\{ (a,a),(a,b),(c,b),(a,d),(c,d) \}$.
With this we have $I\xrightarrow{\xi}_{f_1} J_1$,
where function $f_1=\{ x/a, y/b, z/a \}$, 
$S^{J_1}=\{ (a,a)  \}$ and 
$T^{J_1}=\{ (a,b) \}$.
In this case tuple $S(a,a)$ 
is weakly inferred from $I$ and $\xi$ because 
it is obtained by assigning constant $a$ to the
existentially quantified variable $z$. On the other hand,
 tuple $T(a,b)$ is
strongly inferred from $I$ and $\xi$, even if it can be obtain
by assigning constant $a$ to the existentially quantified variable $z$ in 
the second atom in the head of $\xi$, because
the tuple can also be obtain by valuating only universal variables 
$x$ and $y$ to $a$ and $b$ respectively in the third atom.
Consider also functions:
$f_2=\{ x/a, y/b, z/c \}$, $f_3=\{ x/c, y/d, z/a \}$
and $f_4=\{ x/a, y/b, z/b \}$.
With this we have that for any set 
$A\in \mathcal{P}(\{ f_1,f_2, f_3 \})$ function 
$\kappa(\xi)=A$ is an inference strategy for $I$ and $J$ with $\Sigma$.
On the other hand, for any function $\kappa'$ such that $f_4 \in \kappa'(\xi)$
we have $\kappa'$ is not a valid inference strategy because
tuple $T(b,b) \notin J$.  Consider $\kappa(\xi)=\{ f_1, f_2 \}$,
in this case tuples $S(c,a)$ and $T(a,d)$ are not inferred with strategy 
$\kappa$ for $\xi$;
tuples $T(a,b)$ and $T(c,d)$ are strongly inferred with strategy 
$\kappa$ for $\xi$ and finally tuples
$S(a,a)$,$S(a,c)$, $T(a,a)$ and $T(c,b)$ are 
weakly inferred with strategy 
$\kappa$ for $\xi$.
\end{example}

With this we are now ready to introduce the 
inferred-based semantics.
\begin{definition}\label{inf-sem}
Given a source instance $I$ and $\Sigma$ a set of \sttgds\; and \egds\;
the {\em inference-based semantics} for $I$ and $\Sigma$,
denoted with $\sol{I}{\Sigma}{inf}$, is the set of all target 
instances $J$ for which there exists an inference strategy $\kappa$ such that:
\begin{enumerate}
\item Every tuple $t\in J$ is inferred with $\kappa$ for $I$, $\Sigma$ and $J$;
\item For every $\xi: \alpha\rightarrow \beta \in \Sigma$ and 
every function $f$ with $f(\alpha)\subseteq I$ there exists
$f' \in \kappa(\xi)$ such that $f'$ is an extension of $f$;
\item For every $\xi: \alpha\rightarrow \beta \in \Sigma$ and 
$J_{\kappa,\xi}=\bigcup_{I \xrightarrow{\xi}_g J_g, g\in\kappa(\xi)} J_g$, 
there is no function $f$ with
$f(\beta) \in J_{\kappa,\xi}$ and $f(\alpha) \not \subseteq I$,
 where  $f(\beta)$ contains
at least one weakly inferred tuple with strategy $\kappa$ for $\xi$.
\item $(I,J)\models \Sigma$ in the model-theoretic sense.
\end{enumerate}
\end{definition}

\begin{example}\label{egInf1}
Let us consider the same settings as in Example \ref{inference2}
and consider the following 
inference strategies $\kappa_1(\xi)=\{ f_1, f_2, f_3\}$ and  
$\kappa_2(\xi)=\{ f_3\}$.
Because tuple $S(c,a)\in J$ is  
inferred neither by $\kappa_1$ or $\kappa_2$ it follows that
inference strategies $\kappa_1$ and $\kappa_2$ does not respect 
Condition 1 from the definition.
It is easy to observe that $\kappa_1$ satisfies Condition 2. On the
other hand, for function $f=\{ x/a, y/b \}$ we have that
$f(R(x,y)) \in I$ and
there is no extension in $\kappa_2$, thus it follows that   
$\kappa_2$ does not respect Condition 2. 
The instances under condition 3 are 
$J_{\kappa_1,\xi}=\{ S(a,a), S(a,c), T(a,b), T(c,b), T(c,d),T(a,d)   \}$
and 
$J_{\kappa_2,\xi}=\{ S(a,c), T(c,d),T(a,d)   \}$.
For $J_{\kappa_1,\xi}$ we have function $f=\{ x/a, y/d, z/a \}$
such that $f(\beta) \subseteq J_{\kappa_1,\xi}$ and
both $S(a,a)$ and $T(a,d)$ are weakly inferred tuples with $\kappa_1$
and $\xi$ but $f(\alpha)=\{ R(a,d) \} \not \subseteq I$.
Thus $\kappa_1$ does not respect Condition 3. It is easy to see that
$\kappa_1$ respects this condition.
Finally, we have that $(I,J)\models \Sigma$, thus Condition 4 is 
satisfied for any inference strategy with $I$, $\Sigma$~and~$J$.
\end{example}

Intuitively the first rule from the definition states that all 
tuples in the target instance needs to be inferred from the source instance
and the mapping, this is taking care of the tuples not inferred present under
OWA-semantics. This also allows the same nulls to be matched to different 
constant as long as there exists an inference for each of these.
This takes care of the query anomalies present under CWA-semantics. 
The second condition makes sure that all possible source triggers are fired. 
With this
all tuples that can be inferred will be present in at least in one instance
in the semantics, this solves the anomaly presented in Example \ref{cgwa-eg}
for GCWA$^*$.
The third condition ensures that the assignment of nulls does not 
contradict with the inference strategy used, thus taking care of
the query anomaly from the introduction.
The last condition is needed in order to guarantee that the 
instances from the semantics are models for the \egds.
Need to mention here that by renouncing to Condition 3 from the previous 
definition we obtain the semantics defined in \cite{DBLP:conf/pods/GrahneMO15}
in the context of exchange recovery.

It can be observed that the inference based semantics follows all
the rules mentioned by Hernich \cite{DBLP:phd/de/Hernich2010} except
the semantics closure under logical equivalence.
That is, under inference-based semantics, 
there exists a source instance $I$ and 
two logical equivalent sets of
\sttgds\; $\Sigma_1$ and $\Sigma_2$ such that
$\sol{I}{\Sigma_1}{inf}\neq \sol{I}{\Sigma_2}{inf}$.
With the next example we argue that the closure under logical-equivalence
for closed-world data-exchange semantics is not always a desirable property.

\begin{example}\label{noClosure}
Consider source schema consisting of a unary relation 
$SalesPerson$ for the list of sales persons in a company.
A sales person $S1$ can book trades on behalf of 
the accounts assigned to him
and on behalf of the accounts assigned to a sales person $S2$ if
sales person $S1$ is set to cover
sales person $S2$. 
Consider target schema represented by a
binary relation
$Cover$ that specifies if a sales person covers another sales person.
In this case the mapping between the two schemata is represented 
by the following \sttgds:
\begin{equation}\label{nonLog}
\begin{split}
\hspace{0cm} \Sigma_1=\{& SalesPerson(x) \rightarrow Cover(x,x) \\
\hspace{0cm}& SalesPerson(x) \rightarrow \exists y\; Cover(x,y) \}.
\end{split}
\end{equation}
The first dependency specifies that each sales person covers his own accounts
and the second that each sales person covers one or more sales persons.
Clearly $\Sigma_1$ is logically equivalent to $\Sigma_2$, where:
\begin{equation}\label{nonLog1}
\begin{split}
\hspace{0cm} \Sigma_2=\{& SalesPerson(x) \rightarrow Cover(x,x) \}.
\end{split}
\end{equation}
Let source instance $I$ contain two sales persons `{\em john}` and 
`{\em adam}`. Consider target boolean query:
{\em Does `adam` covers only his accounts?}
It is easy to observe that under any 
closed world semantics $\Sigma_2$ and $I$ 
the query will return certain answer \true.
On the other hand, the same query under mapping $\Sigma_1$
is expected to return \false, as `{\em adam}` may cover 
other sales persons as well.
\end{example}

As we will see in Section \ref{query}, this new data-exchange 
semantics has a few desirable properties when it comes to
certain-answers queries. That is, 
for any source-to-target mapping $\Sigma$,
for any instance $I$ and for any \ucq\; query
$\query{q}$ we have $\cert{q}{I}{\Sigma}{inf}=\cert{q}{I}{\Sigma}{owa}$.
Similarly for any full source-to-target mapping $\Sigma$,
any source instance $I$ and for any \fo\; query $\query{q}$
we have $\cert{q}{I}{\Sigma}{inf}=\cert{q}{I}{\Sigma}{cwa}$.
In Section \ref{newsemantics} we will also show 
the language of annotated bidirectional dependencies
is a good language to represent this semantics.

\medskip
\subsection{Bidirectional constraints}
Arenas et al. in \cite{DBLP:conf/amw/ArenasD014}
considered another approach to the certain answer anomaly problem
by changing the language used to express the schema mapping.
For this, the authors proposed the language
of bidirectional constraints. 
Where a bidirectional constraints is a \fo\; sentence of the form:
$$\forall \vect{x}\;  
\alpha(\vect{x}) \leftrightarrow \beta(\vect{x});$$
\noindent
where $\alpha$ and $\beta$ are $\fo$ formulae 
over atoms from the source and 
target schema respectively with free variables $\vect{x}$. 
If the language of $\alpha$
is $\mathcal{L}_{S}$ and of $\beta$ is $\mathcal{L}_{T}$, 
then we are talking about a
$\langle \mathcal{L}_{S},\mathcal{L}_{T} \rangle$-dependency.
Thus if $\alpha$ is represented as a union of conjunctive queries and 
$\beta$ as a conjunction of atoms, the bidirectional constraint
is a $\langle \ucq,\cq \rangle$-dependency.
With this, given a source instance $I$ and
$\abdset$ a set of 
$\langle \mathcal{L}_{S},\mathcal{L}_{T} \rangle$-dependencies.
the bidirectional semantics is defined as:
\begin{equation}\label{sol-bid}
\begin{split}
\hspace{0cm}\sol{I}{\abdset}{\leftrightarrow} \defined \{ J\in \insta{T} \setdef 
I\cup J \models 
\abdset \}.
\end{split}
\end{equation}
This approach did indeed solve most of the anomalies 
related to the other semantics.
Unfortunately, this is achieved at a high
cost, as even the most common data-exchange problems
became non-tractable. For example, testing if the semantics is empty 
is an $\np$-hard problem~\cite{DBLP:conf/amw/ArenasD014}
even for a set of $\langle \cq,\cq \rangle$-dependencies. 
Beside this, the semantics lacks of properties desirable for any 
closed-world semantics. Consider 
$\abdset=\{ \forall x\; (R(x)\leftrightarrow \exists y\; S(x,y)\wedge T(x,y)) \}$,
source instance $I=\{ R(a) \}$ and target instance 
$J$ where $S^{J}=\{ (a,b) \}$
and $T^{J}=\{ (a,b), (a_1,b_1),(a_2,b_2),\ldots,(a_n,b_n) \}$
with $a_i\neq a$ or $b_i\neq b$, for all $i\leq n$.
Clearly we have  $J\in \sol{I}{\abdset}{\leftrightarrow}$
even if none of the tuples under relation $T$, except $T(a,b)$,
are inferred from source instance $I$ and $\abdset$. 
In order to fix this problem one may have to use a 
$\langle \cq,\fo \rangle$-dependency .

Another issue with bidirectional semantics is 
that there are simple unidirectional mappings for which
neither of the presented closed-world semantics are not expressible using 
bidirectional constraints without changing the target 
schema, as shown in the example below.

\begin{example}\label{nonbidirect}
Consider source schema with binary relations
$PFTE$, for projects and the full-time employees assigned on the project,
and $PT$, that contains the tasks associated with each project. 
Let target schema consist of a binary relation $PE$, for projects and employee
assigned to that project, and binary relation $TM$, for
employee and the task they manage.
Consider source instance $I$, with
$PFTE^I=\{ (hr,adam) \}$ and $PT^{I}=\{ (hr,comp) \}$,
stating that full-time employee `adam` works on the `hr` project and
the 'hr' project consists of one task `comp'.
Consider the following mapping $\Sigma$
stating that each full-time employee working on a project is also an
employee working on the project ($\xi_1$) and
that for each project task there exists an employee working on that project
that manages that task ($\xi_2$).
\begin{equation*}\label{notbidirect}
\begin{split}
 \hspace{-0.2cm} \xi_1 &: PFTE(pid,eid)  \rightarrow PE(pid,eid) \\
  \hspace{-0.2cm} \xi_2 &: PT(pid, tid)   \rightarrow\exists eid\;  PE(pid,eid),TM(eid,tid).   
\end{split}
\end{equation*}

It is easy to observe that for any set  $\abdset$ of 
bidirectional constraints there is no possibility to differentiate 
between the tuples from relation $PE$ as being inferred from
 $PFTE$ or $PT$ source relation. 
Thus, for
$J_1=\{ PE(hr,adam) \}$,
$J_2= J_1 \cup \{ TM(adam,comp) \}$ and
$J_3= J_1 \cup \{ PE(hr,sal), TM(sal,comp) \}$,
we have that 
either
$J_1 \in \sol{I}{\abdset}{\leftrightarrow}$ or 
$J_2 \notin \sol{I}{\abdset}{\leftrightarrow}$ or
$J_3 \notin \sol{I}{\abdset}{\leftrightarrow}$,
where ``sal'' is a consultant not a full-time employee.
On the other hand, we have that $J_1 \notin \sol{I}{\Sigma}{*}$ and
$J_2,J_3 \in \sol{I}{\Sigma}{*}$, for any semantics 
$* \in \{ {\sf cwa},{\sf gcwa^{*}},{\sf inf}  \}$.
\end{example}

\section{ABD-semantics}\label{newsemantics}
In this section we propose a new language
and its corresponding semantics 
for data-exchange. 
The new language is based on annotated bidirectional 
dependencies
and not only can properly express the 
inference-based semantics for any set of
\sttgds\; and safe target \egds,
but also can specify mappings not expressible 
in any of the previous semantics. 
Beside this,
the language has syntactical characteristics 
(density and cardinality) 
that clearly
delimit the complexity classes for different problems 
(solution-existence, solution-check and query evaluation).

First, let us 
introduce a few notions and notations.
Given an instance $I$,
a {\em tuple-labeling function}
$\ell$ is a  mapping that assigns 
a non-empty set of integers to each tuple from $I$.

\begin{definition}\label{abd-def}
Given a source schema $\schema{S}$ and a target schema 
$\schema{T}$, an {\em annotated bidirectional dependency}
{\em (\abd)} is a $\fo$ sentence of the form:
\begin{equation}\label{def-abd}
\begin{split}
 \forall \vect{x}\;  ( \exists \vect{y}\; 
\alpha(\vect{x},\vect{y}))  \leftrightarrow (\exists \vect{z}\; \beta(\vect{x},\vect{z})  ),  
\end{split}
\end{equation}
where $\vect{x}$, $\vect{y}$ and $\vect{z}$ are disjoint vectors
of variables;
$\alpha(\vect{x},\vect{y})$ is a conjunction of atoms
over $\schema{S}$;
$\beta(\vect{x},\vect{z})$ is a conjunction of atoms over $\schema{T}$ 
and each atom from $\beta$ is annotated with an integer.
An {\em annotated equality-generating dependency} {\em (\aegd)} is
a target \egd\; where each atom is annotated with an integer.
\end{definition}

For an \abd\; $\xi$ of the form (\ref{def-abd})
$\body{\xi}$ represents the conjunction 
$\alpha(\vect{x},\vect{y})$.
With $\xi^{\rightarrow}$ is denoted \tgd:
$  \forall \vect{x}\;  ( \exists \vect{y}\; 
\alpha(\vect{x},\vect{y}))  \rightarrow (\exists \vect{z}\; \beta'(\vect{x},\vect{z})  )$,
where $\beta'$ removes the annotations used in $\beta$.
For a set $\abdset$ of \abds\; and \aegds\; 
with $\Sigma^{\rightarrow}$ we denote the set
$\{ \xi^{\rightarrow}\setdef \xi \in \abdset \}$.

For a set $\abdset$ of \abds\; and \aegds\; 
with $\abdsetabd$ we denote the 
subset of $\abdset$ that contains all the \abds\; 
and with $\abdsetaegd$ the set that contains all the \aegd.
Thus we have $\abdset=\abdsetabd \cup \abdsetaegd$.
For a target 
relation symbol $R$ and
 $\xi \in \abdset$
the set $\annot(\xi,R)$ contains all
annotations used for atoms over $R$ in $\xi$.
This notation is extended to $\abdset$:
 $\annot(\abdset,R)\defined\cup_{\xi \in \abdset} \annot(\xi,R)$.

For simplicity the quantifiers will be omitted 
when representing an \abd\; or \aegd.
The conjunctions will be represented with commas.

\begin{example}\label{abd-eg1}
Consider the abstract example
with $\abdset=\{\xi_1, \xi_2, \xi_3 \}$, where
\begin{equation*}\label{eg-gcwa1}
\begin{split}
\xi_1:&  \hspace{1.5cm}R(x,y)  
\leftrightarrow  \annt{T}{1}(x,z),\annt{T}{1}(y,z),\annt{T}{2}(x,y); \\
\xi_2:& \hspace{0.3cm} S(x,x),R(x,x)  \leftrightarrow  
\annt{V}{1}(x); \\ 
\xi_3:& \hspace{0.34cm} \annt{T}{1}(x,y),\annt{V}{1}(x)  \rightarrow  x=y; 
\end{split}
\end{equation*}

Atoms in the head of the rules are superscripted 
with their annotation. For this set 
we have $\abdsetabd=\{ \xi_1, \xi_2 \}$
and $\abdsetaegd=\{ \xi_3 \}$.
Furthermore we have $\annot(\xi_1,T)=\{1,2 \}$, 
$\annot(\xi_1,V)=\emptyset$, $\annot(\abdset,V)=\{ 1 \}$.
\end{example}

\begin{definition}\label{annot_dens}{\em (Annotation Density)}
Given $\abdset$ and target relation $R$
the {\em annotation density} 
 for $R$ in $\abdset$ is 
the maximum number of occurrences of 
$\annt{R}{i}$ in $\abdsetabd$ for some $i$
and is denoted with  $\adr{\abdset}{R}$.
The {\em annotation density} for $\abdset$
is defined as:
\begin{equation*}\label{def-dens}
\begin{split}
\ad{\abdset}\defined max \{ \adr{\abdset}{R} \setdef\; R \in \schema{T} \}.
\end{split}
\end{equation*} 

\end{definition}

\begin{definition}\label{annot_card}{\em (Annotation Cardinality)}
Given $\abdset$ and a target relation $R$
the {\em annotation cardinality} 
for $R$ in $\abdset$ is 
defined as $\acr{\abdset}{R}\defined| \annot(\abdsetabd,R) |$.
The {\em annotation cardinality} for $\abdset$
is defined as: 
\begin{equation*}\label{def-car}
\begin{split}
\ac{\abdset}\defined max \{ \acr{\abdset}{R} \setdef\; R \in \schema{T} \}.
\end{split}
\end{equation*} 

\end{definition}

For $\abdset$ from Example \ref{abd-eg1},
$\adr{\abdset}{T}=2$,
because $T$ is annotated with integer~$1$ two times in $\abdsetabd$.
For $V$ we have $\adr{\abdset}{V}=1$, 
thus for the entire set $\ad{\abdset}=2$.
Similarly, $\acr{\abdset}{V}=1$ because only $\annt{V}{1}$ occurs in
$\abdsetabd$. On the other hand, $\acr{\abdset}{T}=2$ thus
$\ac{\abdset}=2$.

Similarly to the notations in  \cite{DBLP:conf/kr/CaliGK08},
an {\em annotated position} in $\abdset$ is a pair $(\annt{R}{k},i)$, 
where $1\leq i \leq \ar{R}$.
Given an \abd\;
$\xi:\; \forall \vect{x}\; (\exists \vect{y}\; 
\alpha(\vect{x},\vect{y}) \leftrightarrow \exists \vect{z}\;  \beta(\vect{x},\vect{z}))$,
the {\em affected positions} in $\xi$, 
denoted $\aff{\xi}$,
is the set of annotated positions 
where elements of vector $\vect{z}$ occurs in~$\xi$.  
This is extended to a set of \abds\; 
$\aff{\abdset}=\cup_{\xi\in \abdset} \aff{\xi}$.

\begin{definition}\label{safe-egd}{\em (Safe \aegd)}
An \aegd\; $\xi \in \abdset$
is said to be {\em safe} if every variable $y$ that occurs 
at least two times in the $\body{\xi}$ occurs only in positions
other than $\aff{\abdset}$.
The set $\abdset$ is said to be {\em safe} if all \aegds\; from $\abdset$
are safe.
\end{definition}

\noindent
Returning to Example \ref{abd-eg1}, 
$\aff{\abdset}=\{ (\annt{T}{1},2) \}$.
Clearly $\xi_3$ is a safe \aegd\; because variable $x$ occurs in positions
$(\annt{T}{1},1)$ and $(\annt{V}{1},1)$ neither being part of
 $\aff{\abdset}$.

In our semantics we will restrict the set of \aegds\; from the mappings
to be safe. This restriction is necessary in order to avoid unwanted
anomalies in the certain answers. To be more specific, 
consider the following
example.

\begin{example}\label{safeaegd}
Let $\Sigma=\{ \xi_1, \xi_2 \}$. The mapping copies employees from the 
source table $Emp_0$ into the target table $Emp$ by replacing the current
$ssn$ key with a new employee id ($eid$) key.
Where $\xi_1$ copies the employee data from source to target and
$\xi_2$ enforces the primary-key constraint. Note that $\xi_2$ is not safe.
 \begin{equation*}\label{safe_egds_eg}
\begin{split}
\xi_1:&  \hspace{0.3cm}Emp_0(ssn,name) \rightarrow 
\exists eid\; Emp(eid,name); \\
\xi_2:& \hspace{0.3cm} Emp(eid, name_1),Emp(eid, name_2) 
 \hspace{0cm} \rightarrow  name_1=name_2.  
\end{split}
\end{equation*}
\noindent 
Let $I$ be specified by
$Emp_0^{I}=\{ (s1, john), (s2, adam), (s3, adam) \}$ and
let  $J$ be specified by 
$Emp^{J}=\{ (a, john), (b, adam) \}$.
It can be easily verified that instance
$J\in \sol{I}{\Sigma}{owa} \cap \sol{I}{\Sigma}{cwa} \cap 
\sol{I}{\Sigma}{gcwa^*}$.
Thus, the following query expressible in $\fo$: 
$\query{q}$\defined {\em Are there exactly two employee named ``adam''?}
will return the counter-intuitive answer \false.
To avoid such behavior
we decided to not allow unsafe \aegds\; at the cost of reducing
the expressivity of
the mapping language.
\end{example}

In order to define the data-exchange semantics for this new language, 
let us first see what 
it means for a target instance over schema $\schema{T}$ 
to satisfy a set $\abdset$ of \abds\; and \aegds.
For this, let $\schema{T_{\diamond}}$ be the schema
 $\schema{T_{\diamond}}=\{ R_i\setdef i\in \annot(\xi,R)\mbox{ where }\xi\in \abdset\mbox{, } R \in \schema{T}  \}$. 
Intuitively 
$\schema{T_{\diamond}}$ contains a distinct relation name for
each pair of relation name and annotation used in $\abdset$.
Similarly, dependency set $\abdsetn$ over $\schema{T_{\diamond}}$
is computed from $\abdset$ by replacing
 each annotated atom $\annt{R}{i}(\vect{x})$ in $\abdset$ 
with atom $R_i(\vect{x})$.
Given $J\in \insta{T}$ and $\ell$ a tuple-labeling function,
with $J_{\ell}$ is denoted the instance over schema
$\schema{T}_{\diamond}$
such that $R_i(\vect{a})\in J_{\ell}$ iff $R(\vect{a})\in J$
and $i \in \ell(R(\vect{a}))$. Instance $J$ is said
to be {\em inferred} from $I$ and $\abdset$ with
tuple-labeling function $\ell$, 
denoted with $(I,J)\models_{\ell} \abdset$,
iff
for any tuple $R(\vect{a}) \in J$ and for
any $i\in \ell(R(\vect{a}))$
there exists $I'\subseteq I$
and $J'_{\ell} \subseteq J_{\ell}$
such that  $R_i(\vect{a}) \in J'_{\ell}$,  
 $(I',J'_{\ell}) \models \abdsetn$
in the model-theoretic sense
and for no instance 
$J''_{\ell} \subsetneq  J'_{\ell}$ it is that 
$(I',J''_{\ell}) \models \abdsetn$.

\begin{definition}\label{ABD-semantics}
The ABD-semantics for a source instance $I$ and a set $\abdset$
of \abds\; and \aegds\; is defined as:
$\sol{I}{\abdset}{abd}:= \{ J\setdef\; (I,J)\models_{\ell}\abdset\mbox{
for some } \ell \}.$
\end{definition}

Intuitively $\sol{I}{\abdset}{abd}$ contains all target instances
which tuples can be labeled in such a way that it satisfies all
the rules from $\abdset$ wrt. their annotation and each tuple 
is inferred from the source and the dependencies.

\begin{example}\label{abd-model-eg}
Consider source schema
containing 3 relations: 
one for full-time employees ($Emp$), 
one for consultant employees ($Cons$) and
 binary relation ($Proj$) that maps 
projects to their location.
The target schema consists of 
unary relation ($AllEmp$) for all employees
and ternary relation ($EmpP$)
that pairs employees with projects and location.
The following mapping specifies: that each full-time ($\xi_1$) 
and consultant employee ($\xi_2$)
works on at least one project in the location
specified by $Proj$ (``$\leftarrow$'' from $\xi_3$ and $\xi_4$); 
for each project there exists at least one 
full-time and at least one consultant employee that works on that project 
(``$\rightarrow$'' from $\xi_3$ and $\xi_4$)
and
all full-time employees are involved only in projects 
from the same location ($\xi_5$).
 \begin{equation*}\label{abd_model_eg_eq}
\begin{split}
\xi_1: & \hspace{0.2cm}Emp(eid) \leftrightarrow 
\annt{EmpP}{1}(eid,pid,lid), \annt{AllEmp}{1}(eid) \\
\xi_2: & \hspace{0.2cm}Cons(cid) \leftrightarrow 
\annt{EmpP}{2}(cid,pid,lid),\annt{AllEmp}{2}(cid) \\
\xi_3: & \hspace{0.2cm}Proj(pid,lid) \leftrightarrow EmpP^{1}(eid,pid,lid)\\
\xi_4: & \hspace{0.2cm}Proj(pid,lid) \leftrightarrow EmpP^{2}(cid,pid,lid)\\
\xi_5: & \hspace{0.2cm} EmpP^{1}(eid,pid_1,lid_1),EmpP^{1}(eid,pid_2,lid_2) \\ & \hspace{5.6cm} \rightarrow lid_1=lid_2
\end{split}
\end{equation*}
\noindent
Let instance $I$ be 
$Emp^{I}=\{ (e_1), (e_2) \}$, $Cons^{I}=\{ (c_1) \}$
and $Proj^{I}=\{ (p_1,ny), (p_2,hk) \}$.
Consider target instances $J_1$, $J_2$ and $J_3$ 
together with their 
tuple-labeling functions $\ell_1$, $\ell_2$ and $\ell_3$:

\small{ 
\noindent
\begin{minipage}[b]{0.5\linewidth}
\centering
\begin{tabular}{l|l} 
 \ \ \ \ \ \ \ \ \ \ \ \ \ \ $J_1$ & $\ell_1(t)$  \\
\hline 
$EmpP(e_1,p_1,ny)$ & \{ 1 \} \\ 
$EmpP(e_1,p_2,hk)$ & \{ 1 \} \\
$EmpP(c_1,p_2,hk)$ & \{ 2 \} \\
$AllEmp(e_1)$ & \{ 1 \} \\
$AllEmp(c_1)$ & \{ 2 \} \\
\multicolumn{2}{l}{} \\
\multicolumn{2}{l}{} \\
\end{tabular}
\end{minipage}
\begin{minipage}[b]{0.32\linewidth}
\centering
\begin{tabular}{l|l} 
 \ \ \ \ \ \ \ \ \ \ \ \ \ \ $J_2$ & $\ell_2(t)$  \\
\hline 
$EmpP(e_1,p_1,ny)$ & \{ 1,3 \} \\ 
$EmpP(e_2,p_2,hk)$ & \{ 1 \} \\
$EmpP(c_1,p_1,ny)$ & \{ 2 \} \\
$EmpP(c_1,p_2,hk)$ & \{ 2 \} \\
$AllEmp(e_1)$ & \{ 1 \} \\
$AllEmp(e_2)$ & \{ 1 \} \\
$AllEmp(c_1)$ & \{ 2 \} 
\end{tabular}
\end{minipage}
} 
\medskip
\small{ 
\begin{center}
\begin{minipage}[b]{0.45\linewidth}
\centering
\begin{tabular}{l|l} 
 \ \ \ \ \ \ \ \ \ \ \ \ \ \ $J_3$ & $\ell_3(t)$  \\
\hline 
$EmpP(e_1,p_1,ny)$ & \{ 1 \} \\ 
$EmpP(e_2,p_2,hk)$ & \{ 1 \} \\
$EmpP(c_1,p_1,ny)$ & \{ 2 \} \\
$EmpP(c_1,p_2,hk)$ & \{ 2 \} \\
$AllEmp(e_1)$ & \{ 1 \} \\
$AllEmp(e_2)$ & \{ 1 \} \\
$AllEmp(c_1)$ & \{ 2 \} \\
\multicolumn{2}{l}{} \\
\end{tabular}
\end{minipage}
\end{center}
}

Clearly $J_1 \notin \sol{I}{\abdset}{abd}$ 
because it does not satisfy \abd\; $\xi_1$,
offending tuple $Emp(e_2)$.
On the other hand,
even if 
$(I,J_{2,\ell_2}) \models \abdsetn$
we have that
$(I,J_2) \not \models_{\ell_2} \abdset$ 
because $3 \in \ell_2(EmpP(e_1,p_1,ny))$
and no \abd\; infers annotation 3. 
Finally,  for $J_3$ and $\ell_3$ we have
$(I,J_3) \models_{\ell_3} \abdset$.
\end{example}

The following theorem relates the inference-based 
semantics to ABD-semantics showing that the
language of annotated bidirectional dependencies 
is a good language to express the inference-based semantics.
This, together with the results from Section \ref{universal} and 
\ref{query}  will help to develop a framework to 
compute ``universal'' target tables used to 
evaluate general queries over the target schema
under this semantics.

\begin{theorem}\label{infer-ABD}
Let $\Sigma$ be a set of \sttgds\; and safe \egds, then 
there exists a set $\abdset$ of \abds\; and \aegds,
with $\ad{\abdset}=1$, such that
for any instance $I$ we have
$\sol{I}{\Sigma}{inf}=\sol{I}{\abdset}{abd}$.
\end{theorem}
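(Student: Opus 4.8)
The plan is to give an explicit syntactic translation from $\Sigma$ to $\abdset$ and then prove the two resulting set-theoretic inclusions between $\sol{I}{\Sigma}{inf}$ and $\sol{I}{\abdset}{abd}$ for an arbitrary source instance $I$. First I would construct $\abdset$ as follows. For each \sttgd\; $\xi: \alpha(\vect{x},\vect{y}) \rightarrow \exists \vect{z}\; \beta(\vect{x},\vect{z})$ in $\Sigma$, I introduce a fresh annotation label $i_\xi$ and replace $\xi$ by the \abd\; $(\exists \vect{y}\; \alpha(\vect{x},\vect{y})) \leftrightarrow (\exists \vect{z}\; \beta^{i_\xi}(\vect{x},\vect{z}))$, where $\beta^{i_\xi}$ annotates every atom of $\beta$ with the single label $i_\xi$. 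Each safe \egd\; of $\Sigma$ is turned into the corresponding safe \aegd\; by annotating each of its (target) atoms; the annotation chosen for each atom is any label attached to that relation symbol among the \abds, and safety of the \egd\; transfers to safety of the \aegd\; since the affected positions are preserved under this translation. Because every atom produced by a single dependency $\xi$ receives the \emph{same} label $i_\xi$, and distinct dependencies receive distinct labels, every $\annt{R}{i}$ occurs at most once across $\abdsetabd$, giving $\ad{\abdset}=1$ as required.

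The heart of the argument is matching an inference strategy $\kappa$ for $\Sigma$ with a tuple-labeling function $\ell$ witnessing $(I,J)\models_\ell \abdset$. The key observation is that for a density-1 set, the minimal-model condition in the definition of $(I',J'_\ell)\models\abdsetn$ forces $J'_\ell$ to be exactly the head image $f(\beta)$ of a single firing of $\xi$; that is, each labeled tuple must be justified by one application of exactly one dependency. This precisely mirrors Definition~\ref{inference2}, where a tuple is inferred via some $f\in\kappa(\xi)$ with $I\xrightarrow{\xi}_f J_f$. So given $J\in\sol{I}{\Sigma}{inf}$ with strategy $\kappa$, I define $\ell(R(\vect{a}))$ to collect the label $i_\xi$ for every $\xi$ and every $f\in\kappa(\xi)$ such that $R(\vect{a})\in f(\beta)$; conversely, given a witnessing $\ell$ for the ABD side, I read off $\kappa(\xi)$ as the set of head-firings witnessing the labels $i_\xi$. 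I would then verify the four conditions of Definition~\ref{inf-sem} against the inference condition of Definition~\ref{ABD-semantics} one by one: Condition~1 (every tuple inferred) corresponds to $\ell$ being total with non-empty sets; Condition~2 (all source triggers fired) corresponds to the ``$\leftarrow$'' direction of each \abd\; forcing the existence of a label for every satisfied body; Condition~3 (weak inferences not contradicting the strategy) corresponds to the minimality clause ruling out spurious labeled tuples whose body-match lies outside $I$; and Condition~4 corresponds to the \aegd\; satisfaction clause of $\models\abdsetn$.

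The main obstacle I anticipate is the correct handling of \emph{weakly inferred} tuples under Condition~3 in tandem with the minimality requirement, which is exactly the subtle point distinguishing the inference-based semantics from the recovery semantics of \cite{DBLP:conf/pods/GrahneMO15}. A single labeled target tuple may simultaneously be a strong inference of one body-match and a weak inference of another, and I must ensure that the ``no $f$ with $f(\beta)\in J_{\kappa,\xi}$ and $f(\alpha)\not\subseteq I$ containing a weakly inferred tuple'' clause is equivalent to the non-existence of a strictly smaller $\abdsetn$-model; this requires a careful bookkeeping showing that density-1 makes each firing atomically attributable to one dependency, so that the bidirectional constraint's minimality cannot be satisfied by any partial head image. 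Once this correspondence between weak inferences and proper subsets $J''_\ell\subsetneq J'_\ell$ is pinned down, the remaining conditions follow by routine unfolding of the definitions, and the two inclusions combine to give the claimed equality for every $I$.
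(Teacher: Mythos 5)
Your translation of $\Sigma$ into $\abdset$ differs from the paper's in a way that breaks the theorem: you annotate the \emph{entire} head $\beta$ of each \sttgd\; with one fresh label $i_\xi$ and emit a single \abd\; per dependency, whereas the paper first splits $\beta$ into its Gaifman partition (the connected components of head atoms linked by shared existential variables, with every null-free atom forming its own singleton block) and emits one \abd\; \emph{per block}, each with its own annotations. This is not cosmetic. Take $\Sigma=\{R(x,y)\rightarrow T(x)\wedge S(y)\}$ and $I$ with $R^I=\{(a,b),(c,d)\}$. Then $J=\{T(a),T(c),S(b),S(d)\}\in\sol{I}{\Sigma}{inf}$: every tuple is strongly inferred, so Condition~3 of Definition~\ref{inf-sem} is vacuous and the other conditions are immediate. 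Your construction yields $\abdset=\{R(x,y)\leftrightarrow \annt{T}{1}(x),\annt{S}{1}(y)\}$, whose ``$\leftarrow$'' direction applied to the labeled tuples $\annt{T}{1}(a)$ and $\annt{S}{1}(d)$ forces $R(a,d)\in I$; hence $\sol{I}{\abdset}{abd}=\emptyset$ --- this is literally the example the paper gives immediately after the theorem to show the \emph{converse} implication fails. The paper's construction instead produces the two \abds\; $R(x,y)\leftrightarrow\annt{T}{1}(x)$ and $R(x,y)\leftrightarrow\annt{S}{2}(y)$, for which $J$ is a solution. The conceptual point is that the right-to-left direction of a single whole-head \abd\; quantifies over every way of assembling a head image from labeled tuples produced by \emph{different} firings, which is strictly stronger than Conditions~1--4 of the inference semantics; only atoms genuinely tied together by a shared existential witness may be packaged into one \abd, and that is exactly what the Gaifman partition enforces. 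Your own ``key observation'' (that minimality forces $J'_\ell$ to be the head image of one firing) is where this bites, since with a monolithic head the minimal models can and do mix firings.

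Two smaller defects in the same direction. First, the density claim: if a relation symbol occurs twice in one head, e.g.\ $A(x,y)\rightarrow\exists z\,R(x,z),R(z,y)$, your scheme gives both atoms the label $i_\xi$, so $\annt{R}{i_\xi}$ occurs twice and $\ad{\abdset}=2$; the paper annotates each atom occurrence with a distinct integer precisely to keep the density at~$1$. Second, translating an \egd\; by choosing ``any'' label for each atom is insufficient; the paper instantiates the \aegd\; over the annotations actually used for that relation in the \abds, so the equality is enforced across all annotated copies. Your high-level plan of matching inference strategies $\kappa$ with labeling functions $\ell$ is the right skeleton and matches the paper's (admittedly terse) argument, but it cannot be carried out on top of the single-label construction.
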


One of the important result of this theorem is that 
the annotation density needed to represent the inference-based
semantics is 1 and, as it will be shown next, this ensures 
a sufficient condition for a ``universal'' target table
to exists.
Clearly the converse of this theorem does not hold, 
for this let 
$\abdset= \{ R(x,y)\leftrightarrow T^1(x),S^1(y) \}$
and consider source instance $R^I=\{ (a,b),(c,d) \}$.
We have $\sol{I}{\abdset}{abd}=\emptyset$, 
on the other hand for any $\Sigma$ 
$\sol{I}{\Sigma}{inf}\neq \emptyset$.
The following example gives us a hint on 
how the set $\Sigma$ from the 
theorem is transformed into~$\abdset$.

\noindent
\begin{example}\label{rewrite}
Consider schema for projects ($P$)
project tasks ($PT$) and task employee ($TE$). To this we add
target unary relation $PR$ for the list of projects on the target schema. 
Consider the set $\Sigma$ containing a single source-to-target dependency:
$P(p,e)\rightarrow \exists t\; PT(p,t),TE(t,e),PR(p)$.
For the set $\abdset=\{ \xi_1,\xi_2 \}$, where
\begin{equation*}\label{eq-rewrite}
\begin{split}
\xi_1:\; P(p,e)& \leftrightarrow \annt{PT}{1}(p,t),\annt{TE}{1}(t,e); \\
\xi_1:\;\hspace{0cm} P(p,e) & \leftrightarrow \annt{PR}{1}(p);
\end{split}
\end{equation*}
\noindent 
we have $\sol{I}{\Sigma}{inf}=\sol{I}{\abdset}{abd}$ for any $I$.
\end{example}

\bigskip
\subsection{ABD Solutions}\label{section-abdsolution}

We will now investigate the membership 
and the existence problems under ABD-semantics.
The complexity of these problems will give us an intuition
on under what circumstances one will be able to compute a 
finite representation of the semantics on the target database.
We found that the complexity for the 
existence problem is directly related to the annotation density
and the membership problem 
is connected to the annotation cardinality.
The  {\em ABD-solution existence} problem 
asks if there exists at least one instance part of the semantics
for the given input instance and mapping:
\begin{center}
\framebox[1\width]{
\begin{tabular}{ll} 
{\sc Problem} &   {\sc Sol-Existence}$_{\sf ABD}$($\abdset$)\\
{\sc Input}: &   $I \in \insta{S}$. \\
{\sc Question}: &   Is $\sol{I}{\abdset}{abd}\neq \emptyset$? \\
\end{tabular}
}
\end{center}

Similarly, the {\em ABD-solution checking} problem
verifies if a target instance is part of
the semantics under the given mapping and input source instance:
\begin{center}
\framebox[1\width]{
\begin{tabular}{ll} 
{\sc Problem} &   {\sc Sol-Check}$_{\sf ABD}$($\abdset$)\\
{\sc Input}: &   $I\in \insta{S}$ and $J\in \insta{T}$. \\
{\sc Question}: &   Is $J \in \sol{I}{\abdset}{abd}$? 
\end{tabular}
}
\end{center}

The following dichotomy result directly relates the 
solution-existence problem to annotation density.

\begin{theorem}\label{solexistence-Theo}
Let $\abdset$ be a set of \abds\; and safe \aegds, then
 {\sc Sol-Existence}$_{\sf ABD}(\abdset)$ problem:
\begin{itemize}
\item can be solved in polynomial time if $\ad{\abdset}=1$,
\item is in $\np$ if $\ad{\abdset}>1$ and there exists $\abdset$
with $\ad{\abdset}=2$ such that the problem is $\np$-hard.
\end{itemize}
\end{theorem}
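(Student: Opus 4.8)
The plan is to treat the three claims separately, using the density parameter $\ad{\abdset}$ to control how much nondeterminism is present when justifying the tuples of a candidate solution. Throughout, the schema and $\abdset$ are fixed, so each body $\alpha$, each head $\beta$, and the annotation set all have bounded size and the only input is $I$. For the $\np$ upper bound (covering both cases) I would first establish a small-model property: if $\sol{I}{\abdset}{abd}\neq\emptyset$, then there is a witnessing instance $J$ of size polynomial in $|I|$. The argument restricts attention to tuple-minimal solutions, in which every tuple is justified, via the backward direction of some \abd, by a match of $\alpha$ into $I$. Since the number of such matches is bounded by $|\dom{I}|^{|\alpha|}$, which is polynomial because $\abdset$ is fixed, and each match contributes at most $|\beta|$ target tuples whose existential witnesses may be drawn from $\dom{I}$ together with boundedly many fresh constants per match, $|J|$ is polynomial. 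The algorithm then guesses such a $J$ and a tuple-labeling $\ell$ (each tuple receiving a subset of the finite annotation set) and verifies $(I,J)\models_{\ell}\abdset$ in polynomial time: for each labeled tuple $R(\vect{a})$ and $i\in\ell(R(\vect{a}))$ it searches for a minimal sub-instance $J'_{\ell}$ with $R_i(\vect{a})\in J'_{\ell}$ and $(I',J'_{\ell})\models\abdsetn$; as $\abdset$ is fixed these witnesses have bounded size, so the search and the minimality test (by attempted deletions) are polynomial, and the safe \aegds\; are checked directly.

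For the polynomial-time algorithm when $\ad{\abdset}=1$, the key observation is that density one makes justification deterministic: each annotated atom $\annt{R}{i}$ occurs only once in $\abdsetabd$, so every generated target tuple has a unique producing atom and a unique minimal witness, leaving no merging choice. I would build the canonical candidate by a forward chase with $\abdsetright$ starting from $I$, producing the \emph{universal representative} of Section~\ref{universal}, and then apply the safe \aegds. Safety guarantees that \aegd\; bodies join only on positions outside $\aff{\abdset}$, so the sole way the chase can fail is by being forced to equate two distinct constants. I would then argue that $\sol{I}{\abdset}{abd}\neq\emptyset$ holds exactly when no such failure occurs, using that the forward direction is satisfied by construction and, under density one, the backward direction is automatically satisfiable. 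Both the chase and the failure test run in polynomial time.

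For the $\np$-hardness at density two, I would fix a single $\abdset$ built around a gadget of the form $\cdots\leftrightarrow\annt{T}{1}(x,z),\annt{T}{1}(y,z),\cdots$, in which the two occurrences of $\annt{T}{1}$ realize density two and share the existential witness $z$. The shared witness turns the choice of how to ground $z$ across different triggers into a genuine combinatorial choice, while the backward direction of the \abd\; forbids precisely those groundings that would create an unjustified right-hand-side pattern. I would reduce an $\np$-complete constraint problem, such as graph colourability or a suitable variant of $3$-SAT, to {\sc Sol-Existence}$_{\sf ABD}(\abdset)$ so that satisfying assignments correspond exactly to the consistent witness groundings, and hence to the members of $\sol{I}{\abdset}{abd}$.

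The main obstacle I expect is the verification step inside the $\np$ algorithm, specifically the \emph{minimality} requirement in $(I,J)\models_{\ell}\abdset$: one must show that for fixed $\abdset$ the minimal witnesses stay bounded, so the check is polynomial, and for density one one must establish completeness, namely that failure of the deterministic chase is necessary and not merely sufficient for emptiness of the semantics. Engineering the hardness gadget so that the constraint problem is encoded faithfully, with no spurious solution admitted through an alternative labeling $\ell$, is the other delicate point.
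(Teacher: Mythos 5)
Your $\np$ upper bound (guess $J$ and $\ell$, verify $(I,J)\models_{\ell}\abdset$ in polynomial time, with an explicit small-model argument the paper leaves implicit) and your hardness plan (a gadget in which two occurrences of the same annotated atom share an existential witness, reduced from a colouring-type problem) both match the paper's route; the paper instantiates the gadget concretely via graph 3-colourability, using the observation that a 3-colourable but not 2-colourable graph has an edge between every pair of distinct colours so that the forward direction of the density-2 rule can be satisfied.

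The genuine gap is in your polynomial-time algorithm for $\ad{\abdset}=1$. You claim that ``the sole way the chase can fail is by being forced to equate two distinct constants'' and that ``under density one, the backward direction is automatically satisfiable.'' Both claims are false, and the paper itself supplies the counterexample (right after Theorem~\ref{infer-ABD}): take $\abdset=\{ R(x,y)\leftrightarrow \annt{T}{1}(x),\annt{S}{1}(y) \}$ and $R^I=\{ (a,b),(c,d) \}$. Here $\ad{\abdset}=1$, there are no \aegds, and your forward chase succeeds, producing $T(a),S(b),T(c),S(d)$; yet $\sol{I}{\abdset}{abd}=\emptyset$, because the backward direction demands that the pattern $T(a),S(d)$ be justified by $R(a,d)\in I$, which fails, and since only constants are involved no choice of valuation can repair it. Density one makes each individual trigger's justification deterministic, but it does not prevent head patterns assembled \emph{across different triggers} from matching the right-hand side of an \abd\ without a corresponding source match. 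This is exactly why the paper's annotated chase has a third, ``$\leftarrow$'' phase: for every such cross-trigger match it either records a disjunction of disequalities over nulls into the global condition $\varphi^*$ (harmless for existence, since $\varphi^*$ is always satisfiable by sending nulls to fresh distinct constants) or fails outright when the offending match involves only constants. Your algorithm omits this phase and would wrongly answer ``non-empty'' on the instance above. To repair the argument you should route the tractable case through Theorem~\ref{abdsemanticsTheo} as the paper does: run the full three-step annotated chase, and answer ``non-empty'' iff it does not fail, using satisfiability of $\varphi^*$ for soundness.
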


\noindent
Note that the complexity 
of solution existence problem is not
influenced by the annotation cardinality.  That is
the tractable result for annotation density equal to~1
is maintained for any annotation cardinality. Also, 
as shown in the reduction, the problem is $\np$-hard for 
annotation density 2 even with annotation cardinality~1.

Similarly to the existence
problem, we have a dichotomy result
for the solution check problem.
In this case the complexity delimiting factor
is the cardinality.

\begin{theorem}\label{solcheck-Theo}
Let $\abdset$ be a set of \abds\; and safe \aegds, then
the {\sc Sol-Check}$_{\sf ABD}(\abdset)$ problem:
\begin{itemize}
\item can be solved in polynomial time if $\ac{\abdset}=1$, 
\item is in $\np$ if  $\ac{\abdset}>1$
and there exists a $\abdset$
with $\ac{\abdset}=2$  
such that the problem is $\np$-hard.
\end{itemize}
\end{theorem}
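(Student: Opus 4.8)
The plan is to isolate a common polynomial-time \emph{verification engine} and then let the annotation cardinality control how much nondeterminism sits on top of it. \textbf{Verification engine.} First I would show that, given $I$, $J$ and a \emph{fixed} tuple-labeling $\ell$, one can decide $(I,J)\models_{\ell}\abdset$ in polynomial time. By Definition~\ref{ABD-semantics} this reduces to checking, independently for every tuple $R(\vect{a})\in J$ and every $i\in\ell(R(\vect{a}))$, that the annotated fact $R_i(\vect{a})$ lies in some $\subsetneq$-minimal model $J'_{\ell}\subseteq J_{\ell}$ of $\abdsetn$ relative to some $I'\subseteq I$. Since $\abdset$ is fixed, for each such fact there are only polynomially many candidate body matches of an \abd\ whose forward direction can produce $R_i(\vect{a})$; for each candidate I would take $I'$ to be that body image, close it under the (mutually recursive) forward and backward directions of $\abdsetn$, and observe that this closure stays inside the fixed finite universe $\dom{I}\cup\dom{J}$, hence reaches a fixpoint in polynomially many monotone steps. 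The delicate point here, and one of the two main obstacles, is the minimality requirement: I must argue that $R_i(\vect{a})$ belongs to \emph{some} minimal model exactly when it is \emph{forced} by a body match (i.e.\ cannot be deleted without violating a forward dependency), and that this non-redundancy test is itself polynomial over the finite universe.

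\textbf{Upper bounds.} For $\ac{\abdset}=1$ every target relation carries at most one label in $\abdsetabd$, so the labeling is not a choice: each $R(\vect{a})\in J$ must receive the singleton $\{i_R\}$ with $i_R$ the unique label of $R$, and if $R$ admits no label then $J\notin\sol{I}{\abdset}{abd}$ outright. Running the engine on this single candidate $\ell$ therefore decides the problem in polynomial time. For $\ac{\abdset}>1$ the number of labels available to any relation is still bounded by the constant $\ac{\abdset}$, so a tuple-labeling $\ell$ assigns to each of the polynomially many tuples of $J$ one of a constant number of nonempty label sets and thus has polynomial size; I would guess $\ell$ nondeterministically and invoke the polynomial engine, placing {\sc Sol-Check}$_{\sf ABD}(\abdset)$ in $\np$.

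\textbf{$\np$-hardness for some $\abdset$ with $\ac{\abdset}=2$.} I would fix a particular $\abdset$ and reduce an NP-complete problem (for instance $3$-SAT, or hypergraph $2$-colourability) to its {\sc Sol-Check} problem. The leverage is that with two labels each tuple of $J$ must commit to $\{1\}$, $\{2\}$ or $\{1,2\}$, which I use to encode a binary decision per variable. The crucial coupling comes from justification itself: labeling a tuple $R(\vect{a})$ with $i$ forces, through the unique forward production of $R_i(\vect{a})$, the presence in $J_{\ell}$ of prescribed companion facts $S_j(\vect{b})$, i.e.\ it forces $S(\vect{b})$ to carry label $j$. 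By designing the \abds\ so that these implications between label choices mirror the clauses of the encoded instance (and using a safe \aegd\ to rule out the degenerate set $\{1,2\}$), a global labeling under which every tuple is simultaneously justified exists iff the instance is satisfiable. Making the gadgets so that each designated tuple is justifiable under exactly the intended labels while avoiding spurious backward triggers that would accidentally justify or forbid other tuples is, together with the minimality analysis of the engine, the technical heart of the argument.
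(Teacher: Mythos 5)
Your upper-bound arguments match the paper's: for $\ac{\abdset}=1$ the labeling is forced (the paper phrases this step as model-checking a fixed set of \fo\ formulae on $(I,J)$, which is the same observation), and for $\ac{\abdset}>1$ the paper likewise guesses $\ell$ over the constant-size label alphabet determined by $\abdset$ and verifies $(I,J)\models_{\ell}\abdset$ in polynomial time. Your ``verification engine'' is in fact more explicit than anything the paper writes down --- the paper simply asserts polynomial-time verifiability of $(I,J)\models_{\ell}\abdset$ --- so no complaint there.

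The gap is in the hardness direction: you describe the properties a gadget should have but never exhibit one, and for an $\np$-hardness claim the gadget \emph{is} the proof. You have correctly identified the mechanism (a binary label choice per tuple, coupled through the justification requirement), and the paper realizes exactly this with a reduction from graph 3-colorability using the three \abds\ $D(z,v) \leftrightarrow \annt{B}{1}(x,z),\annt{C}{1}(y,v),\annt{E}{1}(x,y)$, $V(x,v)\leftrightarrow \annt{B}{2}(x,v)$ and $V(x,v)\leftrightarrow \annt{C}{2}(x,v)$: the source lists all (vertex, colour) pairs in $V$ and all pairs of distinct colours in $D$, the target $J$ copies $V^{I}$ into both $B$ and $C$ and puts the edge set into $E$, and a labeling $\ell$ with $(I,J)\models_{\ell}\abdset$ exists iff the $B$-tuples carrying label $1$ encode a proper 3-colouring. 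Two details of your plan that the concrete construction shows to be unnecessary or risky: (i) no \aegd\ is needed to ``rule out the degenerate set $\{1,2\}$'' --- every tuple of $B^{J}$ is already justified under label $2$ by the second \abd, so label $1$ is a free extra bit and nothing breaks when both labels are present; (ii) the gadget must achieve $\ac{\abdset}=2$ while keeping $\ad{\abdset}=1$ (each annotated symbol occurring once), since that is precisely what certifies that cardinality rather than density is the source of hardness here --- a clause gadget built carelessly from 3-SAT could easily inflate the density and end up proving a weaker or different statement (indeed the paper's density-based hardness for solution existence deliberately goes the other way, with $\ad{\abdset}=2$ and $\ac{\abdset}=1$). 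Until you commit to a specific $\abdset$, source $I$ and target $J$ and prove both directions of the equivalence, the hardness half remains a plan rather than a proof.
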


The proof attached in the Appendix \ref{proofs} 
clarifies 
that the complexity of the solution-existence problem depends only
on the annotation cardinality. The $\np$-hardness result is obtained
even with annotation density~1.

\section{Universal Representatives}\label{universal}
As mentioned in the introduction,
 data exchange transforms a database
existing under a source schema into another database
under the target schema.
This means that for a given semantics 
it would be preferable
 to be able to
materialize one or more table representations 
on a target. The materialized table(s)
could later be used to obtain
answers for different queries over the target database.
In \cite{DBLP:journals/tcs/FaginKMP05} 
it was shown that for the OWA-semantics expressed 
as a set of \sttgds\; and target \egds,
there exists a {\em universal solution}
that can be 
represented as a na\"{i}ve table
under the target database. 
This universal solution 
can be used to obtain the certain answers to any
$\ucq$ query. 
Need to mention that Deutsch et al. 
in \cite{DBLP:conf/pods/DeutschNR08}
materialize a set of na\"{i}ve tables 
and Gr\"ahne and O. in \cite{Grahne:2011:CWC:1966357.1966360}
use conditional tables to be able to 
retrieve certain answers for a larger range of 
queries under their semantics.
In this section we will introduce a new type of table
capable of representing, under some restrictions, 
all solutions
for ABD-semantics.
Thus this new table can be used to obtain the 
certain answers for any $\fo$ query.

\begin{definition}\label{naiveTabledef}
Let $\nulls$ be partitioned in two countable infinite sets 
$\nullso$ and $\nullsc$.
A {\em semi-na\"ive table} is a na\"ive table $T$
for which each null is identified as being either from 
$\nullso$ or $\nullsc$. 
The semi-na\"ive table  $T$
 has the following interpretation:
\begin{equation*}
\begin{split}
 \repe{T}& \defined  \big \{J=v(\bigcup_{i\letbe 1}^{n} v_i(T)) \setdef  \mbox{ $n$ an integer, }
 \\
& v \mbox{ valuation over }\nullsc, v_i \mbox{ valuation over }\nullso   \big\}
\end{split}
\end{equation*}
\end{definition}

The nulls from $\nullso$ are called {\em open }
and denoted $\nullo$ (possibly subscripted).
The ones from $\nullsc$ are called {\em closed}
and denoted $\nullc$ (possibly subscripted).

\begin{example}\label{seminaiveExample}
Let $T$ be the semi-na\"ive table with
$R^{T}=\{ (a,\nullo_1,\nullc_1,\nullc_2) \}$.
We have $I_1,I_2,I_3 \in \repe{T}$,
where 
$R^{I_1}=\{ (a,a,b,c) \}$,
$R^{I_2}=\{ (a,a,b,c),(a,b,b,c) \}$
and 
$R^{I_3}=\{ (a,a,b,a),(a,b,b,a), (a,c,b,a) \}$.
On the other hand, for
$R^{I_4}=\{ (a,a,b,c),(a,a,b,d) \}$,
we have that $I_4 \notin \repe{T}$,
because closed null $\nullc_2$ was valued to both $c$ and $d$.
\end{example}

To a semi-na\"ive table $T$ 
we may add a global condition $\varphi^*$,
denoted $(T,\varphi^*)$,
as a conjunction of the form 
$\delta_1 \wedge \delta_2 \wedge \ldots \wedge \delta_n$
and each conjunct $\delta_i$, $1 \leq i \leq n$,
is a disjunction of unequalities over the elements from 
$\dom{T}$. 
Given $v$ a valuation over $\nullsc$ and
$v_1,v_2,\ldots,v_n$ valuations over $\nullso$,
for some integer $n$,
we say that 
$(v,\{ v_1, v_2, \ldots, v_n \})$ satisfies 
 $x \neq y$,
denoted  $(v,\{ v_1, v_2, \ldots, v_n \}) \models (x \neq y)$,
iff:
\begin{itemize}
\item $v(x) \neq a$, when $x\in \nullsc$ and $y=a \in \cons$;
\item $v_i(x) \neq a$, for all $i \leq n$, when  $x\in \nullso$ and $y=a \in \cons$;
\item $v(x) \neq v_i(y)$, for all $i \leq n$, when 
$x\in \nullsc$ and $y \in \nullso$; 
\item $v(x) \neq v(y)$, when 
$x,y \in \nullsc$; and
\item $v_i(x) \neq v_j(y)$, for all $i,j \leq n$, when 
$x,y \in \nullso$.
\end{itemize}

The previous notion is naturally extended to a disjunction
of unequalities and to the conjunctive formula $\varphi^*$ where
each conjunct represents a disjunction of unequalities.
This is denoted $(v,\{ v_1, v_2, \ldots, v_n \}) \models \varphi^*$.
With this we can define the interpretation of $(T,\varphi^*)$~as:
\begin{equation*}
\begin{split}
 \repe{T}& \defined  \big \{J=v(\bigcup_{i\letbe 1}^{n} v_i(T)) \setdef  \mbox{ $n$ an integer, }
 \\
& v \mbox{ valuation over }\nullsc, v_i \mbox{ valuation over }\nullso  \\
& \hspace{3cm} \mbox{ and } (v,\{v_1,\ldots,v_n \}) \models \varphi^*  \big\}
\end{split}
\end{equation*}

\begin{example}\label{seminaiveGlobaleg}
Consider $(T,\varphi^*)$, where $T$ and the instances 
are the same as in Example 
\ref{seminaiveExample} and global condition
$\varphi^* \defined (\nullo_1 \neq a \vee \nullc_2 \neq \nullo_1)$.
It can be verified that we have $I_1, I_2 \in \repe{T,\varphi^*}$  
and $I_3 \notin \repe{T,\varphi^*}$ because the tuple
$R(a,a,b,a)$ in $I_3$ was obtained from valuations 
$v=\{ \nullc_1/b, \nullc_2/a \}$ and $v_1=\{ \nullo_1/a \}$
and $(v,\{ v_1 \}) \not \models \varphi^*$.
\end{example}

The following result shows that for a fixed  $\abdset$ with 
$\ad{\abdset}=1$ and an input source instance, we may compute 
in polynomial time 
a representative for its ABD-semantics.

\begin{theorem}\label{abdsemanticsTheo}
Let
$\abdset$ be a set of \abds\; and safe \aegds\; with $\ad{\abdset}=1$.
Then either 
there exists $(T,\varphi^*)$, computable in polynomial time in the 
size of $I$, such that $\sol{I}{\abdset}{abd}=\repe{T,\varphi^*}$,
or $\sol{I}{\abdset}{abd}=\emptyset$.
\end{theorem}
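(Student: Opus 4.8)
### The Plan

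\textbf{The plan} is to build $(T,\varphi^*)$ by a chase-like construction driven by the bidirectional dependencies, treating the annotation density-$1$ restriction as the structural guarantee that each target relation symbol is produced by a single annotated occurrence on the right-hand side of an \abd, which is precisely what lets us assign nulls \emph{locally} and represent every solution by a single semi-na\"ive table rather than a set of them. First I would process the \abds\; in $\abdsetabd$: for each $\xi$ and each trigger $f$ with $f(\body{\xi})\subseteq I$, I generate the head atoms $f(\beta)$, introducing fresh closed nulls $\nullc$ for the frontier-join variables in $\vect{x}$ that must be shared across the inferred atoms of a single trigger, and fresh open nulls $\nullo$ for the existential variables $\vect{z}$ whose values are unconstrained and may range freely across valuations. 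The key point, licensed by $\ad{\abdset}=1$, is that no target symbol $R$ receives two distinct annotated atoms with the same label inside a single \abd, so there is no ambiguity about which existential null decorates which position, and hence no need for the branching that would force a \emph{set} of tables.

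\textbf{Next} I would encode the ``$\rightarrow$'' direction of each \abd\; and the minimality/inference requirement from Definition~\ref{ABD-semantics} (the condition $(I',J'_\ell)\models\abdsetn$ with no strictly smaller $J''_\ell$) by ensuring that only head atoms actually triggered by some source fact appear in $T$; any tuple not so inferred is simply never generated, which matches the semantics' insistence that every tuple be inferred. I then handle the safe \aegds\; in $\abdsetaegd$: each safe \aegd\; either forces an equality between two nulls (which I resolve by unification in $T$, possibly failing and returning $\sol{I}{\abdset}{abd}=\emptyset$) or, when it equates a null with a constant across open-null positions, contributes a disjunction of unequalities recorded into the global condition $\varphi^*$. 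Safety is exactly what guarantees these \aegds\; never join on an affected (existentially filled) position with a repeated variable, so the equalities they generate are well-defined on the closed part of $T$ and the residual constraints are expressible as unequalities in $\varphi^*$; this is the step where I would invoke the restriction from Definition~\ref{safe-egd} most heavily.

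\textbf{The main correctness obligation} is the two-way inclusion $\sol{I}{\abdset}{abd}=\repe{T,\varphi^*}$. For $\repe{T,\varphi^*}\subseteq\sol{I}{\abdset}{abd}$ I would take an arbitrary valuation pair $(v,\{v_1,\dots,v_n\})\models\varphi^*$, read off the induced $J=v(\bigcup_i v_i(T))$, and construct the witnessing tuple-labeling $\ell$ directly from the annotation carried by each generated head atom, then verify all clauses of $(I,J)\models_\ell\abdset$ — the freshness of open nulls supplying the multiplicity that closed-world semantics otherwise suppresses, and $\varphi^*$ exactly ruling out the valuations that violate an \aegd. For the reverse inclusion I would show any $J\in\sol{I}{\abdset}{abd}$, with its labeling $\ell$, arises as such a valuation: each $\ell$-labeled tuple is inferred by some trigger, hence corresponds to a generated pattern in $T$, and the open nulls absorb the differing constant choices across tuples sharing a trigger-pattern.

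\textbf{I expect the hardest step} to be this reverse inclusion, specifically proving that \emph{every} solution collapses onto \emph{one} template table. The subtlety is that different tuples of $J$ may instantiate the same \abd\; trigger with the same frontier values but different existential witnesses; I must argue these all factor through repeated open-null valuations $v_i$ of a \emph{single} row of $T$, which is precisely where $\ad{\abdset}=1$ is indispensable — higher density would let one trigger fan out into correlated existential choices that no single semi-na\"ive row can capture, forcing a genuine set of tables. The polynomial-time bound then follows because the number of triggers is polynomial in $|I|$ for fixed $\abdset$, each producing a bounded number of rows and $\varphi^*$-conjuncts, and equality resolution from the \aegds\; is a polynomial unification closure.
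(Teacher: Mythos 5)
Your overall strategy (a chase-style construction of a semi-na\"ive table plus a global condition of unequalities) matches the paper's, which builds $(T,\varphi^*)$ via a three-step annotated chase. However, there is a genuine gap: you only ever enforce the left-to-right direction of the \abds\; (every trigger fires, every tuple is inferred) and you attribute the entire global condition $\varphi^*$ to the \aegds. You never enforce the right-to-left direction of the \abds, i.e.\ that every instantiation of an annotated head pattern $\beta$ occurring in $J$ must be matched by a corresponding body pattern in $I$. This is precisely what the paper's third chase step (the ``$\leftarrow$'' chase) does: for every way of assembling a $\beta$-pattern from tuples of $T$ produced by \emph{different} triggers, if the induced source atoms have no homomorphism into $I$, it adds to $\varphi^*$ the negation of the equalities that would assemble that pattern. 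Without these conjuncts, $\repe{T,\varphi^*}$ strictly contains $\sol{I}{\abdset}{abd}$: a valuation that coincidentally identifies open nulls coming from distinct triggers can create a spurious $\beta$-match that violates the backward implication. Worse, your construction cannot detect emptiness caused by the backward direction. For $\abdset=\{R(x,y)\leftrightarrow \annt{T}{1}(x),\annt{S}{1}(y)\}$ and $R^I=\{(a,b),(c,d)\}$ (the paper's own example), your procedure produces $T=\{T(a),S(b),T(c),S(d)\}$ with a trivial $\varphi^*$ and reports a nonempty representation, whereas $\sol{I}{\abdset}{abd}=\emptyset$ because $T(a),S(d)$ match $\beta$ but $R(a,d)\notin I$. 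The failure mode ``some backward trigger over constants yields a contradictory unequality'' is an essential branch of the theorem's ``or $\sol{I}{\abdset}{abd}=\emptyset$'' clause and is absent from your argument.

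A second, smaller confusion concerns the two kinds of nulls. Frontier variables $\vect{x}$ are instantiated to constants by the trigger homomorphism, so they receive no nulls at all; introducing fresh closed nulls there would wrongly let those positions vary. In the paper, \emph{all} existential variables initially become open nulls in the forward step, and closed nulls arise only in the \aegd\; step, when an \aegd\; equates two open nulls (or an open and a closed null): the merged null must then take a single value across all copies $v_i(T)$, which is exactly the semantics of $\nullsc$. The \aegds\; thus act by unification into constants or closed nulls (or by failing on two distinct constants); they do not contribute to $\varphi^*$. Your proof swaps these roles, and the correctness argument for both inclusions of $\sol{I}{\abdset}{abd}=\repe{T,\varphi^*}$ would not go through as written.
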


The pair $(T,\varphi^*)$ from the previous theorem, if it exists, 
is called {\em universal representative}
for $I$ and $\abdset$.
In Theorem \ref{abdsemanticsTheo} the universal representative
is computed using a 3-step chase process:
a) ``$\rightarrow$'', similar with the chase from
 \cite{DBLP:journals/tcs/FaginKMP05}
with the difference that the \abds\; will create new  nulls from $\nullso$;
b) ``$=$'' when an \aegd\; equates two open nulls
 or an open null with a closed null 
in the result, they both will be replaced with a new closed null;
c) ``$\leftarrow$'', the chase in the other direction will be used to 
construct the global condition $\varphi^*$ or it fails in case there 
exists a constant in two positions supposed to be distinct.
For the complete annotated-chase algorithm please refer to Appendix \ref{abdchase}.

\begin{example}\label{chaseexample}
Consider $\abdset=\{ \xi_1, \xi_2, \xi_3 \}$, where
\begin{equation*}
\begin{split}
\xi_1 :\; &  \hspace{0.5cm} S(x,y) \leftrightarrow \annt{K}{1}(x,z),\annt{V}{1}(z,y);  \\
\xi_2:\; & \hspace{0.8cm} R(x) \leftrightarrow \annt{U}{1}(x,y); \\
\xi_3: \; &  \hspace{0.5cm} \annt{U}{1}(x,y),\annt{K}{1}(x,z) \rightarrow y=z.
\end{split}
\end{equation*}
Let instance $I$ be $S^I=\{ (a,b),(c,d) \}$ and $R^I=\{ (a) \}$.
In this case a universal representative 
for $I$ and $\abdset$
is the pair $(T,\varphi^*)$,
where $K^{T}=\{ (a,\nullc_1), (c,\nullo_1) \}$, 
$V^{T}=\{ (\nullc_1,b), (\nullo_1,d) \}$, 
$U^{T}=\{ (a,\nullc_1) \}$, $\varphi^* \defined (\nullc_1 \neq \nullo_1)$.
The open null $\nullo_1$ was obtained by simply chasing $\xi_1$ in 
the right direction with source tuple $S(c,d)$; 
the closed null $\nullc_1$ was obtained by equating the two open nulls
obtained by the chase process in the right direction from $\xi_1$ and 
$\xi_2$ with the source tuples $S(a,b)$ and $R(a)$ respectively.
Finally, the condition $\nullc_1\neq \nullo_1$ is obtained 
by chasing to the  left with tuples $\{ K(a,\nullc_1), V(\nullo_1,d) \}$.
Note that when chasing to the left we consider that null values
may equate to any other value.
\end{example}

In order to give a syntactic restriction for an $\abdset$
that will ensure that the global condition is tautological,
we need to introduce first few notations.

For a set $\abdset$,  with $\abdsetright$  is denoted the set 
of \sttgds\; and \egds\;  obtained from $\abdset$ by removing all
annotations and each \abd\; specified as in (\ref{def-abd})
is replaced with the following \sttgd:
$\forall \vect{x}\; (\forall \vect{y}\; 
\alpha(\vect{x},\vect{y}) \rightarrow \exists \vect{z}\; 
\beta(\vect{x},\vect{z}))$.
A set $\Sigma$ of \sttgds\; is said to be {\em GAV-reducible}
if there exists a set $\Sigma'$ of $GAV$ \sttgds\; such that 
$\Sigma$ is logically equivalent to $\Sigma'$.
A set $\Sigma$ is said to be {\em non-redundant } if none of the 
\sttgds\; from $\Sigma$ contains repeated atoms in the head.

The following proposition gives a necessary and sufficient condition
for a set of \sttgds\; to be GAV-reducible.

\begin{proposition}\label{gav-reducibility}
A non-redundant set $\Sigma$ of \sttgds\; is GAV-reducible
iff for each \tgd\; in $\Sigma$
every existentially quantified variable occurs only in one atom in the
head of the \tgd.
\end{proposition}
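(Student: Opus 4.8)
The plan is to prove the equivalence by making the decomposition of a multi-atom head into single-atom heads the central construction, and to exploit heavily the fact that $\Sigma$ consists of \sttgds, so that bodies live over $\schema{S}$ and heads over the disjoint schema $\schema{T}$; consequently a model is a pair $(I,J)$, body matches depend only on $I$, and head atoms never trigger further firings.

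For the \emph{if} direction, suppose every existentially quantified variable of every \tgd\; in $\Sigma$ occurs in a single head atom. Fix $\xi:\forall \vect{x}\,(\alpha(\vect{x},\vect{y})\rightarrow \exists \vect{z}\;\beta(\vect{x},\vect{z}))$ and write $\beta=\beta_1\wedge\cdots\wedge\beta_k$. Since no $z\in\vect{z}$ is shared between two atoms, the existential variables partition as $\vect{z}=\vect{z}_1\cup\cdots\cup\vect{z}_k$ with $\vect{z}_j$ confined to $\beta_j$, so the frontier variables $\vect{x}$ are the only variables common to distinct conjuncts. I would then invoke the standard distribution of an existential quantifier over a conjunction whose conjuncts share no bound variable, $\exists \vect{z}\,\bigwedge_j \beta_j(\vect{x},\vect{z}_j)\equiv\bigwedge_j \exists \vect{z}_j\,\beta_j(\vect{x},\vect{z}_j)$, to conclude that $\xi$ is logically equivalent to the set $\{\xi_j:\alpha(\vect{x},\vect{y})\rightarrow\exists\vect{z}_j\,\beta_j(\vect{x},\vect{z}_j)\}_{j\le k}$, each $\xi_j$ being a GAV \tgd. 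Doing this for every \tgd\; of $\Sigma$ yields the desired $\Sigma'$, establishing GAV-reducibility.

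For the \emph{only if} direction I would argue the contrapositive: assume some $\xi\in\Sigma$ has an existential variable $z$ occurring in two \emph{distinct} head atoms $\beta_1,\beta_2$ (distinctness is where non-redundancy is used, ruling out a repeated atom masquerading as two). Suppose toward a contradiction that $\Sigma\equivalent\Sigma'$ for a set $\Sigma'$ of GAV \sttgds. Freeze the body of $\xi$ into a rigid source instance $I_0$ by assigning pairwise distinct constants $\vect{a},\vect{b}$ to $\vect{x},\vect{y}$, so that $\alpha$ has essentially one match into $I_0$. Let $U$ be the canonical universal solution obtained by chasing $I_0$ with $\Sigma$ (fresh, pairwise distinct nulls for existentials), and let $n$ be the witness null for $z$, which sits at the $z$-position of both the $\beta_1$-image and the $\beta_2$-image. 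I would then build $U'$ from $U$ by renaming $n$ to a fresh null $n'$ \emph{only} in the $\beta_2$-image occurrence. In $U'$ no single value joins a $\beta_1$-atom and a $\beta_2$-atom at the $z$-positions over the frozen frontier $\vect{a}$ (distinct fresh nulls keep other firings from accidentally supplying such a join), so $(I_0,U')\not\models\xi$, hence $(I_0,U')\not\models\Sigma$.

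The key step, and the main obstacle, is to show $(I_0,U')\models\Sigma'$, which will contradict $\Sigma'\models\xi$. Here I would rely on the observation that a GAV \tgd\; $\tau\in\Sigma'$ has a body over $\schema{S}$ (so its matches into the unchanged $I_0$ are identical for $U$ and $U'$) and a \emph{single} head atom, so satisfying $\tau$ never requires a value to be shared across two target atoms. Moreover, since $U\models\Sigma\models\tau$, in $U$ the frontier positions of $\tau$'s head carry source constants inherited from the body match, so a \emph{null} such as $n$ can only occupy an \emph{existential} position of $\tau$; renaming $n$ to $n'$ at one occurrence therefore leaves $\tau$'s witnessing head atom intact (the existential position is simply witnessed by $n'$ instead of $n$). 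Thus every GAV \tgd\; of $\Sigma'$ remains satisfied in $U'$, whereas $\xi$ fails, contradicting $\Sigma'\equivalent\Sigma$. The delicate points to discharge are the rigidity of $I_0$ (guaranteeing a unique body match and preventing spurious joins among the chase-generated atoms) and the claim that $n$ never lands in a frontier position of an entailed GAV \tgd; both follow from freezing with distinct constants and from $U$ already being a solution, so I expect the write-up to spend most of its effort verifying precisely these two facts.
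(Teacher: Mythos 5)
Your ``if'' direction is exactly the paper's: split the head along the partition of the existential variables induced by the atoms and distribute the existential quantifier over conjuncts that share only frontier variables. That part is fine.

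For the ``only if'' direction you give a more explicit construction than the paper's sketch (freeze the body, chase, rename the shared null $n$ to a fresh $n'$ in one head-atom image), but the pivotal claim $(I_0,U')\not\models\xi$ has a genuine gap: it fails whenever the head of $\xi$ admits a non-injective fold onto itself that identifies the two occurrences of $z$. Take $\xi: A(x)\rightarrow \exists z\,\exists w\,\big(R(z,z)\wedge R(z,w)\big)$. The two head atoms are syntactically distinct, so $\{\xi\}$ is non-redundant in the paper's sense, and $z$ occurs in both atoms. Your construction gives $I_0=\{A(a)\}$, $U=\{R(n,n),R(n,m)\}$ and $U'=\{R(n,n),R(n',m)\}$; but $U'$ still satisfies $\xi$ via $z\mapsto n$, $w\mapsto n$, because the single atom $R(n,n)$ simultaneously matches both head atoms. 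Indeed this $\xi$ is logically equivalent to the single-head-atom \tgd\; $A(x)\rightarrow\exists z\,R(z,z)$, which is GAV by the paper's definition, so no choice of $(I_0,U')$ can separate $\Sigma$ from a GAV set here: the direction you are proving is false under the literal definition of non-redundancy, and your argument breaks precisely at the step that would have to rule this out. To close the gap you must either strengthen ``non-redundant'' to ``each head is a core'' (no proper fold fixing the frontier variables), in which case the $n$-version of the $\beta_2$-image cannot coincide with any other atom of the same firing and your renaming argument goes through, or handle the folding case separately. For what it is worth, the paper's own proof has the same blind spot (its ``subset minimal target instance with those positions different'' does not exist for the $\xi$ above), and the remaining delicate points you flag --- nulls never occupy frontier positions of a GAV head, and fresh nulls across firings prevent spurious joins --- are handled correctly.
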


With this, we can presents a syntactical condition for a
set of dependencies $\abdset$ to ensures that the universal
 representative
 does not have a global condition for input instance  $I$ and
set $\abdset$.

\begin{proposition}\label{restabdsemanticsTheo}
Let $\abdset$ be a set of \abds\; and safe \aegds\; 
with $\ad{\abdset}=1$
such that $\abdsetright$ is GAV-reducible
and let $I$ be a source instance.
Then either 
there exists a semi-na\"ive table $T$, 
computable in polynomial time in the 
size of $I$ such that $\sol{I}{\abdset}{abd}=\repe{T}$
or $\sol{I}{\abdset}{abd}=\emptyset$.
\end{proposition}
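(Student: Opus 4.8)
The plan is to start from Theorem~\ref{abdsemanticsTheo} and show that, under the additional hypothesis that $\abdsetright$ is GAV-reducible, the global condition of the universal representative is vacuous, so that a plain semi-na\"ive table already captures the semantics. If $\sol{I}{\abdset}{abd}=\emptyset$ the second alternative holds and we are done, so assume it is non-empty. Theorem~\ref{abdsemanticsTheo} then yields a pair $(T,\varphi^*)$, computable in polynomial time in $|I|$ by the $3$-step annotated chase of Appendix~\ref{abdchase}, with $\sol{I}{\abdset}{abd}=\repe{T,\varphi^*}$. It therefore suffices to prove that $\varphi^*$ is a tautology, for then $\repe{T,\varphi^*}=\repe{T}$ and $T$ itself is the required semi-na\"ive table, its polynomial-time computability being inherited from the first two chase steps.

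First I would convert the hypothesis into a purely syntactic statement about the $\beta$'s. Since $\abdsetright$ is GAV-reducible, by Proposition~\ref{gav-reducibility} we may assume, after discarding duplicate head atoms, that in every \abd\ $\xi:\alpha(\vect{x},\vect{y})\leftrightarrow\beta(\vect{x},\vect{z})$ each existentially quantified variable of $\vect{z}$ occurs in exactly one atom of $\beta$. Equivalently, no variable of $\vect{z}$ is a join variable: every join inside each $\beta$ is carried by a universal variable from $\vect{x}$. This is precisely the feature violated in Example~\ref{chaseexample}, where the existential variable $z$ of $\xi_1$ appears in both $\annt{K}{1}(x,z)$ and $\annt{V}{1}(z,y)$, and is exactly what forces the backward chase there to emit the inequality $\nullc_1\neq\nullo_1$.

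Next I would analyze the backward ``$\leftarrow$'' step (c), which is the only step that writes into $\varphi^*$. By density $\ad{\abdset}=1$, each annotated symbol $\annt{R}{i}$ occurs in the head of a unique \abd; hence every tuple of $T$ labelled $i$ over $R$ was produced by the forward chase from a single source match $h$ of that \abd's body $\alpha$, with $h(\vect{x})$ consisting of constants placed in the $\vect{x}$-positions and fresh nulls of $\nullso$ (possibly later merged into $\nullsc$ by the \aegd\ step) placed in the $\vect{z}$-positions. The backward chase emits an inequality only when a match of some $\beta$ can be realized by equating two distinct nulls sitting at a common \emph{join} position while the corresponding instance of $\alpha$ is absent from $I$. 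Because all joins of $\beta$ rest on the universal $\vect{x}$-positions, and those positions already hold the source constants $h(\vect{x})$, every match of $\beta$ against $T$ is pinned down by these constants and therefore reproduces a genuine source match of $\alpha$ (the one that generated the tuples, which exists by density $1$). Thus the ``absent-$\alpha$'' situation never arises, no equating of join nulls is ever needed, and step (c) terminates without adding any conjunct: $\varphi^*$ is the empty, hence tautological, conjunction.

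The main obstacle is making the last paragraph fully rigorous against the precise formulation of the backward chase in Appendix~\ref{abdchase}: I must show that every potential conjunct the procedure might emit is ruled out, while accounting for the fact that open nulls may be valuated independently and duplicated through the $v_i$'s whereas closed nulls are valuated uniformly by $v$. The argument is that, since the existential (null) positions never participate in a join of any $\beta$, no valuation can create a new $\beta$-match beyond those already anchored on the source constants; safety of the \aegds\ guarantees that the intermediate ``$=$'' step likewise only identifies non-join null positions, so it cannot manufacture a spurious join either. Combining these observations yields $\repe{T,\varphi^*}=\repe{T}$, which completes the proof.
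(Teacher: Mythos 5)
Your overall strategy is the same as the paper's: invoke Theorem~\ref{abdsemanticsTheo}, use Proposition~\ref{gav-reducibility} to conclude that no existential variable occurs in two head atoms of any \abd, and then argue that the backward ``$\leftarrow$'' chase step contributes nothing to $\varphi^*$. The gap lies in your central justification, namely the claim that ``the absent-$\alpha$ situation never arises'' because every match of $\beta$ against $T$ is pinned down by constants to a single source match of $\alpha$. This is false. Take $\abdset=\{\,P(x_1,x_2)\leftrightarrow \annt{R}{1}(x_1),\annt{S}{1}(x_2)\,\}$ (density $1$, and $\abdsetright$ is a full \sttgd, hence trivially GAV-reducible) and $I=\{P(a,b),P(c,d)\}$. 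The forward chase produces $R(a),S(b),R(c),S(d)$, and the backward trigger that pairs $R(a)$ with $S(d)$ yields $T'=\{P(a,d)\}$, which has no homomorphism into $I$: the two atoms of $\beta$ are matched to tuples generated by \emph{different} forward triggers, and the shared universal variables (here, none) do not force them to originate from the same source match. So a conjunct \emph{is} emitted, and your argument, taken at face value, would also wrongly establish that the backward chase can never fail under GAV-reducibility.

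What actually saves the proposition --- and what the paper's sketch relies on --- is the \emph{shape} of any emitted conjunct. Under GAV-reducibility no existential variable is shared between head atoms, so the compatibility constraints $h_i(x)=h_j(x)$ recorded in $\varphi_{(H,\xi^{\leftarrow})}$ concern only universal positions, and these hold source constants (the forward chase fills them from the ground instance $I$, and the \aegd\ step never replaces a constant by a null). For a valid backward trigger those constant equalities already hold, so any emitted conjunct is $\neg\true$, a contradiction; the annotated chase then fails, which by Theorem~\ref{abdsemanticsTheo} is precisely the $\sol{I}{\abdset}{abd}=\emptyset$ branch (in the example above the semantics is indeed empty, since $R(a)$ and $S(d)$ are forced into every solution while $P(a,d)\notin I$ violates the backward implication). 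Your proof should therefore be rerouted: do not claim that no backward trigger with absent $\alpha$ exists; show instead that any such trigger would produce a contradictory conjunct and hence a failing chase, so under your standing assumption that $\sol{I}{\abdset}{abd}\neq\emptyset$ no conjunct is ever emitted and $\varphi^*$ is the empty, tautological, conjunction.
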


\section{Query Answering}\label{query}
If in the previous section we showed how we can compute universal 
representatives for ABD-semantics, 
we will focus here on when and how 
these representatives can be used to compute certain answers 
for different query classes and check the complexities of such 
evaluations. Based on Theorem \ref{infer-ABD} it
follows that all tractable result presented here
are applicable for mappings specified by \sttgds\; and 
safe \egds\; under the inference-based semantics.

Let us first start by defining the certain answer evaluation problem 
for a schema mapping defined under the ABD-semantics.
For this, let  $\abdset$ be a set of \abds\; and safe \aegds. 
Let $\query{q}$ be a query over the target schema $\schema{T}$.
The query evaluation problem for mapping $\abdset$ and query $\query{q}$
is the following decision problem:

\begin{center}
\framebox[1\width]{
\begin{tabular}{ll} 
{\sc Problem} &   {\sc Eval}$_{\sf ABD}$($\abdset$,$\query{q}$)\\
{\sc Input}: &    $I\in \insta{S}$ and $\vect{t} \in (\dom{I})^{\ar{\vect{t}}}$. \\
{\sc Question}: &   Is $\vect{t} \in \cert{q}{I}{\abdset}{abd}$? 
\end{tabular}
}
\end{center}

The next dichotomy result tells us that one may search for
tractable query evaluation only for mappings with annotation 
density equal to 1.

\begin{theorem}\label{ucqdichotomy}
Let $\abdset$ be a set of \abds\; and safe \aegds. 
If $\query{q} \in \ucq$,
then {\sc Eval}$_{\sf ABD}(\abdset$,$\query{q})$ problem:
\begin{itemize}
\item is polynomial if $\ad{\abdset}=1$ and one 
may use a universal representative $(T,\varphi^*)$, if it exists, 
to answer the problem and
\item is in $\conp$ if $\ad{\abdset}>1$ and there 
exists a mapping $\abdset$ with $\ad{\abdset}=2$ and $\query{q}\in \cq$
such that the problem is $\conp$-hard.
\end{itemize}
\end{theorem}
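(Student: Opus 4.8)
The plan is to prove the dichotomy in two parts, matching the two bullet points. For the tractable side I would lean on Theorem~\ref{abdsemanticsTheo}: when $\ad{\abdset}=1$, either $\sol{I}{\abdset}{abd}=\emptyset$ (in which case every tuple is trivially certain and the problem is decided immediately), or there is a universal representative $(T,\varphi^*)$ computable in polynomial time with $\sol{I}{\abdset}{abd}=\repe{T,\varphi^*}$. The heart of this direction is to show that for a \ucq\; query $\query{q}$, the certain answer $\cert{q}{I}{\abdset}{abd}$ can be obtained by na\"ively evaluating $\query{q}$ over the semi-na\"ive table $(T,\varphi^*)$. Concretely, I would argue that a tuple $\vect{t}$ of constants is certain iff the na\"ive evaluation (treating all open and closed nulls as distinct unknowns, and respecting $\varphi^*$) returns $\vect{t}$. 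The forward direction is immediate since each $J\in\repe{T,\varphi^*}$ is a valuation image of $T$; the reverse direction requires the standard property that na\"ive evaluation is sound and complete for \ucq\; over na\"ive tables, extended to handle the open/closed null distinction and the global condition $\varphi^*$. Since $\abdset$ is fixed and $T$ has size polynomial in $I$, evaluating a fixed \ucq\; over $T$ is polynomial, giving the upper bound.

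For the second bullet I would first establish the $\conp$ upper bound for arbitrary density: to show $\vect{t}\notin\cert{q}{I}{\abdset}{abd}$ it suffices to guess a solution $J\in\sol{I}{\abdset}{abd}$ together with its tuple-labeling function $\ell$ and verify, in polynomial time, both that $(I,J)\models_\ell\abdset$ and that $\vect{t}\notin\query{q}(J)$. The subtlety here is bounding the size of the witness $J$: I would argue that it is enough to consider solutions whose active domain is polynomially bounded, because a counterexample to a \ucq\; query can be witnessed by a bounded number of tuples, and the minimality condition in the definition of $\models_\ell$ prevents unboundedly large required solutions. This membership verification is essentially the {\sc Sol-Check} test combined with a query-falsification check, both in $\np$ for the complement, yielding $\conp$.

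The $\conp$-hardness is the main obstacle, and I expect it to require the most care. The plan is to exhibit a fixed $\abdset$ with $\ad{\abdset}=2$ and a fixed query $\query{q}\in\cq$ such that {\sc Eval}$_{\sf ABD}$ is $\conp$-hard, by reducing from the complement of an $\np$-complete problem such as $3$-colorability or $3$-SAT. The density-$2$ annotation is the key expressive device: it allows the same target relation to carry two independent labels on overlapping tuples, and I would use this to encode nondeterministic guesses (for instance, color choices or truth assignments) as distinct labelings of the same tuples, so that different solutions $J\in\sol{I}{\abdset}{abd}$ correspond to different candidate assignments. The query $\query{q}$ would then be designed to detect a ``bad'' configuration (a monochromatic edge or an unsatisfied clause), so that $\query{q}$ fails on some solution precisely when the encoded instance is a yes-instance; a certain (always-true) answer thus corresponds to the coNP side. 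The delicate part will be ensuring that the inference/minimality constraints of $\models_\ell$ do not inadvertently rule out the solutions encoding the intended assignments, and that density~$1$ genuinely cannot produce the same phenomenon (matching the tractable bound above), which is exactly what pins the dichotomy boundary at density~$2$.
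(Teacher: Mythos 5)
Your proposal follows essentially the same route as the paper: the density-$1$ case is handled by na\"ive evaluation of the \ucq\; on the universal representative $(T,\varphi^*)$ (the paper's key observation being that valuing every open and closed null to a distinct fresh constant satisfies $\varphi^*$ and yields a genuine solution, so na\"ive evaluation is complete as well as sound), the $\conp$ upper bound is the same guess-a-solution-and-labeling argument, and the hardness is indeed obtained exactly as you plan, by a reduction from non-$3$-colorability with a fixed density-$2$ mapping whose solutions encode colorings and a conjunctive query $B(x,z),C(y,z),E(x,y)$ detecting a monochromatic edge. The only part you leave unexecuted is the concrete choice of dependencies for the reduction, but your outline matches the paper's construction.
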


Note that if the universal representative 
$(T,\varphi^*)$ exists, 
then certain answers 
for \ucq\; queries
can be computed using the na\"ive query evaluation on
$T$ (see \cite{DBLP:journals/tcs/FaginKMP05}).
The following theorem ensures that for 
any mapping $\Sigma$ of \sttgds\; and \egds\;
we can find a corresponding $\abdset$
that agrees with $\Sigma$ on \ucq\; certain answers for
any source instance.

\begin{theorem}\label{UCQequiv}
Let $\Sigma$ be a set of \sttgds\; and safe \egds\;.
Then there exists a set $\abdset$ 
of \abds\; and safe \aegds\;
such that
for any source instance $I$ and $q\in \ucq$
we have 
$\cert{q}{I}{\Sigma}{owa}=\cert{q}{I}{\abdset}{abd}$.
\end{theorem}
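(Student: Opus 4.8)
The plan is to route the proof through the inference-based semantics, letting Theorem~\ref{infer-ABD} discharge the ABD side entirely so that only a comparison between the inference-based and OWA certain answers remains. Applying Theorem~\ref{infer-ABD} to the given $\Sigma$ (a set of \sttgds\; and safe \egds) yields a set $\abdset$ of \abds\; and safe \aegds\; with $\sol{I}{\Sigma}{inf}=\sol{I}{\abdset}{abd}$ for every source instance~$I$. Since certain answers are by definition the intersection $\bigcap_{J}\query{q}(J)$ over the solution set, equal solution sets give equal certain answers for \emph{every} query; in particular $\cert{q}{I}{\abdset}{abd}=\cert{q}{I}{\Sigma}{inf}$. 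Thus it suffices to establish $\cert{q}{I}{\Sigma}{owa}=\cert{q}{I}{\Sigma}{inf}$ for every \ucq\; query $\query{q}$ and every $I$; this is the one place where restricting to \ucq\; is essential.

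For one inclusion I would note that Condition~4 of Definition~\ref{inf-sem} forces $(I,J)\models\Sigma$ for every inference-based solution $J$, hence $\sol{I}{\Sigma}{inf}\subseteq\sol{I}{\Sigma}{owa}$. Intersecting $\query{q}(J)$ over the smaller family can only enlarge the result, so $\cert{q}{I}{\Sigma}{owa}\subseteq\cert{q}{I}{\Sigma}{inf}$ for arbitrary queries.

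The reverse inclusion is where the real work lies. I would recall from \cite{DBLP:journals/tcs/FaginKMP05} that for $\query{q}\in\ucq$ the set $\cert{q}{I}{\Sigma}{owa}$ equals the constant tuples of the na\"ive evaluation of $\query{q}$ on a canonical universal solution $U$ obtained by chasing $\Sigma$, and that both semantics are empty exactly when this chase fails. It then suffices to exhibit a single inference-based solution whose \ucq\; answers over $\dom{I}$ are contained in those of~$U$. I would let $w$ send each null of $U$ to a fresh, pairwise distinct constant and put $J_0\defined w(U)$, a ground instance. Taking as inference strategy the collection of all chase firings --- with existential witnesses read off as the corresponding fresh constants --- one verifies $J_0\in\sol{I}{\Sigma}{inf}$; this simultaneously shows $\sol{I}{\Sigma}{inf}\neq\emptyset$ whenever $\sol{I}{\Sigma}{owa}\neq\emptyset$, ruling out any mismatch of the two semantics on emptiness. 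Since $J_0$ is an inference-based solution, $\cert{q}{I}{\Sigma}{inf}\subseteq\query{q}(J_0)$; and as $w$ is injective and answer tuples range over $\dom{I}$, any tuple of source constants in $\query{q}(J_0)$ pulls back along $w^{-1}$ to a constant answer of $\query{q}$ on $U$, whence $\query{q}(J_0)\cap(\dom{I})^{\ar{\vect{t}}}\subseteq\cert{q}{I}{\Sigma}{owa}$. The two inclusions together give the desired equality.

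The main obstacle is verifying Condition~3 of Definition~\ref{inf-sem} for $J_0$: one must exclude any function $f$ that would re-trigger a \emph{weakly} inferred tuple of $J_0$ via a body match with $f(\alpha)\not\subseteq I$. This is precisely where safety of the \egds\; is used --- safety forbids equating the fresh existential witnesses occupying affected positions, so each weakly inferred tuple remains tied to its unique originating firing and cannot arise from a spurious match. Checking Condition~2 (every source trigger is fired, which is automatic for the oblivious chase) and Condition~4 (that $w(U)$ still models $\Sigma$, including the \egds, after the \egd\; chase steps) is routine but must account for fresh witnesses in non-affected positions having possibly been merged.
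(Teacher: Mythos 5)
There is a genuine gap, and it sits exactly where you flagged ``the main obstacle'': the claim that $J_0=w(U)\in\sol{I}{\Sigma}{inf}$. Condition~3 of Definition~\ref{inf-sem} can fail for $w(U)$ even when $\Sigma$ contains no \egds\; at all, so safety of the \egds\; is not the mechanism that rescues it. Take $\Sigma=\{R(x,y)\rightarrow\exists z\, (S(x,z)\wedge S(z,y))\}$ and $I=\{R(a,a)\}$. The chase gives $U=\{S(a,\nullv_1),S(\nullv_1,a)\}$, so $J_0=\{S(a,c),S(c,a)\}$ for a fresh constant $c$, and $J_{\kappa,\xi}=J_0$ for the strategy consisting of the single firing $\{x/a,y/a,z/c\}$. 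Now the function $f=\{x/c,\,y/c,\,z/a\}$ satisfies $f(\beta)=\{S(c,a),S(a,c)\}\subseteq J_{\kappa,\xi}$, both tuples are weakly inferred, and $f(\alpha)=\{R(c,c)\}\not\subseteq I$ --- precisely the pattern that Example~\ref{egInf1} uses to disqualify a strategy. Hence $J_0\notin\sol{I}{\Sigma}{inf}$, your witness for the inclusion $\cert{q}{I}{\Sigma}{inf}\subseteq\cert{q}{I}{\Sigma}{owa}$ disappears, and the auxiliary claim that $w(U)$ certifies non-emptiness of the inference semantics also fails. (Indeed, on this example the only candidate inference solution is $\{S(a,a)\}$, so the intermediate identity $\cert{q}{I}{\Sigma}{owa}=\cert{q}{I}{\Sigma}{inf}$ you reduce to is itself in trouble for $\query{q}=\exists u\,S(u,u)$; that identity is the paper's Corollary~\ref{owa-inf-ucq}, which the paper \emph{derives from} Theorem~\ref{UCQequiv} rather than the other way around, so you cannot assume it as the easier half.)

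The paper's own proof never passes through the inference-based semantics. It constructs $\abdset$ as in Theorem~\ref{infer-ABD}, observes that the annotated chase produces a universal representative $(T,\varphi^*)$ exactly when the ordinary chase produces a universal solution $U$, that $T$ and $U$ are homomorphically equivalent with $\varphi^*$ satisfiable, and then invokes Theorem~\ref{ucqdichotomy} together with \cite{DBLP:journals/tcs/FaginKMP05} so that both certain-answer sets reduce to the same na\"ive evaluation on homomorphically equivalent tables. If you want to keep your architecture, you must replace $w(U)$ by an argument on the ABD side (where Condition~3 has no analogue and $\repe{T,\varphi^*}$ does contain the needed ground solutions), which is essentially what the paper does.
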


From this and Theorem \ref{infer-ABD} we have the
following corollary that states that
the OWA and inference-based semantics
agree on \ucq\; certain answers.

\begin{corollary}\label{owa-inf-ucq}
Let $\Sigma$ be a set of \sttgds\; and
target \egds, then
$\cert{q}{I}{\Sigma}{owa}=\cert{q}{I}{\Sigma}{inf}$
 for any source instance $I$
and $\query{q}\in \ucq$.
\end{corollary}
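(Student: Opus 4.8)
The plan is to chain the two immediately preceding results through a common intermediate ABD-mapping. First I would apply Theorem~\ref{infer-ABD} to $\Sigma$, obtaining a set $\abdset$ of \abds\; and \aegds\; with $\ad{\abdset}=1$ such that $\sol{I}{\Sigma}{inf}=\sol{I}{\abdset}{abd}$ for every source instance $I$. Because certain answers are, by definition, nothing more than the intersection $\bigcap_{J}\query{q}(J)$ taken over the solution set, two semantics that produce the \emph{same} set of solutions must produce the same certain answers for \emph{every} query---not merely for \ucq. Hence $\cert{q}{I}{\Sigma}{inf}=\cert{q}{I}{\abdset}{abd}$ holds for all $I$ and all $\query{q}$, and in particular for $\query{q}\in\ucq$.

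Second, I would invoke Theorem~\ref{UCQequiv}, which for the same mapping $\Sigma$ supplies a set of \abds\; and \aegds\; agreeing with $\Sigma$ on \ucq\; certain answers under the OWA-semantics. The main obstacle is that Theorem~\ref{infer-ABD} and Theorem~\ref{UCQequiv} each only assert the \emph{existence} of some witnessing $\abdset$, and a priori the two witnesses need not coincide, so the two equalities cannot simply be composed. The plan is to close this gap by verifying that the density-$1$ witness produced by Theorem~\ref{infer-ABD}---the canonical head-splitting transformation illustrated in Example~\ref{rewrite}, which isolates each maximal group of head atoms sharing an existential variable under its own annotation---is itself a valid witness for Theorem~\ref{UCQequiv}. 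Concretely, I expect the construction used in the proof of Theorem~\ref{UCQequiv} to be exactly this canonical transformation, so that both theorems may be instantiated with one and the same $\abdset$.

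Granting this common witness, the second step gives $\cert{q}{I}{\Sigma}{owa}=\cert{q}{I}{\abdset}{abd}$ for every $\query{q}\in\ucq$ and every $I$, and combining with the equality from the first step yields the desired chain $\cert{q}{I}{\Sigma}{inf}=\cert{q}{I}{\abdset}{abd}=\cert{q}{I}{\Sigma}{owa}$.

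Should the two constructions fail to coincide syntactically, I would fall back on a direct monotonicity argument to establish the claim independently of which $\abdset$ is chosen. Every inference-based solution is in particular an OWA-solution, since Condition~4 of Definition~\ref{inf-sem} already requires $(I,J)\models\Sigma$ while Conditions~1--3 only impose additional ``justification'' constraints; thus $\sol{I}{\Sigma}{inf}\subseteq\sol{I}{\Sigma}{owa}$, and since a smaller solution set can only enlarge the intersection, $\cert{q}{I}{\Sigma}{owa}\subseteq\cert{q}{I}{\Sigma}{inf}$ for \emph{every} query. For the reverse inclusion restricted to $\ucq$, I would use that \ucq\; queries are monotone and preserved under homomorphisms, together with the fact that (a valuation of) the canonical chase solution---each of whose tuples is fired by a source trigger and is therefore inferred---lies in the inference semantics; the value of a \ucq\; query on this universal instance equals the OWA certain answers, and the inference certain answers, being contained in the query's value on any particular inference solution, cannot exceed it. The delicate point here is checking that the chase result genuinely satisfies Conditions~1--3, which is where the bulk of the verification would lie.
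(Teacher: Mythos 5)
Your main argument is exactly the paper's: the corollary is obtained by chaining Theorem~\ref{infer-ABD} and Theorem~\ref{UCQequiv} through a single common witness, and the gap you flag is closed precisely as you anticipate, since the paper's proof of Theorem~\ref{UCQequiv} explicitly constructs its $\abdset$ ``the way it was shown in the proof of Theorem~\ref{infer-ABD}''. The monotonicity/chase fallback is a reasonable independent route but is not needed here.
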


From Theorem \ref{ucqdichotomy} we know that if the annotation 
density is 1, then we can compute
in tractable time certain answers for $\ucq$. The following
negative result shows that not all queries are tractable
for mappings with annotation density 1.

\begin{theorem}\label{coNPforUCQN}
There exists a set 
 $\abdset$ of \abds\; with $\ad{\abdset}=1$
and there exists a query $\query{q} \in \cqn$ 
such that the problem
{\sc Eval}$_{\sf ABD}(\Sigma$,$\query{q})$ is
$\conp$-complete.
\end{theorem}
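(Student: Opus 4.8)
The plan is to prove the two bounds separately, with the lower bound carrying the real content.

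\textbf{Membership in \conp.} Since $\ad{\abdset}=1$, Theorem~\ref{abdsemanticsTheo} gives, in polynomial time, either $\sol{I}{\abdset}{abd}=\emptyset$ or a universal representative $(T,\varphi^*)$ with $\sol{I}{\abdset}{abd}=\repe{T,\varphi^*}$. The complement of the evaluation problem asks whether some solution falsifies the query, i.e.\ whether there is $J\in\repe{T,\varphi^*}$ with $\vect{t}\notin \query{q}(J)$. I would establish a small-witness lemma: because $\query{q}\in\cqn$ is non-monotone only through its negated atoms, a falsifying solution can always be shrunk to one built from a single open-null copy together with one closed-null valuation, whose values range over $\dom{I}$ plus a bounded number of fresh constants. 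Such a $J$ has polynomial size, membership $J\in\repe{T,\varphi^*}$ is checkable in polynomial time, and $\cqn$ evaluation on a finite instance is polynomial; guessing $J$ therefore places the complement in \np, so the problem is in \conp.

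\textbf{\conp-hardness.} I reduce from non-$3$-colorability, which is \conp-complete (one may assume every vertex has degree at least one, since attaching a pendant edge removes isolated vertices without affecting $3$-colorability). The source instance $I$ encodes a graph $G=(V,E)$ by relations $V_0$ (vertices), $E_0$ (symmetric edge pairs) and a fixed palette relation $D_0=\{(i,j)\setdef i,j\in\{1,2,3\},\ i\neq j\}$. Fix the density-$1$ set
\begin{equation*}
\abdset=\{\; V_0(x)\leftrightarrow \annt{\mathit{Col}}{1}(x,z),\quad E_0(x,y)\leftrightarrow \annt{E}{1}(x,y),\quad D_0(c,d)\leftrightarrow \annt{\mathit{Diff}}{1}(c,d)\;\}.
\end{equation*}
No annotation is repeated, so $\ad{\abdset}=1$, and there are no \aegds, so $\abdset$ is trivially safe. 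The first \abd\ attaches to each vertex a single \emph{open} color null, while the other two copy the edge and palette relations verbatim. Fix the Boolean query
\begin{equation*}
\query{q}:\ \exists x\,\exists y\,\exists c\,\exists d\;\big(E(x,y)\wedge \mathit{Col}(x,c)\wedge \mathit{Col}(y,d)\wedge \neg\,\mathit{Diff}(c,d)\big)\ \in\ \cqn .
\end{equation*}

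I then argue that $\query{q}$ is certain iff $G$ is not $3$-colorable. For the non-certain direction, a solution $J$ minimizing tuple content uses exactly one color per vertex; such $J$ falsifies $\query{q}$ iff for every edge $(u,w)$ the ordered pair of endpoint colors lies in $\mathit{Diff}$, i.e.\ both colors are in $\{1,2,3\}$ and differ (the degree-$\ge 1$ assumption guarantees a forbidden color is always witnessed by an incident edge, since $\neg\,\mathit{Diff}(c,d)$ fires whenever a color leaves $\{1,2,3\}$). Hence a falsifying solution exists iff $G$ admits a proper $3$-coloring. Conversely, since open nulls only ever \emph{add} $\mathit{Col}$-tuples, enlarging a solution can only help $\query{q}$ match, so the minimal single-color solutions are exactly the ones to beat: no solution falsifies $\query{q}$ precisely when $G$ is not $3$-colorable. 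This yields $\conp$-hardness, and with membership establishes $\conp$-completeness.

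\textbf{Main obstacle.} The delicate point is the lower bound: capturing ``improper coloring'' with a \emph{single} conjunctive query carrying only negation (rather than inequality), which the $\neg\,\mathit{Diff}$ atom is designed to do, and checking that the generative power of open nulls introduces no spurious falsifying solutions. One must verify that an adversary trying to make $\query{q}$ false is never helped by assigning a vertex several colors or a color outside the palette, so that the relevant solutions are exactly the single-color colorings of $G$; handling this monotonicity-under-enlargement argument together with the degree-$\ge 1$ normalization is where the care is needed.
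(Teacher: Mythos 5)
Your proof is correct, but it takes a genuinely different route from the paper's on the hardness side. The paper adapts Hernich's reduction from the complement of {\sc Clique}: the source stores the edge relation and all ordered pairs of $k$ ``slot'' constants, a single \abd\ with a three-atom head attaches open-null witnesses $A(x,z)$, $B(y,v)$ to each slot pair, and the query $C(x,y),A(x,z_1),B(y,z_2),\neg E(z_1,z_2)$ is certain iff no $k$-clique exists. You instead reduce from non-$3$-colorability with a fixed palette relation $D_0$ copied verbatim into $\mathit{Diff}$, a single-atom-head \abd\ attaching one open color null per vertex, and the negated atom $\neg\mathit{Diff}(c,d)$ detecting improper colorings. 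Your gadget is arguably cleaner: all heads are single atoms, so the inference/minimality conditions of the ABD-semantics are trivial, and your monotonicity argument (the negated atom ranges over a relation that is \emph{fixed} across all solutions, so falsifiability is decided by the minimal one-color-per-vertex solutions) is exactly the right observation and is spelled out more carefully than the paper's ``it can be verified'' step. The one place where you over-claim is the \conp\ membership: your general ``small-witness lemma'' is asserted for arbitrary density-$1$ mappings and arbitrary $\cqn$ queries, and the shrinking formulation is not literally sound (a sub-union of open-null copies of a falsifying solution need not falsify, since a negated-atom witness for a surviving positive match may live in a discarded copy). But this generality is unnecessary for the theorem: for your specific $\abdset$ and $\query{q}$ the complement is precisely ``does a proper $3$-coloring exist,'' which is manifestly in \np, and that is all the statement requires --- the paper itself dismisses membership with ``clearly in \conp.''
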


From this it follows that for tractable
query evaluation under ABD-semantics we need to restrict either
the set $\abdset$ or the query class used or both.
In the last part of this section we will present such restrictions 
that ensure tractability for certain answers evaluation.

\begin{proposition}\label{fultdquery}
Let $\abdset$ be a set of \abds\; and safe \aegds\;
such that $\abdsetright$ is a collection of full \sttgds\; and safe 
\egds. Then for any $\fo$ query $\query{q}$ the 
{\sc Eval}$_{\sf ABD}(\abdset$,$\query{q})$
is tractable.
\end{proposition}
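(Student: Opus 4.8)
The plan is to reduce the full-dependency case to a situation where the ABD-semantics is representable by a semi-naïve table with \emph{only closed} nulls, after which $\fo$ certain-answer evaluation collapses to ordinary $\fo$ evaluation on an instance. First I would invoke Theorem~\ref{abdsemanticsTheo}: since $\ad{\abdset}=1$, either $\sol{I}{\abdset}{abd}=\emptyset$ (in which case every tuple is trivially certain and the problem is solved in constant time after the polynomial-time emptiness check), or there is a universal representative $(T,\varphi^*)$ computable in polynomial time with $\sol{I}{\abdset}{abd}=\repe{T,\varphi^*}$. The key observation is that when $\abdsetright$ consists of \emph{full} \sttgds, the right-directed chase of step~(a) in the construction of $(T,\varphi^*)$ never introduces existential witnesses, so $\aff{\abdset}=\emptyset$ and \emph{no open nulls from $\nullso$ are ever created}. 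Consequently $T$ is a ground instance over $\cons$ (every null is eliminated), the global condition $\varphi^*$ becomes vacuous, and $\repe{T,\varphi^*}=\{T\}$ is a singleton.

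Granting that reduction, the second step is immediate. When $\sol{I}{\abdset}{abd}=\{T\}$ with $T\in\insta{T}$ a complete instance, the certain answer $\cert{q}{I}{\abdset}{abd}=\bigcap_{J\in\sol{I}{\abdset}{abd}}\query{q}(J)=\query{q}(T)$ for \emph{any} query $\query q$, in particular for any $\fo$ query. Since $T$ is computable in polynomial time in the size of $I$ and evaluating a fixed $\fo$ query over a complete instance is in $\ptime$ (data complexity), the whole problem $\eval{q}{}{ABD}(\abdset,\query q)$ is tractable: compute $T$, evaluate $\query q$ on $T$, and test membership of $\vect t$. I would take care to handle the safe-\aegds\; part: the \aegds\; fire during step~(b) of the chase, possibly equating closed nulls or detecting an inconsistency; safety (Definition~\ref{safe-egd}) together with the absence of open nulls guarantees the equalities identify constants only, so the chase either yields a ground $T$ or correctly reports $\sol{I}{\abdset}{abd}=\emptyset$.

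The main obstacle, and the step I would verify most carefully, is the claim that fullness of $\abdsetright$ forces $T$ to be ground with trivial $\varphi^*$. One must check that the $\leftarrow$-direction of each \abd\; (step~(c)), which in the general case is what builds $\varphi^*$, contributes nothing when there are no open nulls: with all of $T$ over constants, every left-directed chase step either is satisfied outright or fails, and a failure is exactly the $\sol{I}{\abdset}{abd}=\emptyset$ case rather than a nontrivial disequality. I would make this precise by arguing that each conjunct of $\varphi^*$ is a disjunction of \emph{unequalities} among elements of $\dom{T}\subseteq\cons$; such a conjunct is either a tautology or false on constants, and the false case is absorbed into emptiness. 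A secondary point worth a remark is uniformity: the polynomial bound on computing $T$ is uniform in $I$ because the chase for a fixed full $\abdset$ terminates in polynomially many steps (no null-creation means the active domain does not grow), so the reduction to plain $\fo$ evaluation is genuinely within $\ptime$ data complexity.
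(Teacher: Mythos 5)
Your overall route is exactly the paper's: for full \sttgds\; the annotated chase creates no existential witnesses, so the universal representative degenerates to a ground instance over $\cons$, the semantics is a singleton (or empty), and certain answering collapses to ordinary $\fo$ evaluation on that instance, which is in $\ptime$ in data complexity. The paper's own proof is precisely this observation, cited against Theorem~\ref{abdsemanticsTheo} and the chase of Appendix~\ref{abdchase}, and your handling of the \aegds\; and of the emptiness case matches it.

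The one point to repair is that you write ``since $\ad{\abdset}=1$'' as though it were a hypothesis. The proposition imposes no density restriction, and the paper explicitly remarks right after the statement that none is needed; Theorem~\ref{abdsemanticsTheo} is only stated for $\ad{\abdset}=1$, so as written your proof covers only that case. The fix is short: the density restriction in Theorem~\ref{abdsemanticsTheo} exists only to control open nulls and the backward-direction disequalities, and when $\abdsetright$ is full, step~(a) of the chase produces a ground table no matter how many times an annotated symbol $\annt{R}{i}$ occurs, step~(b) only equates constants (possibly failing), and step~(c) reduces to a check over constants whose failure is absorbed into $\sol{I}{\abdset}{abd}=\emptyset$ --- exactly the analysis in your third paragraph. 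With that adjustment the singleton/emptiness dichotomy and the polynomial bound hold for every annotation density, and the rest of your argument goes through.
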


Note that in the previous proposition there is no need for
annotation density restriction.
The following corollary can be easily be verified.

\begin{corollary}\label{full}
Let $\Sigma$ be a set of full \sttgds, then 
there exists $\abdset$ such that $\abdsetright$ is a collection of full 
\sttgds\; and $\sol{I}{\abdset}{abd}=\sol{I}{\Sigma}{inf}=\sol{I}{\Sigma}{cwa}$.
\end{corollary}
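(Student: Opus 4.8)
The plan is to obtain $\abdset$ directly from Theorem~\ref{infer-ABD} and then argue separately that (i) the translation keeps $\abdsetright$ full, and (ii) for full \sttgds\; the inference-based and CWA-semantics both collapse to a single full-chase instance. First I would invoke Theorem~\ref{infer-ABD} on $\Sigma$: since a full \sttgd\; is in particular a \sttgd\; and $\Sigma$ contains no \egds, the theorem yields a set $\abdset$ of \abds\; (and no \aegds) with $\ad{\abdset}=1$ such that $\sol{I}{\Sigma}{inf}=\sol{I}{\abdset}{abd}$ for every $I$.

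Next I would inspect the construction underlying that theorem, as illustrated in Example~\ref{rewrite}: each head atom is grouped together with the head atoms that share an existential variable with it, and each such group becomes one \abd. For a full \sttgd\; $\alpha(\vect{x},\vect{y})\rightarrow\beta_1(\vect{x})\wedge\cdots\wedge\beta_k(\vect{x})$ there are no existential variables at all, so every head atom forms its own group and the produced \abds\; have the form $\exists\vect{y}\;\alpha(\vect{x},\vect{y})\leftrightarrow\beta_i(\vect{x})$ with $\beta_i$ carrying a fresh annotation. Passing to $\abdsetright$ strips the annotations and keeps the forward direction $\forall\vect{y}\;\alpha\rightarrow\beta_i$, whose head has an empty existential prefix; hence $\abdsetright$ is a collection of full \sttgds, as required.

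I would then show that for full \sttgds\; $\sol{I}{\Sigma}{inf}$ is the singleton $\{J_0\}$, where $J_0$ is the instance obtained by firing every source trigger once. The key observation is that, because the head $\beta(\vect{x})$ of a full \sttgd\; mentions only frontier variables, every inferred tuple is strongly inferred in the sense of Definition~\ref{inference1}; consequently no tuple is ever weakly inferred, and Condition~3 of Definition~\ref{inf-sem} is vacuously met. Condition~2 forces every trigger $f$ with $f(\alpha)\subseteq I$ into $\tco(\xi)$ (there is no existential variable to extend, so $f'=f$), so every fact of $J_0$ must lie in $J$; Condition~1 forbids any tuple that is not so inferred, giving $J\subseteq J_0$. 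Thus $J=J_0$, and Condition~4 holds for $J_0$ automatically, so $\sol{I}{\Sigma}{inf}=\{J_0\}$.

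Finally I would match this against the CWA-semantics: for full \sttgds\; the target $J_0$ has no nulls, so the first two CWA requirements are vacuous, and the third — every fact is justified — together with the determinism of the chase (there are no existential witnesses to choose) makes $J_0$ the unique CWA-solution; since $\repe{J_0}=\{J_0\}$ this gives $\sol{I}{\Sigma}{cwa}=\{J_0\}$ as well. Chaining the three equalities yields $\sol{I}{\abdset}{abd}=\sol{I}{\Sigma}{inf}=\sol{I}{\Sigma}{cwa}$. The main obstacle I anticipate is purely bookkeeping at the seams: confirming from the explicit translation in Theorem~\ref{infer-ABD} that no existential variable is reintroduced into any head (so that $\abdsetright$ genuinely stays full), and checking that the three informal CWA requirements pin down $J_0$ uniquely rather than admitting extra justified-but-otherwise-anomalous instances.
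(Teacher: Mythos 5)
Your proposal is correct and follows essentially the same route as the paper: the construction you extract from Theorem~\ref{infer-ABD} (one \abd\; per head atom with a fresh annotation, since a full \sttgd\; has no existential variables and hence each head atom is its own Gaifman block) is exactly the explicit construction the paper gives, and the fullness of $\abdsetright$ is then immediate. Your additional verification that both $\sol{I}{\Sigma}{inf}$ and $\sol{I}{\Sigma}{cwa}$ collapse to the singleton full-chase instance $\{J_0\}$ is sound and in fact supplies detail the paper's own proof leaves as an unstated observation.
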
  

With $\ucqino$ is denoted the set of $\ucqin$ queries with
at most one unequality per disjunct.

\begin{theorem}\label{unequality}
Let $\abdset$ be a set of \abds\; and safe \aegds\;
with  $\ad{\abdset}=1$. 
Then for any $\ucqino$ query $\query{q}$
 the 
{\sc Eval}$_{\sf ABD}(\smap{\abdset}$,$\query{q})$ problem
is tractable and one may use only the universal representative, 
if it exists, 
to evaluate the query. 
\end{theorem}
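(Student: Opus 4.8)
The plan is to reduce the whole problem to the machinery already developed for the universal representative. First I would invoke Theorem~\ref{abdsemanticsTheo}: since $\ad{\abdset}=1$, in polynomial time we either discover that $\sol{I}{\abdset}{abd}=\emptyset$ --- in which case $\cert{q}{I}{\abdset}{abd}$ is by convention the set of all candidate tuples and the problem is trivial --- or we obtain a universal representative $(T,\varphi^*)$ with $\sol{I}{\abdset}{abd}=\repe{T,\varphi^*}$. From this point on the task is purely one of query evaluation over $\repe{T,\varphi^*}$: a tuple $\vect{t}$ of constants lies in $\cert{q}{I}{\abdset}{abd}$ iff it belongs to $\query{q}(J)$ for every $J\in\repe{T,\varphi^*}$. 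For the purely positive disjuncts this is exactly the $\ucq$ situation already handled by naive evaluation on $T$ in Theorem~\ref{ucqdichotomy}, so the only genuinely new ingredient is the single inequality attached to each remaining disjunct.

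The key step is a polynomial-time ``separability'' test for a single inequality. Write a disjunct carrying an inequality as $\exists\vect{w}\,(\gamma(\vect{u},\vect{w})\wedge s\neq t)$ with $\gamma$ a conjunction of target atoms. I would evaluate $\gamma$ naively on $T$ to enumerate the homomorphisms $h$ with $h(\vect{u})=\vect{t}$, and classify each by the pair $(h(s),h(t))\in(\dom{T})^2$. There are only a constant number of relevant types, which is precisely what the restriction to at most one inequality buys us: (i) if $h(s),h(t)$ are distinct constants the inequality holds under every valuation and every copy, so $h$ robustly witnesses the disjunct; (ii) if $h(s)=h(t)$ as elements of $\dom{T}$ the inequality fails under every valuation, so $h$ is useless; (iii) if the two images are distinct but at least one is a null, I would decide separability by checking whether $\varphi^*$ already forces them apart, and, failing that, whether an open null --- which across distinct copies in $J=v(\bigcup_i v_i(T))$ may be assigned distinct fresh values --- can route $s$ and $t$ through different copies so as to separate them while still realizing all atoms of $\gamma$. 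Each of these checks is local and polynomial; a disjunct is declared robustly satisfied by $\vect{t}$ exactly when some surviving homomorphism falls into case (i) or into a separable instance of case (iii).

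Soundness of the resulting evaluation --- that a tuple declared robustly satisfied is genuinely certain --- follows because a robust homomorphism survives, by construction, in $\query{q}(J)$ for every $J\in\repe{T,\varphi^*}$, using the open/closed null distinction and $\varphi^*$ to guarantee the inequality. The main obstacle is completeness: for a rejected tuple I must exhibit a single solution $J$ in which every disjunct simultaneously fails, i.e.\ a valuation $v$ over $\nullsc$ together with copies $v_1,\dots,v_n$ over $\nullso$ satisfying $\varphi^*$ such that no positive disjunct is matched and, for every disjunct with inequality, every match of its $\gamma$ collapses $s$ and $t$. The delicate point is that collapsing the inequality terms of one disjunct must not inadvertently create a witness for another disjunct; with a single inequality per disjunct the collapsing requirements reduce to forcing equalities on isolated pairs of null-images, which can be realized by a polynomial assignment (a matching/two-colouring of the null images consistent with $\varphi^*$), so a uniform counterexample always exists whenever the test rejects. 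This is exactly where the bound of one inequality is essential: two or more inequalities per disjunct would let the collapsing constraints interact so as to encode an intractable consistency problem, matching the $\conp$-hardness already witnessed in Theorems~\ref{ucqdichotomy} and~\ref{coNPforUCQN}. Combining the polynomial construction of $(T,\varphi^*)$, the naive evaluation of the positive parts, and the constant-case separability test for the lone inequality yields the claimed tractability, and by Theorem~\ref{infer-ABD} the same bound transfers to the inference-based semantics.
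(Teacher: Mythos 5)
Your overall framing (reduce to the universal representative $(T,\varphi^*)$, handle the positive disjuncts by na\"ive evaluation, then deal with the lone inequality per disjunct) matches the paper's, but your core decision procedure has a genuine completeness gap. You decide certainty by looking for a single \emph{robust} homomorphism for some disjunct --- one whose inequality is guaranteed in every $J\in\repe{T,\varphi^*}$. That test is sound but not complete: the certain answer to a disjunction can be \true\; even though no individual homomorphism of any disjunct is robust. Concretely, take $P^T=\{\nullv\}$ with $\nullv\in\nullsc$ and empty $\varphi^*$, and the query $(\exists x\, P(x)\wedge x\neq a)\vee(\exists x\, P(x)\wedge x\neq b)$ with $a\neq b$ constants: every valuation sends $\nullv$ to some constant, which cannot equal both $a$ and $b$, so the certain answer is \true; yet each disjunct's only homomorphism falls into your case (iii) with a closed null not forced apart from anything, so your test rejects. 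Your completeness argument asserts that on rejection "the collapsing requirements reduce to forcing equalities on isolated pairs of null-images" and that "a uniform counterexample always exists whenever the test rejects," but the collapsing requirements of distinct disjuncts (and of distinct homomorphisms of the same disjunct) interact through shared nulls and can be jointly unsatisfiable even when each is individually satisfiable --- that is exactly what happens in the example above.

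The paper avoids this by the Fagin--Kolaitis--Miller--Popa technique (their Theorem 5.12): after the positive $\ucq$ part fails na\"ively, it turns every inequality disjunct $q_i(\vect{x}_i)\wedge x_i\neq y_i$ into the \egd\; $q_i(\vect{x}_i)\rightarrow x_i=y_i$ and chases $T$ with \emph{all} of these \egds\; simultaneously, also checking consistency with $\varphi^*$. The chase propagates the forced equalities transitively (and picks up new triggers created by earlier collapses, which your static per-homomorphism enumeration also misses), and the answer is \true\; exactly when this chase fails. That global fixpoint computation is the missing ingredient; the per-disjunct separability classification cannot replace it.
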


Similar result was also shown for OWA \cite{DBLP:journals/tcs/FaginKMP05} 
and CWA-semantics \cite{DBLP:journals/tods/HernichLS11}.
Because the \aegds\; in the mapping
are safe, we can use the universal representative, if it exists, to 
compute the certain answers for any
$\ucqino$. 
This can be extended to OWA-semantics too.
Thus, if the schema mapping is defined 
as a set of \tgds\; and safe \egds, we may use any universal
solution to evaluate certain answers for 
any $\ucqino$ under OWA-semantics eliminating the need to 
re-chase the source instance for each query.
The same as for OWA-semantics 
in case the $\ucqin$ contains 
at most 2 unequalities per disjunct \cite{DBLP:journals/ipl/Madry05}, the 
certain answers evaluation becomes intractable.

\begin{theorem}\label{unequality2}
There exists a  set
$\abdset$ 
 of \abds\; with $\ad{\abdset}=1$
and there is a conjunctive query with
two unequalities such that 
{\sc Eval}$_{\sf ABD}(\smap{\abdset}$,$\query{q})$ problem
is $\conp$-complete. 
\end{theorem}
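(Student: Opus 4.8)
The plan is to establish both halves of the $\conp$-completeness claim: membership in $\conp$ of the evaluation problem under $\ad{\abdset}=1$, and $\conp$-hardness for one concrete density-1 mapping together with a conjunctive query carrying exactly two unequalities. For the upper bound I would invoke Theorem~\ref{abdsemanticsTheo}: since $\ad{\abdset}=1$, a universal representative $(T,\varphi^*)$ is computable in polynomial time in $|I|$, or else $\sol{I}{\abdset}{abd}=\emptyset$, in which case $\vect{t}$ is vacuously certain. The complement of the evaluation problem asks whether some $J\in\repe{T,\varphi^*}$ omits $\vect{t}$ from $\query{q}(J)$. I would guess a valuation $v$ over $\nullsc$ together with valuations $v_1,\dots,v_n$ over $\nullso$, all ranging over $\dom{T}$ plus a polynomially bounded set of fresh constants and respecting $(v,\{v_1,\dots,v_n\})\models\varphi^*$, then verify in polynomial time that the fixed CQ-with-unequalities $\query{q}$ has no satisfying match yielding $\vect{t}$ in $J=v(\bigcup_i v_i(T))$. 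A standard small-model argument bounds $n$ and the number of distinct values needed by the size of $\query{q}$ and $T$, so the complement lies in $\np$ and the problem in $\conp$.

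For hardness I would reduce from the complement of $3$-colorability (NON-3-COL, which is $\conp$-complete) by importing the reduction of~\cite{DBLP:journals/ipl/Madry05}: from a graph $G$ it builds a na\"ive table and a Boolean conjunctive query $\query{q}$ with exactly two unequalities whose certain answer over all valuations of that table holds iff $G$ is not $3$-colorable. My task is to realize this na\"ive table as a density-1 ABD universal representative. I would take $\abdset$ consisting of GAV abds copying the edge relation and, for each vertex, producing a colour null, for instance $Vertex(u)\leftrightarrow\exists c\;\annt{Col}{1}(u,c)$ and $Vertex(u)\leftrightarrow\exists d\;\annt{Mrk}{1}(u,d)$, together with the safe aegd $\annt{Col}{1}(u,c),\annt{Mrk}{1}(u,d)\rightarrow c=d$. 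Each annotated symbol occurs once, so $\ad{\abdset}=1$; the aegd is safe because $u$ sits only in non-affected positions; and $\abdsetright$ is GAV-reducible by Proposition~\ref{gav-reducibility}, so by Proposition~\ref{restabdsemanticsTheo} the representative carries no global condition and $\repe{T}$ is a plain semi-na\"ive table.

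The point of the paired-abd/aegd gadget is that during the annotated chase the two open colour nulls of a vertex are equated and collapsed into a single closed null; hence in $\repe{T}$ each vertex receives exactly one colour under an arbitrary valuation, which is precisely the na\"ive-table possible-world behaviour the imported argument relies on. Consequently $\cert{q}{I_G}{\abdset}{abd}$ holds iff $G\in{}$NON-3-COL, yielding $\conp$-hardness for a density-1 $\abdset$ and a conjunctive query with two unequalities; combined with the upper bound, the problem is $\conp$-complete.

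The hard part will be guaranteeing that the closed-world, semi-na\"ive ABD representation reproduces \emph{exactly} the na\"ive-table worlds the inequality reduction needs. Two points must be discharged: first, that the colour nulls are closed (valuated once) rather than open (multiply instantiated), which the gadget above secures while keeping density~$1$; and second, that the confinement of the effective palette to three colours — the component engineered in~\cite{DBLP:journals/ipl/Madry05} with only two unequalities rather than the na\"ive three — transfers verbatim to $\repe{T}$, and that the closed-world conditions of the ABD-semantics neither delete any colour-assignment world nor impose spurious constraints (in particular that $\varphi^*$ is indeed trivial here). I expect verifying this faithful correspondence of possible worlds to be the crux; once it is in place, the remainder is a routine transcription of the na\"ive-table hardness argument.
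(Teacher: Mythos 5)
Your upper-bound argument (compute the universal representative via Theorem~\ref{abdsemanticsTheo}, then guess valuations witnessing a counter-model) is fine and consistent with how the paper treats membership throughout. The problem is the hardness half, which is where the entire content of the theorem lies, and there you have a genuine gap. You describe the reduction of Madry as one that, ``from a graph $G$, builds a na\"ive table and a Boolean conjunctive query with exactly two unequalities whose certain answer holds iff $G$ is not $3$-colorable,'' and you defer to it the key step of ``confining the effective palette to three colours with only two unequalities rather than the na\"ive three.'' No such component exists in that form: Madry's reduction is from (the complement of) 3CNF satisfiability, not from $3$-colorability, and its two-inequality trick is a path-length encoding of truth values, not a palette-restriction gadget. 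The paper's own proof correspondingly reduces from 3CNF satisfiability, transplanting Madry's relations $P$, $L$, $R$, $N$ into density-$1$ \abds\ and adding auxiliary relations $M$ and $V$ precisely to control which paths the third dependency can create. Your vertex/colour gadget with the \aegd\ $\annt{Col}{1}(u,c),\annt{Mrk}{1}(u,d)\rightarrow c=d$ does correctly manufacture one closed null per vertex, but it gives you no way to force that null to range over only three values, and a conjunctive query with two unequalities cannot by itself assert ``this colour differs from all of $r$, $g$, $b$.'' Since you explicitly flag this as the crux and leave it undischarged, the reduction is not actually established.

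A secondary point: the theorem is stated for a set of \abds\ (the paper's witness uses no \aegds\ and only open nulls), whereas your gadget essentially depends on an \aegd\ to produce closed nulls. That by itself might be repairable, but it signals that your construction is structurally quite different from what the two-inequality encoding actually requires. To fix the proof you would need either to reconstruct Madry's 3SAT path gadget inside a density-$1$ \abd\ mapping (which is what the paper does), or to supply a new, self-contained argument that non-$3$-colorability is expressible as a certain answer of a two-inequality conjunctive query over the worlds of your semi-na\"ive table --- neither of which is present in the proposal.
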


The following example shows that the certain answers under
inference-based semantics (and implicit under ABD-semantics)
 may differ from 
the certain answers under OWA-semantics 
even for $\ucqino$ queries (based on Corollary \ref{owa-inf-ucq}
the semantics agree on \ucq\; queries).

\begin{example}\label{unequalities}
Let
$\Sigma=\{ R(x,y) \rightarrow \exists z S(x,z),V(z,y) \}$
and
source let instance $I$ be $R^{I}=\{ (a,b), (c,d) \}$.
The universal representative for the corresponding \abd\; set  
$\abdset=\{ R(x,y) \leftrightarrow \annt{S}{1}(x,z),\annt{V}{1}(z,y)\}$
and for the given source instance $I$ 
is $(T,\varphi^*)$,
with $S^{T}=\{ (a,\nullo_1), (c,\nullo_2) \}$,
$V^{T}=\{ (\nullo_1,b), (\nullo_2,d) \}$ and
$\varphi^{*}\defined (\nullo_1 \neq \nullo_2)$.
Consider  
$\query{q} \;\defined\; \exists x\; \exists y\; \exists z_1\; \exists z_2\; S(x,z_1),V(z_2,y),z_1\neq z_2$.
It can be verified that $\cert{q}{I}{\Sigma}{owa}=\false$
and $\cert{q}{I}{\Sigma}{inf}=\cert{q}{I}{\abdset}{abd}=\true$.
\end{example}

In \cite{DBLP:journals/corr/abs-1107-1456} Hernich showed that
if the mapping is given by a restricted set of \sttgds\; (packed \sttgds), 
then the certain answers evaluation problem may be 
answered in polynomial time for universal queries under the 
GCWA$^*$-semantics.
Where a universal query is one of the form 
$\query{q}(\vect{x}) \defined \forall \vect{y}\; \beta(\vect{x},\vect{y})$,
with $\beta$  a quantifier-free $\fo$ formula over the target schema.
In our next result we show that similar polynomial
time can be achieved under the ABD-semantics even without 
any restriction on the \sttgds\; and also by adding 
safe target \aegds.

\begin{theorem}\label{universalqueries}
Let $(T,\varphi^{*})$ be a universal representative
for some source instance $I$ and a set $\abdset$, 
 of \abds\; and safe \aegds\;with $\ad{\abdset}=1$.
Then there exists a polynomial time algorithm, 
 with input $(T,\varphi^*)$ and $\vect{t} \subset \cons$,
such that for any universal query $\query{q}$
decides if $\vect{t} \in \bigcap_{J \in \repe{T,\varphi^*}} \query{q}(J)$. 
\end{theorem}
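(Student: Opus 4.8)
The plan is to reduce the certain-answer question to a \emph{possibility} question about the negated query, and then to decide that possibility by a bounded search over instantiation patterns of the semi-na\"ive table. Write $\query{q}(\vect{x})\defined \forall \vect{y}\; \beta(\vect{x},\vect{y})$ with $\beta$ quantifier-free over $\schema{T}$. By definition $\vect{t}\in \bigcap_{J\in \repe{T,\varphi^*}}\query{q}(J)$ fails exactly when some $J\in \repe{T,\varphi^*}$ admits an assignment $\vect{s}$ over $\dom{J}$ with $\neg\beta(\vect{t},\vect{s})$; so it suffices to decide whether the existential formula $\exists \vect{y}\; \neg\beta(\vect{t},\vect{y})$ holds in \emph{some} represented instance, and negate the answer. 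First I would put $\neg\beta(\vect{t},\vect{y})$ into disjunctive normal form. As the query is fixed this produces a constant number of disjuncts, each treated independently (the answer is ``non-certain'' iff at least one disjunct is possible). A single disjunct $D$ is a conjunction of positive atoms $R_i(\vect{u}_i)$, negative atoms $\neg R'_j(\vect{w}_j)$, and a conjunction $\gamma$ of (in)equalities.

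The structural observation driving tractability is that a witnessing instance needs only boundedly many copies of $T$. Recall that every $J\in \repe{T,\varphi^*}$ arises from finitely many open-null valuations $v_1,\dots,v_n$ together with a single closed-null valuation $v$, each copy contributing a full instantiated image of all tuples of $T$ (its own values for the open nulls, the common values $v$ for the closed nulls); hence adding a copy can only enlarge $J$, which can only help the positive atoms of $D$ and only hurt its negative atoms. Consequently, if a witness exists at all, one exists that uses only copies carrying a matched positive atom, so at most as many copies as $D$ has positive literals. I would therefore enumerate all \emph{matching patterns}: for every positive literal of $D$, a choice of template tuple $\tau\in R_i^{T}$ together with one of the constantly many copy indices (distinct positive atoms may share a copy). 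Each pattern induces a partial \emph{forced} valuation---fixing certain closed nulls globally and certain open nulls in their respective copies to prescribed constants---and simultaneously fixes the witness $\theta$ on $\vect{y}$ through the matched positions, variables not occurring in any positive literal being left to be set fresh. There are only polynomially many patterns, since $|D|$ and the number of variables are fixed while the number of template tuples is $|T|$.

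For each pattern I would run three polynomial consistency checks and accept the disjunct iff some pattern passes all three. (i) \emph{Well-definedness and equalities}: the induced forced valuation must be single-valued (no null forced to two distinct constants, which can happen because closed nulls are shared across copies), and the resulting assignment must respect $\gamma$; since any unforced null or witness variable may be set to a globally fresh constant, only forced coincidences can refute an inequality. (ii) \emph{Global condition}: completing the forced valuation with pairwise-distinct fresh constants on all unforced open and closed nulls must satisfy $\varphi^*$; because $\varphi^*$ is a conjunction of disjunctions of \emph{in}equalities, only the forced values can cause a violation, so this is decidable directly from the pattern. (iii) \emph{Avoidance}: each negative atom must be absent from every used copy. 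Setting every unforced null fresh makes an instantiated template $v(v_l(\tau))$ of $R'_j$ coincide with $\theta(\vect{w}_j)$ only if \emph{all} positions of $\tau$ are already forced to the corresponding values of $\theta(\vect{w}_j)$; hence a pattern is rejected precisely when some template of some $R'_j$ in some copy is fully forced to match, a finite check over the constantly many copies and the $|T|$ templates.

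Soundness is immediate: a passing pattern yields explicit valuations $v,v_1,\dots,v_n$ and an assignment $\theta$ producing a genuine $J\in \repe{T,\varphi^*}$ in which $D$, and therefore $\neg\beta(\vect{t},\vect{y})$, holds. The delicate direction is completeness, and I expect this to be the main obstacle: given an arbitrary witnessing pair $(J,\vect{s})$ I must \emph{normalize} it to the form produced by some pattern. The plan is to (a) discard every copy not carrying a matched positive atom, which is legitimate because deleting tuples can neither create a forbidden tuple nor falsify $\gamma$; and (b) replace every value not pinned by a positive match with a globally fresh, pairwise-distinct constant, arguing that this preserves all positive atoms and all inequalities of $\gamma$ and $\varphi^*$, and can only remove---never introduce---collisions with forbidden tuples. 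The subtlety to handle with care is that open nulls are per-copy while closed nulls are \emph{shared}, so a value unforced for one negative atom may be forced by a positive atom elsewhere; the normalization must therefore be driven by the global set of positive matches rather than copy-by-copy. Once this normalization lemma is in place, the enumerated patterns coincide with the normalized witnesses, and since there are polynomially many patterns each checkable in polynomial time (the query being fixed), the overall procedure runs in polynomial time in the size of $(T,\varphi^*)$ and $\vect{t}$.
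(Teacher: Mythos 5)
Your proposal is correct and follows essentially the same route as the paper's proof: negate the universal query into a disjunction of existential conjunctions of literals, reduce certain answering to the possibility of one disjunct, and decide that possibility by enumerating the (polynomially many, boundedly-sized) ways the positive atoms can be matched into copies of $T$ --- your ``matching patterns'' with at most one copy per positive literal are exactly the paper's most general unifiers $(\theta_1,(\theta_2,\{\theta_2^1,\ldots,\theta_2^p\}))$ with $p\leq|\bar{\query{q}}^{+}|$, your forced-match test for negative atoms corresponds to conjoining the negations of their unifiers, and your consistency checks correspond to the final satisfiability test against $\gamma$ and $\varphi^{*}$. The only real difference is presentational: you argue completeness by an explicit fresh-constant normalization of an arbitrary witness, where the paper delegates this to the maximality of mgu's inherited from its unification machinery.
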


From the this theorem and Theorem \ref{abdsemanticsTheo} it directly
follows that the {\sc Eval}$_{\sf ABD}$
problem is polynomial for universal queries.

Let us now take a look at $\cqn$ queries.
Theorem~\ref{coNPforUCQN} showed that the certain answers evaluation
is $\conp$-hard for $\cqn$ queries even if the mapping has the annotation
density 1. Next we will present a subclass of $\cqn$ that
has tractable query evaluation properties for a restricted class of 
\abds\; and safe \aegds.
Let $\cqnp$ denote the subclass of  $\cqn$ such that
each query of this class has exactly one positive atom. 
With this we have the following positive result:

\begin{theorem}\label{conjunctnegation}
Let $\abdset$ be a set of \abds\; and safe \aegds\; with $\ad{\abdset}=1$
and such that $\abdsetright$ is GAV-reducible and 
each \aegd\; does not equate two 
variables both occurring in affected positions.
Then for any $\cqnp$ query
the {\sc Eval}$_{\sf ABD}(\abdset,Q)$
problem is polynomial and  can be decided using 
a universal representative.
\end{theorem}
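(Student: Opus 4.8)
The plan is to first collapse the universal representative to a very simple shape, and then read off the certain answers of the single–positive–atom query by a guarded form of na\"ive evaluation.

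\medskip
\noindent\textbf{Step 1 (structure of the representative).} Since $\ad{\abdset}=1$ and $\abdsetright$ is GAV-reducible, Proposition~\ref{restabdsemanticsTheo} computes in polynomial time a semi-na\"ive table $T$ with $\sol{I}{\abdset}{abd}=\repe{T}$ (or reports $\sol{I}{\abdset}{abd}=\emptyset$, in which case every $\vect{t}$ is vacuously certain). I would then sharpen this table using the two remaining hypotheses. Because no \aegd\; equates two variables both occurring in affected positions, every equality produced by the annotated chase either merges an open null with a non-affected value---which is always a source-controlled constant---or compares two constants; hence the ``$=$'' step never merges two open nulls and \emph{no closed null is ever created}, so $T$ contains only open nulls. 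By Proposition~\ref{gav-reducibility}, GAV-reducibility forces every existential to occur in a single head atom, so each open null of $T$ occurs in exactly one tuple. Consequently (Definition~\ref{naiveTabledef}, with the valuation over $\nullsc$ now trivial) $\repe{T}$ is exactly the family of instances $J=\bigcup_{i=1}^{n}v_i(T)$, i.e. the finite unions of valuations of the tuples of $T$, subject only to the requirement that \emph{each} tuple of $T$ is valuated at least once and that $J$ contains nothing else.

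\medskip
\noindent\textbf{Step 2 (the decision procedure).} Write the query as $Q(\vect{x})\defined \exists \vect{y}\; P(\vect{u})\wedge \neg N_1(\vect{v}_1)\wedge\cdots\wedge\neg N_m(\vect{v}_m)$, its single positive atom being $P(\vect{u})$, and split the witness variables into those occurring in $P$ (call them $\vect{y}_P$) and those occurring only under negation ($\vect{y}_N$). For each target tuple $\tau\in P^{T}$---polynomially many---I would test a \emph{certain-match} condition~(M): $\tau$ carries, at every position where $P(\vect{u}[\vect{t}])$ has a constant or a component of $\vect{t}$, exactly that constant, and $\tau$ is constant-or-equal-null at all positions forced equal by a repeated $\vect{y}_P$-variable. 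Condition~(M) is exactly what guarantees that \emph{every} valuation of $\tau$ is a positive match of $P(\vect{u}[\vect{t}])$; the induced binding of $\vect{y}_P$ is fixed on $\tau$'s constant positions and supplied by the (adversarial) open-null valuation on $\tau$'s null positions. For each surviving $\tau$ I would then run a \emph{negation-avoidance} test~(N): there is \emph{no} assignment to the null-bound part of $\vect{y}_P$, no fully guarded negated atom $N_j$ (one with $\vect{v}_j\subseteq \vect{t}\cup\vect{y}_P$), and no $\rho\in N_j^{T}$, for which the image $N_j(\vect{v}_j)$ unifies with $\rho$ (treating $\rho$'s nulls and the null-bound $\vect{y}_P$ as the unifiable symbols). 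Both tests are polynomial by inspection of $T$, and the algorithm accepts $\vect{t}$ iff some $\tau$ passes both---exactly the ``use a universal representative'' procedure asserted by the theorem.

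\medskip
\noindent\textbf{Step 3 (correctness, and the main obstacle).} Soundness is the easier direction: if some (M)-surviving $\tau$ satisfies~(N), then in every $J\in\repe{T}$ the obligatory valuation of $\tau$ is a positive match, its fully guarded negated atoms are non-realizable---hence absent from every $J$---and every remaining negated atom, carrying a free $\vect{y}_N$-variable, is dodged by instantiating that variable to a constant outside the finite $\dom{J}$ (here I use that certain answers are taken over the infinite domain of constants). The delicate direction, and the step I expect to be the main obstacle, is completeness: I must show that if no $\tau$ passes both tests then an adversary can build a \emph{single} instance $J$ defeating all witnesses at once, despite the alternation in which the open-null valuations---fixing both the available positive matches and the accidental collisions for the negated atoms---are chosen before the witness. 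The single-positive-atom hypothesis is precisely what tames this: the binding of $\vect{y}_P$ depends only on the ground positive tuple chosen, not on which $\tau$ realizes it, so ``poisoning'' a match is a property of that tuple alone and carries no cross-$\tau$ coupling; thus failing~(N) lets the adversary include, for every (M)-passing $\tau$, only poisoned valuations together with the realizing $N_j$-tuples, while an (M)-passing $\tau$ satisfying~(N) cannot be starved, since $T$ forces at least one of its (necessarily safe) valuations into every $J$. It is exactly this decoupling that separates $\cqnp$ from full $\cqn$, whose joins across several positive atoms reintroduce the coupling responsible for the $\conp$-hardness of Theorem~\ref{coNPforUCQN}.
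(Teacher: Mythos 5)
Your overall strategy is the paper's: under the stated hypotheses the universal representative $T$ is a semi-na\"ive table containing only open nulls, each occurring in a single tuple (Step~1 is correct and matches Proposition~\ref{restabdsemanticsTheo} and the remarks after the theorem), and the query is then decided by a per-match test of whether some negated atom can be realized against $T$. The gap sits exactly where you flag ``the main obstacle'': your realizability test (N) does not specify how the open nulls of the witness tuple $\tau$ and of a candidate realizer $\rho$ interact across the several independent valuations $v_1,\dots,v_n$ that build a member of $\repe{T}$, and the two natural readings give different algorithms, one of which is wrong. Take $\abdset=\{P(x)\leftrightarrow \annt{R}{1}(x,z)\}$, $I=\{P(b)\}$, so $T=\{R(b,\nullo_1)\}$, and the $\cqnp$ query $\exists x\exists y\, R(x,y)\wedge\neg R(y,c)$ with $c\neq b$. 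Every $J\in\repe{T}$ is $\{R(b,d_1),\dots,R(b,d_n)\}$; the witness $(b,d_i)$ fails only if $d_i=b$ and $c\in\{d_1,\dots,d_n\}$, and any copy realizing $R(b,c)$ injects the unpoisonable witness $(b,c)$, so the certain answer is \true. If (N) treats the binding of $y$ (the value of $\nullo_1$ in the witness copy) and the occurrence of $\nullo_1$ in $\rho=R(b,\nullo_1)$ as independent unifiable symbols --- the semantically faithful reading, since the realizer may come from a different copy --- then $R(y,c)$ unifies with $\rho$ via $y\mapsto b$, $\nullo_1\mapsto c$, (N) fails, and your algorithm wrongly rejects. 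If instead you identify the two symbols, the algorithm answers correctly here, but your soundness argument collapses: you claim the guarded negated atoms of an (N)-passing $\tau$ are ``non-realizable --- hence absent from every $J$'', yet $R(b,c)$ \emph{is} realizable and may occur in $J$; acceptance is correct only because the realizing copy creates a fresh unpoisonable witness.

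This is not a bookkeeping detail; it is the substance of the theorem. Realizing a poisoning tuple forces a whole extra valuation of $T$ into $J$, which contributes new certain matches of the positive atom that must themselves be poisoned, and one has to prove both that this cascade closes when every matching block fails the test and that it cannot close when some block passes it. The paper does this by running most general unifiers between each (constant-size, by GAV-reducibility) matched block together with \emph{all} of its induced negated-atom images and sub-tables of $T$ (the $\cqn$\_EVAL algorithm and Lemma~\ref{mgu}), which is precisely the machinery that resolves the same-copy versus cross-copy identification you leave implicit. Your observation that the single positive atom decouples the bindings across matches is the right intuition for why the cascade is controllable, but as written neither direction of your correctness argument is established.
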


Intuitively, the restrictions on the mapping language from the previous 
theorem ensure that the universal representative does not have any 
global condition (see Proposition \ref{restabdsemanticsTheo} ), 
it contains only open nulls
and because the query contains only one positive atom it ensures
that the Gaifman-blocks, from the universal representative
that match this atom, are bounded in size by a constant value 
(depending only on $\abdset$).

Table 1 summarize the tractable 
results presented in this paper together with the known tractable 
query evaluation 
results from the other semantics. Note that 
for all semantics we considered only \sttgds\; and target
\egds, even if some of these results also hold for restricted classes of
target \tgds.

\begin{table*}\label{tb1}
\centering
\caption{Complexity of query evaluation}
\begin{tabular}{l|l|l|l|l|l} \hline
     Query/ & $\ucq$ & $\ucqino$  & universal & $\fo$ & $\cqnp$ \\ 
     Semantics &  &  & queries &   \\ 

\hline 
   & \sttgds\;+ & \sttgds\; + & - & -  & -\\
 OWA     & \egds\; \cite{DBLP:journals/tcs/FaginKMP05}  &  
 \egds\;\cite{DBLP:journals/tcs/FaginKMP05} & - &   \\ \hline
      & \sttgds\; + & \sttgds  & - & full \sttgds & - \\
   CWA   &  \egds\; \cite{DBLP:journals/tods/HernichLS11} &  
 \egds\; \cite{DBLP:journals/tods/HernichLS11} & - & 
 \egds \cite{DBLP:journals/tods/HernichLS11}  \\ \hline
      & \sttgds\; + & - & packed & full \sttgds\; + & -\\
  GCWA$^*$    & \egds\; \cite{DBLP:journals/corr/abs-1107-1456} &  & 
\sttgds\; \cite{DBLP:journals/corr/abs-1107-1456} &   \egds\; 
\cite{DBLP:journals/corr/abs-1107-1456} \\ \hline
inference-based/    &  \abdset & \abdset & closed \abdset & full \abdsetright & 
restricted \abdset \\
ABD  &  Theorem \ref{ucqdichotomy} & Theorem \ref{unequality} & Theorem \ref{universalqueries} & Proposition \ref{fultdquery} & 
Theorem \ref{conjunctnegation} \\  \hline
\hline
\end{tabular}
\end{table*}

\section{Conclusions}\label{conclusions}
In this paper we introduced two new 
semantics. One of them (inference-based semantics) 
relying on  \sttgds\; and
target \egds\; language and the second one based on
richer language of annotated bidirectional dependencies.
We showed that the inference-based semantics solves
most of the certain-answers anomalies existing in the 
existing semantics and using the language of \abds\; one may compute
a universal representative that exactly represents the semantics. 

As shown, the language based in \abds\; is much more expressive than 
the one based on \tgds, as we could express the later one using 
only \abds\; with density 1. 
Thus, the work presented here is only the first step for
a full understanding of this new language and semantics.
Even with this, one may be interested in considering 
target \abds, to further increase its expressibility.
For the certain answer semantics it remained an open problem
if one can evaluate certain-answers for any $\cqnp$ queries
in polynomial time for any $\abdset$ and not only 
for the restricted
class of dependencies presented here.

\bibliography{Bibliography}

\newpage
\appendix

\section{ABD Chase}\label{abdchase}

In this section we will describe the chase process for a given
source instance $I$ and $\abdset$ a set of \abds\; and safe \aegds.
The {\em annotated chase} algorithm can be represented as a sequence of 
three steps that will result in computing a pair $(T,\varphi^*)$:
\begin{enumerate}
\item ``$\rightarrow$'' chase;
\item ``\egd'' chase; and
\item ``$\leftarrow$'' chase.
\end{enumerate}

1) The ``$\rightarrow$'' chase  constructs a table $T_1$ and a
tuple-labeling function $\ell$
by chasing the \tgds\; from $\abdsetright$ with $I$, 
similarly to the 
the oblivious chase algorithm \cite{DBLP:conf/kr/CaliGK08} with the
difference that the new nulls created are from $\nullso$
and the labeling function maps the 
generated tuple with the annotation of the atom in the dependency 
that generated that tuple.

2) The ``\egd'' 
chase step takes the table $T_1$ and tuple-labeling function $\ell$
obtained from the previous step and either 
change it into table $T_2$ or will fail the chase algorithm.
Let $\xi_e$ be an \aegd\; of the form:
$\alpha(\vect{x}) \rightarrow x=y$,
where $\alpha$ is a conjunction of annotated atoms from the target schema 
and both variables $x$ and $y$ occur in $\vect{x}$. 
Given a target na\"ive table $T$ and a labeling function $\ell$, 
a trigger $\tau$ for $T$ and $\xi_e$
is a pair $(h,\xi_e)$, where $h$ is a homomorphism 
such that $h(\alpha(\vect{x}))\subseteq T$ respecting the annotations 
given by $\ell$.
An \egd\; chase step with trigger $(h,\xi_e)$
is said to fail, denoted $T \xrightarrow{(h,\xi)}_{\ell} \nullv$,
if $h(x)\neq h(y)$ and both $h(x),h(y)\in \cons$.
A non-failing \egd\; chase step
transforms instance $T$ into $T'$, 
denoted  $T \xrightarrow{(h,\xi)}_{\ell} T'$,
where 
\begin{itemize}
\item $T'$ is equal with $T$, if $h(x)=h(y)$; 
\item $T'$ is obtained from $T$ by replacing each occurrence of $h(x)$ 
and $h(y)$ with constant $a$, if either $h(x)=a$ or $h(y)=a$,
this change is reflected as well to $\ell$;
\item $T'$ is obtained from $T$ by replacing each occurrence of 
$h(x)$ and $h(y)$ with new closed null from $\nullsc$,
otherwise.
This change is reflected as well to $\ell$.
\end{itemize}
\noindent
Starting with $T_1$ the previous steps are applied either until
the table is not changed, or if one egd step fails.
In case it fails we say that the annotated chase fails.
Otherwise, let $T_2$ be the instance obtained in this step.

\medskip
3) The ``$\leftarrow$'' 
chase step takes semi-na\"ive table $T_2$ and labeling function 
$\ell$ computed in previous steps and either fails or outputs the 
global condition $\varphi^*$.
For this given an \abd\; $\xi$ of the form:
$\alpha(\vect{x},\vect{y}) \leftrightarrow \beta(\vect{x},\vect{z})$,
let $\xi^{\leftarrow}$ denotes the following sentence:
$\beta(\vect{x},\vect{z}) \rightarrow \exists \vect{y}\; \alpha(\vect{x},\vect{y})$.
For a set $\abdset$, we denote $\abdsetleft$ the set of annotated 
target-to-source \tgds\; obtained by replacing each bidirectional 
dependency with the unidirectional dependency as mentioned before.
A ``$\leftarrow$'' trigger, for $T_2$ and a target-to-source annotated 
\tgd\; $\xi^{\leftarrow}$ is a pair $(H,\xi^{\leftarrow})$,
where $H$ is a set of homomorphisms $\{ h_1, h_2, \ldots, h_k\}$,
$k$ represents the number of atoms in the body of $\xi^{\leftarrow}$,
such that for each atom $R_i(\vect{x})$ from the body of 
$\xi^{\leftarrow}$ we have
$h_i(R_i(\vect{v})) \subseteq T_2$, the annotation of 
atom $R_i(\vect{v})$ in $\xi$ is in 
$\ell(h_i(R_i(\vect{v})))$
and for all variables $x \in \vect{x}\cup\vect{z}$
we have either:
\begin{itemize}
\item $h_i(x)=h_j(x)$ for all $i,j \in \{1, \ldots,k \}$, or
\item if $h_i(x)\neq h_j(x)$, for some distinct $i,j \in \{1, \ldots,k \}$,
then either  $h_i(x)$ or $h_j(x)$ is a null.
\end{itemize}

For the set $H$ of homomorphisms the mapping $h_H$ is defined
such that for each 
$x \in \vect{x}\cup\vect{z}$:
\begin{equation*}\label{anntchase}
\begin{split}
h_H(x)= 
\begin{cases}
h_1(x),  & \mbox{if } h_1(x)=h_2(x)=\ldots=h_k(x) \\
h_j(x),  & \mbox{otherwise, where $h_j(x)$ is a constant} \\
& \mbox{for some $j \in \{1,\ldots, k \}$,} \\
& \mbox{or $\forall i \in \{1,\ldots, k \}$,  $h_{i}(x)\notin \cons$}
\end{cases}
\end{split}
\end{equation*}

We say that $(H,\xi^{\leftarrow})$ generates
na\"ive table $T'$, 
denoted $T_2 \xrightarrow{(H,\xi^{\leftarrow})} T'$,
if $T'=h'(\alpha(\vect{x},\vect{y}))$, for some 
extension $h'$ of $h_H$ that assigns a new null value
for each variable from $\vect{y}$.
If there is no homomorphism from
table $T'$ into $I$,
then let 
$\varphi_{(H,\xi^{\leftarrow})} \defined \neg (\bigwedge_{x\in \vect{x}} 
\bigwedge_{1\leq i,j \leq k} h_i(x)=h_j(x)  ) $.
If $\varphi_{(H,\xi^{\leftarrow})}$ for some trigger $(H,\xi^{\leftarrow})$  
is a contradiction,
 we say that the 
annotated chase algorithm fails.
Otherwise, let $\varphi^*$ be the conjunctions of all formulae 
$\varphi$
constructed in the previous process,
that is:
\begin{equation*}\label{globalCond}
\begin{split}
\hspace{0cm}\varphi^* \defined \bigwedge_{\xi^{\leftarrow} \in \Sigma^{\leftarrow}}\;\;\;
 \bigwedge_{(H,\xi^{\leftarrow})\mbox{ \tiny{trigger}}} \varphi_{(H,\xi^{\leftarrow})}. 
\end{split}
\end{equation*}

Finally, if the algorithm does not fail it will return pair 
$(T_2,\varphi^*)$.

\bigskip 
\begin{example}\label{anntchaseeg}
Consider the following set $\abdset$:
\begin{equation*}\label{exampleantch}
\begin{split}
\xi_1:& \hspace{2cm} R(x,y) \leftrightarrow \annt{S}{1}(x,z),\annt{S}{2}(y,z),\annt{V}{1}(x,z); \\
\xi_2: & \hspace{0.2cm} \annt{V}{1}(x,z_1),\annt{S}{2}(x,z_2) \rightarrow z_1=z_2.
\end{split}
\end{equation*}

Let source instance $I$ be $R^{I}=\{ (a,b),(c,a) \}$.
The  ``$\rightarrow$'' step will
 construct
$T_1$ and tuple-labeling function $\ell$, with 
$S^{T_1}=\{ (a,\nullo_1):1, (c,\nullo_2):1,
(b,\nullo_1):2, (a,\nullo_2):2 \}$, 
and
$V^{T_1}=\{ (a,\nullo_1):1, (c,\nullo_2):1 \}$.
After applying the  ``\egd'' steps we 
obtain semi-na\"ive table 
$T_2$ and corresponding tuple-labeling function $\ell$, 
where
$S^{T_2}=\{ (a,\nullc_1):1, (c,\nullc_1):1, (b,\nullc_1):2, (a,\nullc_1):2 \}$, 
$V^{T_2}=\{ (a,\nullc_1):1, (c,\nullc_1):1 \}$.
Finally, for ``$\leftarrow$'' step we have
that $(H,\xi^{\leftarrow})$
is a trigger for $T_1$, where
$H=\{ h_1=\{\ x/a, z/\nullc_1 \}, h_2=\{ y/a, z/\nullc_1 \},
h_3=\{ x/a, z/\nullc_1 \} \}$. 
With this we obtain 
$\varphi_{(H,\xi^{\leftarrow})}= \neg ( a=a \wedge \nullc_1=\nullc_1 )$ 
which is 
a contradiction. From this it follows that the 
annotation chase will fail.
Note that if we remove $\xi_2$ from $\abdset$, $T_2$ would be 
equal to $T_1$ and the global condition $\varphi^{*}$ would be 
equivalent to 
$\varphi^* \defined (\nullo_1 \neq \nullo_2) $.
\end{example}

\bigskip 
\section{Sketch Proofs}\label{proofs}
In this section we provide sketch proofs for the main
results presented in the paper. The complete proofs will be
provided in the full version of this paper.

In order to show the proof of Theorem \ref{infer-ABD}
we need to introduce first a few notations.

\begin{definition}{\em \cite{DBLP:conf/pods/GottlobN06}}
The {\em Gaifman graph} $G^I$  for a table $T$ is an undirected
graph with vertex set $\dom{T}\cap \nulls$ and an
edge between two vertices $x$ and $y$ if $x$ and $y$ occurs together
in a tuple of $T$. A {\em block} is a connected set of nulls in $G^T$.
\end{definition}

\begin{definition}
Let $T$ be a table.
A set $\{ T_1,T_2,\ldots, T_n \}$ 
is called a {\em Gaifman partition} of $T$
if the following holds:
\begin{itemize}
\item $\{ T_1,T_2,\ldots, T_n \}$ is a partition of $T$; and
\item for each $x \in \nulls(T)$ there exists a exactly one $i$
such that $x \in \nulls(T_i)$; and
\item if nulls $x$ and $y$ are in the same block of $G^T$, 
then there exists exactly one $i$ such that $x,y \in \nulls(T_i)$; and
\item if $t\in T$ and $t$ does not contain any nulls, 
then there exists $i$ such that $T_i=\{ t \}$. 
\end{itemize}
\end{definition}

Clearly for each table $T$ there exists a unique Gaifman partition
of $T$.

\medskip
\setcounter{theorem}{0}
\begin{theorem}
Let $\Sigma$ be a set of \sttgds\; and safe \egds, then 
there exists a set $\abdset$ of \abds\; and \aegds,
with $\ad{\abdset}=1$, such that
for any instance $I$ we have
$\sol{I}{\Sigma}{inf}=\sol{I}{\abdset}{abd}$.
\end{theorem}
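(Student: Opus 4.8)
The plan is to give an explicit syntactic translation $\Sigma \mapsto \abdset$ and then, for each source instance $I$, to match inference strategies for $(I,\Sigma)$ with tuple-labelings witnessing ABD-membership. First I would build $\abdset$ by \emph{splitting each \sttgd\; along its existential variables}. Given $\xi: \alpha(\vect{x},\vect{y}) \rightarrow \exists \vect{z}\; \beta(\vect{x},\vect{z})$, form the graph on the head atoms of $\beta$ that connects two atoms sharing a variable of $\vect{z}$, and let $\beta_1,\ldots,\beta_m$ be its connected components (every atom over $\vect{x}$ alone being a singleton). For each component $\beta_j$ I introduce the \abd\; $\alpha \leftrightarrow \beta_j$ and annotate every atom occurrence with a globally fresh integer. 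Since no label is reused, every $\annt{R}{i}$ occurs at most once in $\abdsetabd$, so $\ad{\abdset}=1$. Each safe \egd\; of $\Sigma$ is turned into one or more \aegds\; by annotating its target atoms with the labels attached to the generating \abds\; (one \aegd\; per admissible choice when a relation carries several labels); the result stays safe since the original \egd\; is safe and the affected positions are unchanged. This is exactly the transformation illustrated in Example \ref{rewrite}.

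For $\sol{I}{\Sigma}{inf} \subseteq \sol{I}{\abdset}{abd}$ I would take $J$ with a witnessing strategy $\kappa$ and define $\ell$ so that $\ell(t)$ collects every label $i$ of an atom of a component $\beta_j$ with $t = f(\text{that atom})$ for some $f \in \kappa(\xi)$. For a pair $(t,i)$ the minimal witness demanded by Definition \ref{ABD-semantics} is read off the single $f$ responsible for it: put $I' \defined f(\alpha)$ and let $J'_\ell$ be the $f$-image of the one component carrying $i$. The ``$\rightarrow$'' direction of $\abdsetn$ holds because $f(\alpha)\subseteq I'$ yields exactly $J'_\ell$; minimality holds because a single firing of one component produces no superfluous tuple; and the ``$\leftarrow$'' direction is delivered by Condition 3 of Definition \ref{inf-sem}, which forbids any complete head-image inside $J_{\kappa,\xi}$ that both carries a weakly inferred tuple and has its body outside $I$. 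Conditions 1 and 2 guarantee respectively that every tuple is labeled and that every source trigger fires, so no required witness is missing, while Condition 4 gives satisfaction of the \aegds.

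The converse $\sol{I}{\abdset}{abd} \subseteq \sol{I}{\Sigma}{inf}$ runs the same dictionary backwards. Given $J$ and a labeling $\ell$, each minimal witness $(I',J'_\ell)$ of a labeled tuple is, by subset-minimality, the image of exactly one firing of the \sttgd\; $\xi$ from which its component came; that firing fixes a function on $\vect{x}\cup\vect{y}\cup\vect{z}$, and collecting these functions componentwise defines $\kappa(\xi)$. Conditions 1, 2 and 4 then transfer directly from the labeling requirement, the ``$\rightarrow$'' direction, and the \aegds. For Condition 3 I would argue contrapositively: a violating $f$ with $f(\beta)\subseteq J_{\kappa,\xi}$, $f(\alpha)\not\subseteq I$, and a weakly inferred tuple would furnish, in the corresponding minimal witness, a complete $\beta_j$-match occupying an affected position but lacking its body in $I'$, contradicting the ``$\leftarrow$'' direction of that \abd\; on the witness.

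I expect the principal difficulty to be the precise alignment of Condition 3 with the conjunction of the ``$\leftarrow$'' direction and the subset-minimality clause of ABD-satisfaction: Condition 3 is stated globally over $J_{\kappa,\xi}$ and hinges on the weak/strong distinction, whereas ABD-satisfaction is stated locally over minimal witnesses. The existential-component decomposition is what reconciles the two, since it ensures that a weakly inferred tuple of $\Sigma$ always falls in a component whose existential variable occupies a genuine affected position, so that the per-component witness sees exactly the head-image whose backward closure Condition 3 governs. A secondary point I would verify separately is that the \emph{safety} of the \egds, carried over to the \aegds, is precisely what lets these equality constraints be tested on the per-firing witnesses rather than on all of $J$ without changing the admitted instances.
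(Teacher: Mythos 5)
Your construction is exactly the paper's: the paper splits each \sttgd's head via the Gaifman partition of $\beta(\vect{x},\vect{z})$ (universal variables as constants, existential variables as nulls), which is precisely your connected-component decomposition along shared existential variables, with fresh annotations per atom giving $\ad{\abdset}=1$, and \egds\ annotated with the labels already used for each relation. The paper leaves the two inclusions as a ``simple exercise,'' whereas you sketch the strategy-to-labeling dictionary explicitly, so your proposal matches and in fact elaborates the paper's proof.
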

\begin{proof}
Construct $\abdset$ from $\Sigma$ as follows:
for each \tgd\; $\xi \in \Sigma$, where 
$\xi: \alpha(\vect{x},\vect{y})\rightarrow \exists \vect{z}\;\beta(\vect{x},\vect{z})$,
add \abds\;
$\alpha(\vect{x},\vect{y}) \leftrightarrow \beta'_i(\vect{x},\vect{z})$,
where $\beta_i(\vect{x},\vect{z})$ is a table from the Gaifman 
partition of $\beta(\vect{x},\vect{z})$ (where each variable from $\vect{x}$
is treated as a constant and variables from $\vect{z}$ are treated as 
nulls) and
$\beta'_i$ is obtained from $\beta_i$ by annotating each relation 
with a distinct integer.
For each \egd\; $\xi \in \Sigma$, where 
$\xi: \alpha(\vect{x},\vect{y})\rightarrow x=y$
add \aegds\;
$\alpha_i(\vect{x},\vect{y})\rightarrow x=y$
to $\abdset$, where $\alpha_i$ is obtained from $\alpha$
by annotating each relation with annotations used in \abds\; for the same 
relation. Note that if an \egd\; from $\Sigma$ contains a relation name not 
occurring in any \tgds, then that \egd\; will not be reflected in $\abdset$.
Because each \tgds\; is mapped to a set of \abds\; in $\abdset$
based on the Gaifman partitioning it follows that 
there always exists a solution for any source instance $I$ and the 
constructed $\abdset$, of course if the equality dependencies are satisfied.
From here it is a simple exercise to verify that all properties
of the inference-based semantics are fulfilled by $\abdset$ under 
ABD-semantics. Thus, it follows that for any source instance $I$,
$\sol{I}{\abdset}{abd}=\sol{I}{\Sigma}{inf}$ 
\end{proof}

\bigskip
For the next theorem we need the following result.

\medskip
\begin{lemma}\label{ThreeTwoColorable}
If a graph is 3-colorable but not 2-colorable, then for any 3 coloring of 
the graph there exists at least an edge between any two distinct colors.
\end{lemma}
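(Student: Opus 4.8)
The plan is to argue by contradiction, exploiting the fact that the absence of an edge between two color classes would let us merge them. First I would fix an arbitrary proper $3$-coloring and let $C_1, C_2, C_3$ denote its three color classes; each $C_k$ is an independent set by properness. As a preliminary step I would verify that all three classes are nonempty: if some $C_k$ were empty, the coloring would use at most two colors and would itself witness $2$-colorability, contradicting the hypothesis that the graph is not $2$-colorable.

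The core step is then the following merging argument. Suppose, toward a contradiction, that for some pair of distinct colors---say $1$ and $2$---there is no edge joining a vertex of $C_1$ to a vertex of $C_2$. Since $C_1$ and $C_2$ are each independent and no edge runs between them, the union $C_1 \cup C_2$ is itself an independent set. Recoloring every vertex of $C_2$ with color $1$ therefore yields a proper coloring that uses only colors $1$ and $3$, i.e., a proper $2$-coloring of the graph, contradicting the assumption. Hence between every pair of distinct colors at least one edge must occur; and since the fixed $3$-coloring was arbitrary, the conclusion holds for every $3$-coloring.

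I do not expect a genuine obstacle here: the statement reduces to the elementary observation that two color classes with no edge between them can be fused without violating properness. The only point requiring a moment's care is to phrase the fusion argument for an arbitrary pair among the three colors rather than a single fixed pair, which I would handle by the symmetric ``without loss of generality'' reduction used above, applying the same reasoning to each of the pairs $\{1,2\}$, $\{1,3\}$, $\{2,3\}$.
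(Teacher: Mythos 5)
Your proof is correct. The paper states this lemma without giving any proof at all, so there is no argument of the authors' to compare against; your class-merging argument (if no edge joins $C_1$ and $C_2$ then $C_1\cup C_2$ is independent, so recoloring $C_2$ with color $1$ yields a proper $2$-coloring, contradicting non-$2$-colorability) is the standard and presumably intended one, and your preliminary observation that all three color classes must be nonempty correctly disposes of the degenerate case.
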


\medskip
With this we can now prove the first dichotomy theorem.

\medskip
\setcounter{theorem}{1}
\begin{theorem}
Let $\abdset$ be a set of \abds\; and safe \aegds, then {\sc Sol-Existence}$_{\sf ABD}(\abdset)$ problem:
\begin{itemize}
\item can be solved in polynomial time if $\ad{\abdset}=1$,
\item is in $\np$ if $\ad{\abdset}>1$ and there exists $\abdset$
with $\ad{\abdset}=2$ such that the problem is $\np$-hard.
\end{itemize}
\end{theorem}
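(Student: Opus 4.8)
The plan is to prove the three assertions in turn, leaning on the annotated chase of Appendix~\ref{abdchase} and on Theorems~\ref{abdsemanticsTheo} and~\ref{solcheck-Theo}. For the polynomial bound, first I would run the annotated chase on the input $I$ and the fixed $\abdset$. When $\ad{\abdset}=1$ every annotated symbol $\annt{R}{i}$ occurs at most once in $\abdsetabd$, so the label carried by a generated tuple pins down uniquely which \abd\ produced it, and hence its unique minimal justification; thus the ``$\rightarrow$'' and ``\egd'' steps are deterministic and halt after polynomially many steps with a semi-na\"ive table $T$, while the ``$\leftarrow$'' step either fails or returns $\varphi^*$. By Theorem~\ref{abdsemanticsTheo}, a successful chase gives $\sol{I}{\abdset}{abd}=\repe{T,\varphi^*}$ and a failing chase gives $\sol{I}{\abdset}{abd}=\emptyset$; moreover a successful chase yields a \emph{nonempty} representative, since no conjunct of $\varphi^*$ is then a contradiction and mapping all nulls of $T$ to pairwise distinct fresh constants satisfies $\varphi^*$. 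Hence $\sol{I}{\abdset}{abd}\neq\emptyset$ iff the chase succeeds, which is decided in polynomial time.

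For the $\np$ upper bound (at any density) I would use a guess-and-check argument. Every tuple of a solution must be the head of some \abd\ under a homomorphism of its body into $I$, and there are only polynomially many such homomorphisms; drawing the existential witnesses from a pool of polynomially many fresh constants therefore suffices, so whenever the semantics is nonempty it already contains an instance $J$ of size polynomial in $|I|$. I would then guess such a $J$ together with a tuple-labeling $\ell$ and verify $(I,J)\models_{\ell}\abdset$, i.e.\ solve {\sc Sol-Check}$_{\sf ABD}$, which is in $\np$ by Theorem~\ref{solcheck-Theo}; absorbing its certificate into the guess keeps the whole procedure in $\np$.

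For $\np$-hardness it suffices to exhibit one $\abdset$ with $\ad{\abdset}=2$, and I would reduce from graph $3$-colorability restricted to graphs that are $3$- but not $2$-colorable (still $\np$-complete, as bipartiteness is polynomial-time testable). I would fix an $\abdset$ in which a single target relation is annotated by a single label appearing in two distinct \abds\ (so $\ad{\abdset}=2$ and even $\ac{\abdset}=1$), encoding the per-vertex choice of a colour, together with a safe \aegd\ whose ``\egd'' chase is forced to equate two distinct constants exactly when an edge is monochromatic; then $\sol{I}{\abdset}{abd}\neq\emptyset$ iff the encoded graph is properly $3$-colorable. The main obstacle is making this correspondence tight in both directions: one must ensure that the minimal-justification and backward (``$\leftarrow$'') conditions of the ABD-semantics admit no unintended labeling for a non-$3$-colorable graph, while every proper colouring does yield a genuine solution. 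This is exactly where Lemma~\ref{ThreeTwoColorable} enters, guaranteeing that in any proper $3$-colouring of such a graph every pair of colours is realised on some edge, so that all the justification tuples built from a colouring are themselves inferred from the source and the encoding.
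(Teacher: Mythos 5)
Your arguments for the first two items track the paper's proof closely: tractability for $\ad{\abdset}=1$ via the annotated chase and Theorem~\ref{abdsemanticsTheo} (with the observation that $\varphi^*$ is always satisfiable by sending nulls to fresh distinct constants), and $\np$-membership by guessing $J$ together with a tuple-labeling $\ell$ and checking $(I,J)\models_{\ell}\abdset$. Those parts are fine.

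The hardness argument, however, has a genuine gap. You propose to enforce properness of the colouring with ``a safe \aegd\ whose \egd\ chase is forced to equate two distinct constants exactly when an edge is monochromatic.'' Any such \aegd\ must detect that the colour assigned to $x$ equals the colour assigned to $y$ for an edge $(x,y)$, i.e.\ its body must contain a variable occurring twice in the colour attributes of the colour-assignment atoms. But in your encoding the colour of a vertex is necessarily an \emph{existential} witness of an \abd\ (that nondeterministic choice is the whole source of hardness), so those colour attributes are affected positions in $\aff{\abdset}$, and Definition~\ref{safe-egd} then rules the \aegd\ out as unsafe. Since the theorem is stated for \abds\ and \emph{safe} \aegds, your gadget is not admissible, and an \aegd\ that avoids joining on affected positions cannot see whether an edge is monochromatic. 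The paper sidesteps this entirely: its reduction uses no \aegd. It adds a source relation $D$ containing exactly the pairs of \emph{distinct} colours from $\{r,g,b\}$ and the \abd\ $D(z,v)\leftrightarrow \annt{B}{1}(x,z),\annt{C}{1}(y,z')\ldots$ (with an edge atom), so that the backward ``$\leftarrow$'' direction forces every edge to realise a colour pair listed in $D$ --- hence a proper colouring --- while Lemma~\ref{ThreeTwoColorable} guarantees the forward ``$\rightarrow$'' direction is satisfiable for non-bipartite $3$-colourable graphs. You correctly identified the role of that lemma and the need to pre-test $2$-colourability, but the mechanism enforcing ``adjacent vertices get different colours'' is the crux of the construction, and the one you sketch is incompatible with the safety restriction; you would need to replace it with something like the paper's backward-implication trick.
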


\begin{proof}
In case $\ad{\abdset}=1$, from Theorem \ref{abdsemanticsTheo} we know 
that if for a source instance $I$ the universal representative $(T,\varphi^*)$
exists, then $\repe{T,\varphi^*}=\sol{I}{\abdset}{abd}$.
From the construction of $(T,\varphi^*)$ we know that $\varphi^*$ is always 
satisfiable (just assign new distinct constant values for each null).
Thus, the problem  resumes itself to find if such universal 
representative exists. 
This can be done in polynomial time in the size of $I$ using
the annotated chase algorithm from Section \ref{abdchase}.

For the second part it is clear that the solution existence 
problem is in $\np$ for any $\ad{\abdset}>1$
as one may guess a target instance $J$ and labeling function $\ell$
that labels each tuples with a set of integer, the values from the 
set being restricted from a constant 
set given by $\abdset$.
Then test in polynomial time that $(I,J)\models_{\ell} \abdset$.

For the completeness part  we will use a 
reduction from the graph 3-colorability problem known to be 
$\np$-complete.
We could not use directly the reduction from 
\cite{DBLP:conf/amw/ArenasD014} because that reduction 
has $\ad{\abdset}=4$. Our reduction not only has
$\ad{\abdset}=2$ but $\ac{\abdset}=1$.
The restriction that $\ac{\abdset}=1$ is important 
in order to show that the annotation cardinality 
does not influence the solution existence problem.
For this let us consider the following set of
\abds\; $\abdset$:
\begin{equation*}\label{eq1:eg:bidirect3}
\begin{split}
&\hspace{0.95cm}  V(x)  \leftrightarrow \annt{B}{1}(x,v),\annt{C}{1}(x,v);\\
&\hspace{0.5cm}  E_0(x,y)  \leftrightarrow \annt{E}{1}(x,y); \\
&\hspace{0.65cm}  D(z,v)  \leftrightarrow \annt{B}{1}(x,z),\annt{C}{1}(y,v),\annt{E}{1}(x,y).
\end{split}
\end{equation*}

\medskip
To this consider source instance $I$ that represents under 
unary relation $V$ all the vertexes of a graph $G$,
under binary relation $E_0$ all the edges in the graph. 
And finally, under binary relation $D$ we'll have all the 
combinations of two distinct colors from the set $\{r,g,b \}$.
Based on the previous lemma, we know that if the graph is 3-colorable
but not 2-colorable, there will be an edge between any 2 distinct colors,
thus the third \abd\; holds if the graph is 3-colorable but not 2-colorable.
This means that our reduction needs to verify first in polynomial time 
if graph $G$ is 2-colorable. If it is, then return \true.
If it is not 2-colorable we create source instance $I$ from $G$ as
mentioned before. 
With this and Lemma~\ref{ThreeTwoColorable}
it is obvious that the graph $G$, not 2-colorable 
is 3-colorable iff the solution existence problem returns \true.
From this it follows that there exists $\abdset$ with $\ad{\abdset}=2$ 
for which the problem is $\np$-complete
\end{proof}

\medskip
\setcounter{theorem}{2}
\begin{theorem}
Let $\abdset$ be a set of \abds\; and safe \aegds, then
the {\sc Sol-Check}$_{\sf ABD}(\smap{M})$ problem:
\begin{itemize}
\item can be solved in polynomial time if $\ac{\abdset}=1$, 
\item is in $\np$ if  $\ac{\abdset}>1$
and there exists a $\abdset$
with $\ac{\abdset}=2$  
such that the problem is $\np$-hard.
\end{itemize}
\end{theorem}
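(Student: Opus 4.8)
The plan is to prove all three parts uniformly by first reducing the membership question to verifying a \emph{single} tuple-labeling function, and then locating the combinatorial difficulty entirely in the choice of that function. First I would note that a label $i$ placed by $\ell$ on a tuple $R(\vect{a})$ can only be drawn from $\annot(\abdsetabd,R)$: if $i\notin\annot(\abdsetabd,R)$ then $R_i$ either does not occur in $\abdsetn$ at all, or occurs only inside \aegds (which, being equality constraints, generate no atoms); in either case the singleton $\{R_i(\vect{a})\}$ admits the strictly smaller model $\emptyset$ of $\abdsetn$, so $R_i(\vect{a})$ lies in no minimal model and the justification clause of $(I,J)\models_{\ell}\abdset$ fails. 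Hence every value $\ell(R(\vect{a}))$ is a non-empty subset of $\annot(\abdsetabd,R)$, a set of size at most $\ac{\abdset}$. When $\ac{\abdset}=1$ this forces $\ell(R(\vect{a}))$ to be the \emph{unique} singleton $\annot(\abdsetabd,R)$, so there is exactly one candidate $\ell$ and the check collapses to testing $(I,J)\models_{\ell}\abdset$ for this forced $\ell$. When $\ac{\abdset}>1$ there are at most $2^{\ac{\abdset}}-1=O(1)$ choices per tuple, so $\ell$ is of size polynomial in $J$ and can be guessed.

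Second I would show that, for a \emph{fixed} $\ell$, the predicate $(I,J)\models_{\ell}\abdset$ is decidable in polynomial time; this single fact yields both positive claims at once. Building $J_{\ell}$ over $\schema{T_{\diamond}}$, it suffices to decide, for each atom $t=R_i(\vect{a})\in J_{\ell}$, whether some $I'\subseteq I$ and $J'_{\ell}\subseteq J_{\ell}$ with $t\in J'_{\ell}$ form a minimal model of $\abdsetn$. Because $\abdset$ is fixed, such a minimal model is obtained by justifying $t$ through the ``$\leftarrow$'' direction of one \abd whose head matches $t$ (which selects the remaining atoms of that $\beta$-image inside $J_{\ell}$ together with the witnessing $\alpha$-image inside $I$), and then closing the selected source and target atoms alternately under the ``$\rightarrow$'' and ``$\leftarrow$'' directions while keeping the selection minimal, finally verifying the \aegds. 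Since source and target atoms alternate and each closure step is a bounded-arity homomorphism test against the given finite instances, a fixpoint is reached in polynomially many steps, and only constantly many head-matches for $t$ need to be tried. Thus verification is polynomial: with $\ac{\abdset}=1$ this is a deterministic polynomial algorithm on the forced $\ell$, and with $\ac{\abdset}>1$ it gives an \np\ algorithm (guess $\ell$, then verify), establishing $\np$-membership.

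For the hardness I would exhibit a fixed $\abdset$ with $\ac{\abdset}=2$ and $\ad{\abdset}=1$ and reduce from the NP-complete \emph{set splitting} problem (equivalently monotone not-all-equal $3$SAT): given a family of three-element sets over a vertex set $W$, decide whether $W$ admits a $2$-colouring splitting every set. The two annotations on a target relation $U$ would encode the two colours, a vertex $w$ being coloured by the forced label of the tuple $U(w)$ placed in $J$; each set $c=\{w_1,w_2,w_3\}$ would be represented by a small bundle of annotated atoms over a fresh relation, placed in $J$ so that, for a given vertex colouring, every atom of the bundle lies in a minimal model of $\abdsetn$ \emph{iff} the three labels are not all identical, the ``$\leftarrow$'' directions of the clause \abds supplying source witnesses in $I$ exactly for the mixed-colour patterns. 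With $I$ and $J$ built this way, $J\in\sol{I}{\abdset}{abd}$ holds iff a splitting colouring exists. The main obstacle is precisely this gadget: one must simultaneously (i) force each vertex label to be a \emph{singleton} (rejecting $\{1,2\}$, e.g.\ through a safe \aegd or by ensuring the only justifiable vertex patterns are single-labelled), (ii) encode the non-monochromatic condition through justifiability rather than ordinary satisfaction, and (iii) keep every annotated relation symbol occurring at most once per label in each \abd so that $\ad{\abdset}=1$ while $\ac{\abdset}=2$. Combining this with the $\np$-membership argument yields $\np$-completeness for a set with $\ac{\abdset}=2$, completing the dichotomy.
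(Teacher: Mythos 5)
Your membership arguments match the paper's in substance: when $\ac{\abdset}=1$ the labeling is forced (each tuple over $R$ must carry exactly the unique annotation in $\annot(\abdsetabd,R)$, since a label outside that set can never sit in a minimal model of $\abdsetn$), and when $\ac{\abdset}>1$ one guesses $\ell$ and verifies. Your justification that labels must come from $\annot(\abdsetabd,R)$ is actually more careful than the paper's one-line claim, and your fixpoint sketch for deciding $(I,J)\models_{\ell}\abdset$ for a fixed $\ell$ fills in a step the paper merely asserts. So far, so good.

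The genuine gap is in the hardness part: you do not actually give the reduction. You name a source problem (set splitting / monotone NAE-3SAT), describe what the gadget \emph{should} achieve, and then explicitly list as ``the main obstacle'' the three properties the gadget must simultaneously satisfy --- forcing singleton labels, encoding non-monochromaticity via justifiability, and keeping $\ad{\abdset}=1$ while $\ac{\abdset}=2$ --- without resolving any of them. A proof of $\np$-hardness stands or falls on that construction, so as written the lower bound is unproved. For comparison, the paper reduces from graph 3-colorability with a concrete three-rule mapping: $D(z,v)\leftrightarrow \annt{B}{1}(x,z),\annt{C}{1}(y,v),\annt{E}{1}(x,y)$ together with $V(x,v)\leftrightarrow \annt{B}{2}(x,v)$ and $V(x,v)\leftrightarrow \annt{C}{2}(x,v)$. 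The trick that dissolves your obstacle (i) is that the source relation $V$ contains \emph{every} (vertex, color) pair, so annotation $2$ unconditionally justifies the presence of all $B$- and $C$-tuples in $J$; the coloring is then read off from which tuples \emph{additionally} carry label $1$, and there is no need to forbid the label set $\{1,2\}$ at all. The bidirectional first rule, read right-to-left, forces every label-$1$ combination along an edge to land on a pair of \emph{distinct} colors in $D$, and read left-to-right (with Lemma~\ref{ThreeTwoColorable}) it is satisfiable exactly when a proper 3-coloring exists. If you want to salvage your route, you would need to exhibit the clause bundle explicitly and prove both directions of the equivalence; alternatively, adopt the paper's device of over-populating a source relation so that one annotation provides blanket justification while the other carries the combinatorial choice.
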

\begin{proof}
In case $\ac{\abdset}=1$, the solution-check problem is the same
as equivalent to check if the instance $J$ together with $I$ 
satisfies a set of $\fo$ formulae, problem known to be tractable 
for a fix set of formulae.
For the second part, given target instance $J$, 
one may guess in polynomial 
time a labeling function $\ell$ for $J$ 
that takes values from
a fixed set of integers determined by $\abdset$. 
It can be verified in polynomial time, in the size of $I$,
if $(I,J) \models_{\ell} \abdset$.
For the completeness we will use a reduction from the 
graph 3-colorability problem as follows. Let $\abdset$
be:
\begin{equation*}\label{eq1:eg:bidirect4}
\begin{split}
&\hspace{0cm} D(z,v) \leftrightarrow \annt{B}{1}(x,z),\annt{C}{1}(y,v),\annt{E}{1}(x,y); \\
&\hspace{0cm} V(x,v) \leftrightarrow \annt{B}{2}(x,v); \\
&\hspace{0cm} V(x,v) \leftrightarrow \annt{C}{2}(x,v). 
\end{split}
\end{equation*}

\medskip
\noindent
Note that $\ac{\abdset}=2$ and $\ad{\abdset}=1$.
Consider source instance $I$ representing binary relation $D$ 
with a pair of each distinct color from the set  $\{r,g,b \}$.
For each vertex $x$ of the graph $G$ and each color $c$ 
from $\{ r,g,b \}$
 binary relation $V$ will contain tuple $(x,c)$.
Finally, target instance $J$ will contain under relations $B$ and $C$
the same tuples as $V^I$ and binary relation $E$ will contain the 
edges from the graph.
It can be verified that 
if there exists a labeling function
$\ell$ such that $(I,J)\models_{\ell} \abdset$,
then
the tuples in $B$ which labeling contains integer $1$ represent 
the 3-coloring mapping for graph $G$. Clearly the converse is also true.
From this it follows that graph $G$ is 3-colorable iff 
$J\in \sol{I}{\abdset}{abd}$
\end{proof}

\medskip
\setcounter{theorem}{3}
\begin{theorem}
Let
$\abdset$ be a set of \abds\; and safe \aegds\; with $\ad{\abdset}=1$.
Then either 
there exists $(T,\varphi^*)$, computable in polynomial time in the 
size of $I$, such that $\sol{I}{\abdset}{abd}=\repe{T,\varphi^*}$,
or $\sol{I}{\abdset}{abd}=\emptyset$.
\end{theorem}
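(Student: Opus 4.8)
The plan is to compute $(T,\varphi^*)$ directly via the three-stage annotated chase of Appendix~\ref{abdchase} and to show that its output is a universal representative for $I$ and $\abdset$. First I would run the ``$\rightarrow$'' chase of $I$ against $\abdsetright$: since every body of an \abd\; is over the source schema, each dependency fires on only polynomially many triggers in $|I|$, each exactly once in the oblivious manner, so this stage terminates in polynomial time and produces a table $T_1$ together with a tuple-labeling function $\ell$ that records, for each generated fact, the annotation of the head atom that produced it. Next I would apply the ``$=$'' stage, repeatedly merging nulls forced equal by a safe \aegd: coalescing two distinct constants signals failure, whereas merging an open null with another open or with a closed null yields a fresh closed null from $\nullsc$; since every such step strictly decreases the number of distinct nulls, the stage is polynomial and yields $T_2$. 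Finally the ``$\leftarrow$'' stage examines, for every dependency in $\abdsetleft$ and every trigger over the polynomially sized $T_2$, whether the induced source fact already maps homomorphically into $I$; when it does not, the corresponding disjunction of (in)equalities is conjoined into $\varphi^*$, and a contradictory conjunct signals failure.

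Then I would prove the dichotomy. If either stage fails I would show $\sol{I}{\abdset}{abd}=\emptyset$: a failing \aegd\; forces two fixed constants to coincide under every labeling that satisfies the diamond dependencies $\abdsetn$, and a contradictory $\varphi^*$-conjunct exhibits a ``$\leftarrow$'' trigger whose firing would infer a source fact outside $I$ under every admissible valuation, so no target instance can be inferred from $I$. If the chase succeeds I would establish both inclusions of $\sol{I}{\abdset}{abd}=\repe{T,\varphi^*}$. For soundness, given $J=v(\bigcup_i v_i(T))$ with $(v,\{v_i\})\models\varphi^*$, I would transport $\ell$ from $T$ onto $J$ and, for each labeled fact $R(\vect{a})$ and each $i\in\ell(R(\vect{a}))$, exhibit the minimal model of $\abdsetn$ supplied by the chase trigger that created the matching $T$-fact, while $\varphi^*$ secures the ``$\leftarrow$'' direction; here $\ad{\abdset}=1$ is precisely what makes this minimal model unique, hence the induced labeling well defined.

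The main obstacle will be completeness: every $J\in\sol{I}{\abdset}{abd}$, witnessed by some labeling $\ell$, must be shown to lie in $\repe{T,\varphi^*}$. The crux is to use $\ad{\abdset}=1$ to pair each minimal justification of a labeled fact of $J$ with a \emph{unique} chase fact and annotation, so that the null positions of $T$ can be read back as a valuation of $J$. Closed nulls, which arise exactly from AEGD merges, must receive a single global value under $v$, because the \aegd\; that merged them forces those witnesses to coincide in every solution; open nulls, never so constrained, may be reused with distinct $v_i$ across the finitely many justifications, which is just the $n$-fold union defining the semantics of a semi-na\"ive table. The delicate verification is that the resulting valuation satisfies $\varphi^*$ and nothing stronger: each recorded conjunct mirrors a genuine ``$\leftarrow$'' obligation that every solution must meet, while safety of the \aegds\; together with $\ad{\abdset}=1$ prevents a merged closed null from sitting in an affected position in a way that would over-constrain the valuation, so the characterization is tight. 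Polynomial-time computability of $(T,\varphi^*)$ then follows from the per-stage bounds established above.
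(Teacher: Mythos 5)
Your proposal is correct and takes essentially the same route as the paper: the paper's own proof is a three-sentence appeal to the correctness of the three-stage annotated chase of Appendix~\ref{abdchase}, which is exactly the construction you describe. Your write-up is in fact considerably more detailed than the paper's sketch, correctly isolating the two points the paper leaves implicit --- that failure of any stage forces $\sol{I}{\abdset}{abd}=\emptyset$, and that $\ad{\abdset}=1$ is what lets each justification in an arbitrary solution be matched to a unique chase fact so that both inclusions of $\sol{I}{\abdset}{abd}=\repe{T,\varphi^*}$ go through.
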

\begin{proof}
The result follows from the construction of $(T,\varphi^*)$
in the annotated chase algorithm presented in Section~\ref{abdchase}.
Thus,  in case the  algorithm
 fails, then there is no
target instance part of the ABD-semantics. 
In case the 
annotated chase algorithm returns 
table $(T,\varphi^*)$, then $\repe{T,\varphi^*}=\sol{I}{\abdset}{abd}$
\end{proof}

\medskip 
\setcounter{proposition}{0}
\begin{proposition}
A non-redundant set $\Sigma$ of \sttgds\; is GAV-reducible
iff for each \tgd\; in $\Sigma$
every existentially quantified variable occurs only in one atom in the
head of the \tgd.
\end{proposition}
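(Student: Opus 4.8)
The plan is to prove the two implications separately, working directly from the logical‑equivalence definition of GAV‑reducibility: two sets of \sttgds\ are equivalent exactly when they have the same models over pairs $(I,J)$ with $I$ a source and $J$ a target instance. The easy direction is sufficiency. Assume each \tgd\ $\xi:\alpha(\vect{x},\vect{y})\rightarrow \exists \vect{z}\;\beta(\vect{x},\vect{z})$ in $\Sigma$ has the property that every existential variable $z\in\vect{z}$ occurs in exactly one atom of $\beta$. Writing $\beta=\beta_1\wedge\dots\wedge\beta_k$ for its atoms and letting $\vect{z}_j$ be the existential variables occurring in $\beta_j$, the hypothesis says the $\vect{z}_j$ partition $\vect{z}$, so the existential quantifier distributes over the conjunction: $\exists\vect{z}\;\bigwedge_j\beta_j \equiv \bigwedge_j\exists\vect{z}_j\;\beta_j$. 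Hence $\xi$ is logically equivalent to the collection $\{\,\alpha\rightarrow\exists\vect{z}_j\;\beta_j : 1\le j\le k\,\}$ of GAV \sttgds, and taking the union over all $\xi\in\Sigma$ yields the required $\Sigma'$.

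For necessity I would argue the contrapositive: if some $\xi\in\Sigma$ has an existential variable $z_0$ occurring in two distinct head atoms $\beta_a,\beta_b$, then no set of GAV \sttgds\ is equivalent to $\Sigma$. The central structural observation is that the body of any \sttgd\ is over the source schema only; therefore, when a source instance is chased by a set of GAV \sttgds, each target null is created by a single firing and placed into a single (single‑atom) head, and since no later body can ever match a target null, it never reappears in another target tuple. Consequently a set of GAV \sttgds\ can never force two distinct target tuples to share a fresh witness value, whereas $\xi$ demands exactly such a shared witness for $z_0$ across $\beta_a$ and $\beta_b$.

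To exploit this I would take the frozen body of $\xi$ as a source instance $I$ (each variable of $\vect{x},\vect{y}$ becoming a distinct fresh constant) and consider a minimal solution $M$ of $\Sigma$ on $I$, in which $\beta$ appears with $\vect{z}$ mapped to fresh witnesses; in particular the images of $\beta_a$ and $\beta_b$ share the witness $n_0$ of $z_0$. Suppose for contradiction that $\Sigma\equiv\Sigma'$ for some GAV set $\Sigma'$. I would then form the \emph{split} instance $J$ obtained from $M$ by replacing $n_0$ with two distinct fresh constants, one in $\beta_a$ and one in $\beta_b$. By construction no single value witnesses all of $\beta$ for the only available match of $\alpha$, so $(I,J)\not\models\xi$ and $J$ is not a solution of $\Sigma$. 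The crux is then to show $(I,J)\models\Sigma'$: since every \tgd\ of $\Sigma'$ is GAV, its satisfaction is checked one target atom at a time, and splitting a value that sits only in existential (hence unconstrained) positions leaves each individually required atom still present in $J$; thus $J$ is a solution of $\Sigma'$, contradicting $\Sigma\equiv\Sigma'$.

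The main obstacle is this last step — verifying that the split preserves satisfaction of every GAV \tgd\ of $\Sigma'$ while destroying the cross‑atom join demanded by $z_0$. This requires checking that the frozen source instance admits no accidental constant witness for $z_0$, and that the split is applied exactly at the existential positions so that no single‑atom GAV requirement can detect it. This is also where non‑redundancy of $\Sigma$ enters: it guarantees that $\beta_a$ and $\beta_b$ are genuinely distinct target tuples under the shared witness, so that splitting them is well defined and truly eliminates the join that a single‑atom head can never reconstruct.
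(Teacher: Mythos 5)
Your proof is correct and takes essentially the same route as the paper's: the sufficiency direction distributes the existential quantifier over the per-atom partition of $\vect{z}$ exactly as the paper does, and the necessity direction builds the same kind of counterexample (a solution of the purported GAV set $\Sigma'$ obtained by ``splitting'' the shared witness across the two head atoms, which then violates $\xi$), with non-redundancy playing the same role of keeping the two atoms distinct. If anything your write-up is more explicit than the paper's sketch, and your observation that source-only bodies force the fresh witness to sit only in existential positions of any $\Sigma'$-head lets you avoid the paper's separate case analysis for universally versus existentially quantified positions in $\Sigma'$.
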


\begin{proof}
For the ``if'' direction, consider $\Sigma$ a set of \sttgds\;
such that no existentially quantified variable occurs in two distinct atoms
in the head. We will construct $\Sigma'$ such that for each $\xi \in \Sigma$,
with $\xi$ a sentence of the form: 
$$\alpha(\vect{x},\vect{y})\rightarrow \exists \vect{z}\; R_1(\vect{x},\vect{z}_1),R_2(\vect{x},\vect{z}_2),\ldots,R_k(\vect{x},\vect{z}_k);$$
\noindent
where  $\{ \vect{z}_1, \vect{z}_2, \ldots \vect{z}_k \}$ is a partition of 
$\vect{z}$,
we will add the following $k$ \sttgds\; to $\Sigma'$:
$$\xi'_i \defined \; \alpha(\vect{x},\vect{y}) \rightarrow R_i(\vect{x},\vect{z}_i);$$ 
\noindent 
for all $1 \leq i \leq k$. It is easy to verify that $\Sigma'$ is a set of
GAV dependencies and $\Sigma'$ is logically equivalent to $\Sigma$.

For the ``only if'' direction, consider that there exists $\Sigma$ 
logically equivalent to $\Sigma'$  a set of GAV dependencies 
such that $\Sigma$ contains a \sttgd\; $\xi$ and existentially quantified 
 variable $z$, 
with two distinct atoms in the head of $\xi$
that share $z$. Let these two atoms be $R(\vect{x},z)$ and $S(\vect{y},z)$.
Clearly, these relational symbols occur in the head of some 
\sttgd\; in $\Sigma'$. If $z$ is existentially quantified in both 
dependencies in $\Sigma'$, it follows that there exists a subset minimal 
target instance with those positions different that models a 
source instance $I$ and triggers at least one of those two dependencies.
But that instance $J$ is not a model for $\Sigma$, contradicting with our 
assumption that $\Sigma$ and $\Sigma'$ are logically equivalent.
Similarly, it can be proved if one of the positions for $z$ is 
universally quantified.
In this case an adequate source instance needs to be selected
\end{proof}

\bigskip
\setcounter{proposition}{1}
\begin{proposition}
Let $\abdset$ be a set of \abds\; and safe \aegds\; 
with $\ad{\abdset}=1$
such that $\abdsetright$ is GAV-reducible
and let $I$ be a source instance.
Then either 
there exists a semi-na\"ive table $T$, 
computable in polynomial time in the 
size of $I$ such that $\sol{I}{\abdset}{abd}=\repe{T}$
or $\sol{I}{\abdset}{abd}=\emptyset$.
\end{proposition}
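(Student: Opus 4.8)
The plan is to reduce the statement to Theorem~\ref{abdsemanticsTheo} and then show that the extra GAV-reducibility hypothesis forces the global condition produced by the annotated chase to be logically valid. First I would invoke Theorem~\ref{abdsemanticsTheo}: since $\ad{\abdset}=1$, the annotated chase of Appendix~\ref{abdchase} either fails, in which case $\sol{I}{\abdset}{abd}=\emptyset$ and we are done, or it returns a universal representative $(T,\varphi^*)$ with $\sol{I}{\abdset}{abd}=\repe{T,\varphi^*}$, where $T$ is computable in polynomial time in the size of $I$. It therefore suffices to prove that, under the present hypotheses, $\varphi^*$ is a tautology, so that $\repe{T,\varphi^*}=\repe{T}$ and $T$ is the desired semi-na\"ive table. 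Observe also that, as the analysis below shows the ``$\leftarrow$'' chase never fails, a failure of the annotated chase can only occur in the ``\egd'' step, which matches the claimed dichotomy between $\repe{T}$ and the empty semantics.

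Next I would exploit the structure granted by GAV-reducibility. By Proposition~\ref{gav-reducibility} (applied to the non-redundant core of $\abdsetright$), we may assume that in every \abd\; $\xi:\alpha(\vect{x},\vect{y})\leftrightarrow\beta(\vect{x},\vect{z})$ each existential variable of $\vect{z}$ occurs in a single atom of $\beta$; hence any two distinct atoms of $\beta$ share only frontier variables from $\vect{x}$. I would combine this with two invariants of the chase: (i) every frontier position (a position carrying a variable of $\vect{x}$) in the table $T_2$ obtained after the ``$\rightarrow$'' and ``\egd'' chases holds a source constant, because these positions are instantiated from $I$ during the ``$\rightarrow$'' chase and an ``\egd'' step can only replace a null by a constant or merge two nulls, never turn a constant into a null; and (ii) because $\ad{\abdset}=1$, each annotated relation $\annt{R}{i}$ occurs in at most one atom of $\abdsetabd$, so every atom of a backward body $\xi^{\leftarrow}$ can only be matched by tuples that carry annotation $i$, and such tuples were produced by a forward firing of that same $\xi$ (the labels being preserved through the ``\egd'' merges).

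The heart of the argument is then to analyse an arbitrary ``$\leftarrow$'' trigger $(H,\xi^{\leftarrow})$ and show that its reconstructed source side always maps into $I$, so that no conjunct $\varphi_{(H,\xi^{\leftarrow})}$ is ever emitted. By the GAV structure the atoms of $\beta$ matched by $H$ join only at frontier positions, and by invariant~(i) these positions carry constants; hence all matched atoms agree on the same $\vect{x}$-constants. Picking any one matched atom, invariant~(ii) gives a forward firing of $\xi$ from a source assignment $s$ with $s|_{\vect{x}}$ equal to those constants and $\alpha(s)\subseteq I$. The reconstructed table $T'=h'(\alpha(\vect{x},\vect{y}))$ then maps homomorphically into $I$ by sending the fresh $\vect{y}$-nulls to $s|_{\vect{y}}$, so the guard ``there is no homomorphism from $T'$ into $I$'' fails and nothing is added to $\varphi^*$. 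As this holds for every trigger, $\varphi^*$ is the empty conjunction, i.e.\ valid, whence $\repe{T,\varphi^*}=\repe{T}$.

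I expect the main obstacle to lie in invariant~(i) across the ``\egd'' chase: one must verify that the null-merges performed by the safe \aegds\; neither collapse a frontier constant nor introduce a cross-atom join at a frontier position that two distinct forward firings could exploit. This is exactly where the safety condition is used, since a variable repeated in an \aegd\; body is forbidden from affected positions, so the equalities enforced by the \egds\; are keyed on constant-valued (non-affected) positions and can only identify existential nulls confined to single atoms. Confirming that annotation ownership (invariant~(ii)) survives these merges is the remaining, essentially bookkeeping, step.
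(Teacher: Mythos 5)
Your proposal is correct and follows essentially the same route as the paper's proof: invoke Theorem~\ref{abdsemanticsTheo}, then use Proposition~\ref{gav-reducibility} to conclude that no two head atoms of an \abd\; share an existential variable, so the ``$\leftarrow$'' chase never emits a conjunct and $\varphi^*\equiv\true$, leaving only the \egd-step failure as the empty-semantics branch. The paper states this in two sentences; your write-up supplies the frontier-constant and annotation-ownership details that the paper leaves implicit.
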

\begin{proof}
It follows directly from Proposition \ref{gav-reducibility} and
the way the annotated chase algorithm
presented in Section \ref{abdchase} constructs the global condition 
$\varphi^*$. 
Because there are no two atoms in any \abd\; that share 
an existential variable, we have either $\varphi^* \equiv \true$
or the annotated chase algorithm fails for the given input
\end{proof}

\bigskip
\setcounter{theorem}{4}
\begin{theorem}
Let $\abdset$ be a set of \abds\; and safe \aegds. 
If $\query{q} \in \ucq$,
then {\sc Eval}$_{\sf ABD}(\smap{M}$,$\query{q})$ problem:
\begin{itemize}
\item is tractable if $\ad{\abdset}=1$ and one 
may use a universal representative $(T,\varphi^*)$, if it exists 
to answer the problem and
\item is in $\conp$ if $\ad{\abdset}>1$ and there 
exists a mapping $\abdset$ with $\ad{\abdset}=1$ and $\query{q}\in \cq$
such that the problem is $\conp$-hard.
\end{itemize}
\end{theorem}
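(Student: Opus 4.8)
The plan is to treat the three claims in turn, leaning on the annotated chase of Section~\ref{abdchase} and on Theorem~\ref{abdsemanticsTheo}.

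\textbf{Tractability for $\ad{\abdset}=1$.}
First I would apply Theorem~\ref{abdsemanticsTheo}: for the input $I$ the annotated chase either fails, so that $\sol{I}{\abdset}{abd}=\emptyset$ and every tuple is vacuously certain, or it returns in polynomial time a universal representative $(T,\varphi^*)$ with $\repe{T,\varphi^*}=\sol{I}{\abdset}{abd}$. For a $\ucq$ query $\query{q}$, which is monotone and preserved under homomorphisms, I would then show that $\cert{q}{I}{\abdset}{abd}$ equals the na\"ive evaluation of $\query{q}$ on $T$ (keeping only tuples of constants), exactly as for universal solutions under OWA \cite{DBLP:journals/tcs/FaginKMP05}. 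The two new ingredients to check are the global condition $\varphi^*$ and the open/closed split of the nulls. Since $\varphi^*$ is a conjunction of disjunctions of \emph{inequalities}, it only deletes valuations and so can never create a fresh certain tuple; and since duplicating open nulls can only add tuples, a monotone query retains every answer it already had. Hence one na\"ive evaluation on the polynomial-size table $T$ computes all certain answers, giving a polynomial-time procedure.

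\textbf{Membership in $\conp$ for $\ad{\abdset}>1$.}
Here I would place the complement (non-certainty) in $\np$. A witness that $\vect{t}\notin\cert{q}{I}{\abdset}{abd}$ is a solution $J$ with a labeling $\ell$ such that $(I,J)\models_\ell\abdset$ and $\vect{t}\notin\query{q}(J)$. The crucial lemma is a small-model property: because $\models_\ell$ assembles $J$ from subset-minimal models of $\abdsetn$ anchored at source-justified tuples and the set of annotations is fixed by $\abdset$, whenever a witnessing solution exists there is one of size polynomial in $|I|$. I would then guess such a pair $(J,\ell)$, verify $(I,J)\models_\ell\abdset$ in polynomial time, and test $\vect{t}\notin\query{q}(J)$; this puts the evaluation problem in $\conp$.

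\textbf{Hardness.}
For the lower bound I would reduce from graph $3$-colorability, adapting the solution-existence reduction used in the proof of Theorem~\ref{solexistence-Theo}. From a graph $G$ I build a source instance holding its vertices, its edges and the pairs of distinct colours, a small mapping $\abdset$ that forces every vertex of each solution to carry a colour and, via a distinct-colour relation as in that reduction, constrains how colours may be placed, together with a single Boolean $\cq$ $\query{q}$ that holds in a solution exactly when some edge is monochromatic. Then $G$ is $3$-colorable iff some $J\in\sol{I}{\abdset}{abd}$ avoids every monochromatic edge, i.e.\ iff $\query{q}(J)=\false$ for some $J$, i.e.\ iff $\query{q}$ is \emph{not} certain; as $3$-colorability is $\np$-complete this makes non-certainty $\np$-hard and the evaluation problem $\conp$-hard. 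I note that the tractability of the first part already disposes of $\ad{\abdset}=1$, so the reduction necessarily sits one level higher and yields a mapping with $\ad{\abdset}=2$ (and $\ac{\abdset}=1$), matching the threshold recorded in Theorem~\ref{ucqdichotomy}.

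The step I expect to be hardest is the tractability direction: I must argue precisely that neither the open/closed distinction among the nulls nor the global inequality condition $\varphi^*$ perturbs the set of certain answers of a monotone $\ucq$, so that a single na\"ive evaluation on $T$ is sound and complete. By comparison the $\conp$ small-model bound and the $3$-colorability reduction are largely routine adaptations of the chase analysis and of the existing solution-existence construction.
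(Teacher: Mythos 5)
Your overall architecture coincides with the paper's proof: for $\ad{\abdset}=1$, compute the universal representative $(T,\varphi^*)$ via Theorem~\ref{abdsemanticsTheo} and answer by na\"ive evaluation on $T$ as in Fagin et al.; for $\ad{\abdset}>1$, guess a falsifying solution together with a labeling and verify in polynomial time; for hardness, reduce from $3$-colorability so that the $\cq$ detects a monochromatic edge and $G$ is $3$-colorable iff the query is not certain. Two of your additions are in fact improvements on the paper's sketch: the explicit small-model lemma for the $\conp$ upper bound (the paper merely says ``one may guess $J$'', leaving the polynomial witness-size bound implicit), and your observation that the hardness instance necessarily has $\ad{\abdset}=2$ — this correctly repairs the restated theorem's ``$\ad{\abdset}=1$'' in the hardness clause, which contradicts the first bullet and is a typo for the main text's value $2$. (One small divergence: the paper's evaluation-hardness mapping uses a unary colour-domain relation $M$ and puts the properness test entirely into the query, rather than reusing the distinct-colour-pair relation of the solution-existence reduction; your variant can be made to work but needs the same $2$-colorability preprocessing as that earlier reduction, since with the pair relation solution existence itself becomes entangled with colorability.)

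There is, however, a genuine flaw in the justification of precisely the step you flag as hardest. You argue that $\varphi^*$ ``only deletes valuations and so can never create a fresh certain tuple.'' For certain-answer semantics this principle is backwards: deleting solutions shrinks the family over which the intersection $\cert{q}{I}{\abdset}{abd}=\bigcap_{J\in\sol{I}{\abdset}{abd}}\query{q}(J)$ is taken, so it can only \emph{add} certain answers, never remove them — and the paper's own Example~\ref{unequalities} exhibits exactly this phenomenon, where a query with a single inequality is not certain under OWA but becomes certain once $\varphi^*$ is imposed. What rescues completeness of na\"ive evaluation for pure $\ucq$ is not a monotone-deletion argument but the paper's one-line observation: the valuation $(v,\{v_1\})$ mapping every closed and every open null to a fresh, pairwise distinct constant satisfies $\varphi^*$ (a conjunction of disjunctions of inequalities is trivially satisfied by an injective assignment), so $\repe{T,\varphi^*}$ always contains a generic solution homomorphically equivalent to $T$; any certain tuple of constants must be an answer on that generic solution, and pulling the match back along the inverse renaming shows it is already a na\"ive answer on $T$. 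Without this generic-solution argument your completeness direction does not go through; once you substitute it for the faulty principle, your proof of the first bullet coincides with the paper's.
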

\begin{proof}
For the case $\ad{\abdset}=1$ the proof is similar with the 
proof from \cite{DBLP:journals/tcs/FaginKMP05}
with the observation that for any valuation $(v,\{ v_1 \})$, where
$v$ and $v_1$ map each closed null and open null respectively
to a new distinct constant from $\cons$, then
we have that $v\circ v_1 (\varphi^*) \equiv \true$
and $v\circ v_1(T) \in \sol{I}{\abdset}{abd}$.

In case $\ad{\abdset}>1$, one may guess
 $J\in \sol{I}{\abdset}{abd}$, such that 
$\query{q}(J) = \false$, thus making the problem in $\conp$.
For the completeness part consider the set $\abdset$ defined as:
\begin{equation*}\label{eq1:eg:bidirect5}
\begin{split}
&\hspace{0.41cm} V(x) \leftrightarrow \annt{B}{1}(x,v),\annt{C}{1}(x,v); \\
&\hspace{0.35cm} M(v) \leftrightarrow \annt{C}{1}(x,v); \\
&\hspace{0cm} E_0(x,y) \leftrightarrow \annt{E}{1}(x,y). 
\end{split}
\end{equation*}
Given a graph $G$ we construct in polynomial time source instance 
$I$ such that $V^I$ will contain all the vertexes from $G$,
$M^I$ will contain three tuples, one tuple for each color, and
$E_0^I$ representing the edges in the graph.
To this, consider boolean query
$\query{q}\defined B(x,z),C(y,z),E(x,y)$. 
It is verifiable that  
$\cert{q}{I}{M}{abd}=\true$ iff the graph is not 3-colorable.
\end{proof}

\medskip 
\setcounter{theorem}{5}
\begin{theorem}
Let $\Sigma$ be a set of \sttgds\; and safe \egds\;.
Then there exists a set $\abdset$ 
of \abds\; and safe \aegds\;
such that
for any source instance $I$ and $q\in \ucq$
we have 
$\cert{q}{I}{\Sigma}{owa}=\cert{q}{I}{\abdset}{abd}$.
\end{theorem}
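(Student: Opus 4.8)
The plan is to reuse the set $\abdset$ built from $\Sigma$ in the proof of Theorem~\ref{infer-ABD} (the Gaifman-partition construction), which is a set of \abds\; and safe \aegds\; with $\ad{\abdset}=1$, and to show that for this $\abdset$ the two certain-answer operators agree on every \ucq. Throughout I would use that, by construction, $\abdsetright$ is logically equivalent to $\Sigma$: splitting the head of each \sttgd\; along its Gaifman partition and removing the annotations yields \sttgds\; $\alpha\rightarrow\beta_i$ whose conjunction is equivalent to $\alpha\rightarrow\exists\vect{z}\;\beta$, precisely because no existential variable is shared across distinct blocks, so recombining them gives $\Sigma$ back.

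First I would dispatch the easy inclusion. By Condition~4 of Definition~\ref{inf-sem} every inference-based solution is a model of $\Sigma$, so $\sol{I}{\Sigma}{inf}\subseteq\sol{I}{\Sigma}{owa}$, and Theorem~\ref{infer-ABD} gives $\sol{I}{\abdset}{abd}=\sol{I}{\Sigma}{inf}$. Hence $\sol{I}{\abdset}{abd}\subseteq\sol{I}{\Sigma}{owa}$; since the certain answer is an intersection over the solution set, intersecting over the smaller set can only enlarge the result, so $\cert{q}{I}{\Sigma}{owa}\subseteq\cert{q}{I}{\abdset}{abd}$ for every query $q$ (monotonicity is not even needed here).

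For the reverse inclusion I would use the universal representative. If the annotated chase fails then $\sol{I}{\abdset}{abd}=\emptyset$; by the existence guarantee established in the proof of Theorem~\ref{infer-ABD}, for this particular $\abdset$ a solution exists whenever the equality dependencies are satisfiable, so failure forces the \egds\; to be unsatisfiable, whence $\sol{I}{\Sigma}{owa}=\emptyset$ as well and both certain-answer sets equal the full output relation. Otherwise Theorem~\ref{abdsemanticsTheo} yields a universal representative $(T,\varphi^*)$ with $\sol{I}{\abdset}{abd}=\repe{T,\varphi^*}$, and Theorem~\ref{ucqdichotomy} lets me compute $\cert{q}{I}{\abdset}{abd}$ by na\"ive evaluation of $q$ on $T$. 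The key lemma to prove is that the semi-na\"ive table $T$, with its open/closed tags erased, is a \emph{universal solution} for $\Sigma$ and $I$ in the OWA sense: the ``$\rightarrow$'' phase of the annotated chase is the oblivious chase of $\abdsetright$ and the ``$=$'' phase applies the \egds, so $T$ is the standard chase result of $\abdsetright\equiv\Sigma$; thus $(I,T)\models\Sigma$ and $T$ admits a constant-preserving homomorphism into every $J'\in\sol{I}{\Sigma}{owa}$. Now, taking $\vect{t}\in\cert{q}{I}{\abdset}{abd}$, na\"ive evaluation gives a homomorphism from some disjunct of $q$ into $T$ sending the distinguished variables to $\vect{t}$; composing with the homomorphism $T\rightarrow J'$ and using that \ucq\; queries are preserved under homomorphisms, $\vect{t}\in q(J')$ for every OWA solution $J'$, i.e. $\vect{t}\in\cert{q}{I}{\Sigma}{owa}$.

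The main obstacle is the emphasised lemma that $T$ is a universal solution for $\Sigma$. Two points need care: (i) that the forward and equality phases of the \emph{annotated} chase really coincide with the classical data-exchange chase of $\abdsetright$ --- in particular that the \egd-merging steps do not destroy satisfaction of the \sttgds\; (true because \sttgd\; triggers depend only on the fixed source $I$), and that the open/closed labelling and the discarded global condition $\varphi^*$ are irrelevant to the underlying table --- and (ii) the equivalence $\abdsetright\equiv\Sigma$ combined with the classical universality of the chase, which transfers universality from $\abdsetright$ to $\Sigma$. Once $T$ is identified as a universal solution, both inclusions close and the equality $\cert{q}{I}{\Sigma}{owa}=\cert{q}{I}{\abdset}{abd}$ follows for all \ucq\; $q$.
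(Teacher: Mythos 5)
Your proposal is correct and follows essentially the same route as the paper: both construct $\abdset$ via the Gaifman-partition splitting from Theorem~\ref{infer-ABD} and then reduce the claim to the fact that the table $T$ produced by the annotated chase is (homomorphically equivalent to) an OWA universal solution for $\Sigma$, with $\varphi^*$ playing no role for na\"ive \ucq\; evaluation, so that the Fagin et al.\ characterization of \ucq\; certain answers closes the argument. The only cosmetic difference is that you obtain the inclusion $\cert{q}{I}{\Sigma}{owa}\subseteq\cert{q}{I}{\abdset}{abd}$ from the containment of solution sets via Condition~4 of the inference-based semantics rather than from homomorphic equivalence, which does not change the substance of the proof.
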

\begin{proof}
Construct $\abdset$ from $\Sigma$ the way it was shown
in the proof of Theorem \ref{infer-ABD}.
From the ABD-chase algorithm 
and from the way the  $\abdset$ was constructed,
it follows that there exists a universal representative $(T,\varphi^*)$
for $I$ and $\abdset$ if and only if there exists 
a universal solution $U$ for $I$ and $\Sigma$.
From the construction of $T$ we have that $T$ and $U$
are homomorphically equivalent and also $\varphi^*$ satisfiable 
(just assign
distinct constant for each null). From this and from 
\cite{DBLP:journals/tcs/FaginKMP05}
 follows that 
$\cert{q}{I}{\Sigma}{owa}=\cert{q}{I}{\abdset}{abd}$
for any $q\in \ucq$
\end{proof}

\medskip

\setcounter{theorem}{6}
\begin{theorem}
There exists a set 
 $\abdset$ of \abds\; with $\ad{\abdset}=1$
and there exists a query $\query{q} \in \cqn$ 
such that the problem
{\sc Eval}$_{\sf ABD}(\Sigma$,$\query{q})$ is
$\conp$-complete.
\end{theorem}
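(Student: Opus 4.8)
The plan is to show both directions of the dichotomy point: membership in $\conp$ for any fixed $\cqn$ query when $\ad{\abdset}=1$, and $\conp$-hardness for one specific density-1 set together with one $\cqn$ query. For membership, I would use Theorem~\ref{abdsemanticsTheo}: since $\ad{\abdset}=1$, either $\sol{I}{\abdset}{abd}=\emptyset$ (and then every tuple is trivially certain) or $\sol{I}{\abdset}{abd}=\repe{T,\varphi^*}$ for a $(T,\varphi^*)$ computable in polynomial time. The complement of the evaluation problem asks whether there is a solution $J$ with $\vect{t}\notin \query{q}(J)$; every such $J$ lies in $\repe{T,\varphi^*}$, i.e.\ it is a union of copies of $T$ under valuations of its (polynomially many) nulls. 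For a fixed query with a constant number of variables, a witness that falsifies $\query{q}$ can be restricted to use only values from $\dom{I}$ together with a constant number of fresh constants and only a bounded number of copies. Hence one may guess such a polynomially bounded $J$ (with a tuple-labeling $\ell$), verify $(I,J)\models_{\ell}\abdset$ and evaluate the fixed $\cqn$ query on $J$, all in polynomial time; the complement is in $\np$, so the problem is in $\conp$.

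For $\conp$-hardness I would reduce from \emph{non-$3$-colorability}, which is $\conp$-complete. Given a graph $G$, build source instance $I$ in which $V$ holds the vertices, $E_0$ the edges, and $A_0$ the six ordered pairs of \emph{distinct} colors over $\{r,g,b\}$. Take the density-$1$ set
\begin{equation*}
\begin{split}
& V(x) \leftrightarrow \annt{Col}{1}(x,v); \\
& A_0(c,c') \leftrightarrow \annt{Allowed}{1}(c,c'); \\
& E_0(x,y) \leftrightarrow \annt{Edge}{1}(x,y),
\end{split}
\end{equation*}
which has no \aegds\ (so it is trivially safe) and in which each annotated symbol occurs once, giving $\ad{\abdset}=1$. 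The annotated chase produces the universal representative with $\annt{Col}{1}$-tuples $(a,\nullo_a)$ for each vertex $a$, while $Allowed$ and $Edge$ are copied verbatim as constant relations with no nulls and an empty global condition. Thus the solutions in $\repe{T}$ are exactly the instances assigning to each vertex a nonempty \emph{set} of arbitrary colours (arbitrary valuations of the open nulls, any number of copies), with $Allowed$ and $Edge$ fixed.

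For the query I would take the boolean $\cqn$ query with a single negated atom
\begin{equation*}
\query{q}\;\defined\; Edge(x,y)\wedge Col(x,c)\wedge Col(y,c')\wedge \neg Allowed(c,c'),
\end{equation*}
and claim $\cert{q}{I}{\abdset}{abd}=\true$ iff $G$ is not $3$-colorable. If $G$ has a proper $3$-colouring $\chi$, the one-copy solution $Col^J=\{(a,\chi(a))\}$ makes every edge's endpoint pair lie in $Allowed$, so $\query{q}(J)=\false$ and the certain answer is $\false$. Conversely, if some solution $J$ gives $\query{q}(J)=\false$, then for every edge $(a,b)$ and every $c$ assigned to $a$, $c'$ to $b$, the pair $(c,c')\in Allowed$, i.e.\ both are menu colours and distinct; picking one colour per vertex yields a proper $3$-colouring, so $G$ is $3$-colorable. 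Hence the certain answer is $\true$ exactly when $G$ is not $3$-colorable, establishing $\conp$-hardness; with the membership argument this gives $\conp$-completeness. The main obstacle is the \emph{multiplicity} of open nulls (a vertex may receive several colours through several copies): I must verify that extra colours can only help trigger $\query{q}$, so that a single negated atom $\neg Allowed(c,c')$ correctly captures ``improper or non-menu colouring'' without resorting to a union, and that the same multiplicity does not inflate the witnessing solution needed for the $\conp$ upper bound.
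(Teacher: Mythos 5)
Your proof is correct, but the hardness part takes a genuinely different route from the paper. The paper adapts Hernich's reduction from the {\sc Clique} problem: it uses the two \abds\; $E_0(x,y) \leftrightarrow \annt{E}{1}(x,y)$ and $C_0(x,y) \leftrightarrow \annt{C}{1}(x,y),\annt{A}{1}(x,z),\annt{B}{1}(y,v)$, encodes the candidate clique slots as all pairs of distinct elements $c_1,\dots,c_k$ in $C_0$, and asks $\exists x,y,z_1,z_2\; C(x,y),A(x,z_1),B(y,z_2),\neg E(z_1,z_2)$, so that the certain answer is \true\; iff no $k$-clique exists. You instead reduce from non-$3$-colorability with three single-head-atom rules in which $Allowed$ and $Edge$ are forced to be exact copies of source relations and only $Col$ carries an open null; your gadget is arguably simpler (no rule with multiple existentials in the head) and your correctness argument --- extra or non-menu colours on edge-incident vertices can only trigger the negated atom, isolated vertices are irrelevant --- is sound, so both reductions establish the theorem with a single negated atom. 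Two small remarks: for the $\conp$ upper bound the paper merely asserts membership, and your general claim that a falsifying solution can always be shrunk to polynomially many copies and fresh constants for an \emph{arbitrary} density-$1$ mapping is not obviously true as stated (shrinking the union can delete the negated-atom witness of a surviving positive match); however, since the theorem only asserts existence of one $\conp$-complete instance, it suffices to observe --- as your construction makes easy --- that for your specific $\abdset$ and $\query{q}$ a falsifying solution may be assumed to use only colours from $\{r,g,b\}$, which gives the polynomial witness directly. Also make sure $E_0$ stores both orientations of each undirected edge (or symmetrize the query), so that the single negated atom $\neg Allowed(c,c')$ tests every ordered endpoint pair.
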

\begin{proof}
Clearly the problem is in $\conp$.
For the hardness result
the reduction is adapted from Hernich's 
\cite{DBLP:journals/corr/abs-1107-1456} reduction from 
the {\sc Clique} problem,
that is to decide, 
for an undirected graph without loops, if it contains a clique of
size $k$.
Let $\abdset$ be defined by:
\begin{equation*}\label{eq1:eg:bidirect6}
\begin{split}
&\hspace{0cm} E_0(x,y) \leftrightarrow \annt{E}{1}(x,y); \\
&\hspace{0cm} C_0(x,y) \leftrightarrow \annt{C}{1}(x,y),\annt{A}{1}(x,z),\annt{B}{1}(y,v). 
\end{split}
\end{equation*}
Note that $\ad{\abdset}=1$. Let source instance $I$ being defined such that  $E_0^I$ has all the 
edges from the graph and relation $C^I$ has all pairs of
disjoint elements from $\{ c_1, c_2,\ldots, c_k \} \subseteq \cons$.
Consider boolean $\cqn$ query:
$$\query{q} \defined \exists x\;\exists y\;\exists z_1\; \exists z_2\; 
C(x,y),A(x,z_1),B(y,z_2),\neg E(z_1,z_2).$$
\noindent
It can be verified that  $\cert{q}{I}{\abdset}{abd}=\true$ if
the graph does not contain a clique of size $k$, because for each 
target instance  $J \in \sol{I}{\abdset}{abd}$ 
for any complete graph of size $k$ in $J$
there exists an edge that is not 
in $E$. The converse holds as well, thus proving that 
$\cert{q}{I}{\abdset}{abd}=\true$ iff the graph does not contain a clique
of size $k$
\end{proof}

\setcounter{proposition}{2}
\begin{proposition}
Let $\abdset$ be a set of \abds\; and safe \aegds\;
such that $\abdsetright$ is a collection of full \sttgds\; and safe 
\egds. Then for any $\fo$ query $\query{q}$ the 
{\sc Eval}$_{\sf ABD}(\abdset$,$\query{q})$
is tractable.
\end{proposition}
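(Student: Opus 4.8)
The plan is to exploit the fact that full \sttgds\; contain no existentially quantified variables, so the exchange introduces no nulls and the entire semantics collapses to a single ground instance. First I would run the ``$\rightarrow$'' step of the annotated chase on $I$ and $\abdsetright$. Because every dependency in $\abdsetright$ is full, each head mentions only universally quantified variables; the forward chase therefore instantiates every triggered head atom with constants taken from $I$ and creates no open null, producing a \emph{ground} target table $T$ in time polynomial in $|I|$ (for fixed $\abdset$ there are only polynomially many triggers). Moreover $\aff{\abdset}=\emptyset$, so every \aegd\; is vacuously safe, and the ``\egd'' step on the ground table either equates two distinct constants---in which case the chase fails and $\sol{I}{\abdset}{abd}=\emptyset$---or leaves $T$ unchanged.

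Next I would prove that $\sol{I}{\abdset}{abd}$ is either $\emptyset$ or the singleton $\{T\}$. Every tuple of a solution $J$ must be inferred from $I$ and $\abdset$, and with full dependencies every inferred tuple is one of the ground tuples of $T$, so $J\subseteq T$; conversely the forward direction of each \abd\; forces every ground tuple triggered by a source match into the solution, so $J\supseteq T$, whence $J=T$. It then remains only to decide whether $T$ is itself a solution, i.e.\ whether the safe \aegds\; and the backward (``$\leftarrow$'') direction hold on $T$. Since $T$ is ground this reduces to constant (in)equalities and to the existence of a source witness for each backward trigger; equivalently, the global condition $\varphi^*$ built by the chase ranges over constants only and hence evaluates outright to \true\; (giving $\sol{I}{\abdset}{abd}=\{T\}$) or to a contradiction (giving $\sol{I}{\abdset}{abd}=\emptyset$). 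The backward direction may indeed force emptiness even when the \aegds\; are satisfied, exactly as in the example following Theorem~\ref{infer-ABD}.

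I would then stress that this collapse is insensitive to $\ad{\abdset}$, which is why no density restriction is needed and why Theorem~\ref{abdsemanticsTheo} cannot simply be invoked. The subtleties that force the hypothesis $\ad{\abdset}=1$ there stem solely from the interaction of repeated annotations with \emph{freshly created nulls}; since full dependencies create no nulls and $\aff{\abdset}=\emptyset$, a relation carrying the same label several times merely contributes several ground tuples and introduces no ambiguity. Consequently the chase computes the exact semantics even when $\ad{\abdset}>1$, and every step above runs in polynomial time.

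Finally, with $\sol{I}{\abdset}{abd}\in\{\emptyset,\{T\}\}$ the evaluation problem is immediate: when the chase succeeds, $\cert{q}{I}{\abdset}{abd}=\query{q}(T)$ for every \fo\; query $\query{q}$, and evaluating a fixed \fo\; query on one finite instance is polynomial in data complexity; when the chase fails the semantics is empty, so every candidate tuple $\vect{t}$ is a certain answer and the decision is returned at once. I expect the main obstacle to be the rigorous justification of the singleton claim \emph{without} the $\ad{\abdset}=1$ assumption---that is, verifying that once all affected positions are empty the per-tuple inference condition together with the forward direction pins the solution down to $T$, and that no repeated annotation can create a solution distinct from $T$.
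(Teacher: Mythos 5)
Your proof is correct and follows essentially the same route as the paper's: run the annotated chase, observe that full dependencies create no open nulls so the result is a single ground instance $T$ with $\sol{I}{\abdset}{abd}\in\{\emptyset,\{T\}\}$, and evaluate $\query{q}$ directly on $T$. The paper's one-line proof simply cites Theorem~\ref{abdsemanticsTheo} together with the no-new-nulls observation; your explicit argument that the collapse to a ground singleton holds without the $\ad{\abdset}=1$ hypothesis of that theorem (and that the backward direction and $\varphi^*$ reduce to ground checks) is precisely the detail the paper leaves implicit, and it is sound.
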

\begin{proof}
It follows directly from Theorem  \ref{abdsemanticsTheo}, 
the annotation chase algorithm defined in Section \ref{abdchase}
and the observation that during the chase process no new nulls will be created,
thus the result of the abd-chase process it will be an instance, if
exists
\end{proof}

\setcounter{corollary}{1}
\begin{corollary}
Let $\Sigma$ be a set of full \sttgds, then 
there exists $\abdset$ such that $\abdsetright$ is a collection of full 
\sttgds\; and $\sol{I}{\abdset}{abd}=\sol{I}{\Sigma}{cwa}=\sol{I}{\Sigma}{gcwa^*}$.
\end{corollary}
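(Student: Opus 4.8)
The plan is to invoke Theorem~\ref{infer-ABD} to obtain $\abdset$ and then argue that all three semantics collapse to a single ground instance. First I would construct $\abdset$ from $\Sigma$ exactly as in the proof of Theorem~\ref{infer-ABD}: each full \sttgd\; $\alpha(\vect{x},\vect{y})\rightarrow \beta(\vect{x})$ is replaced by \abds\; built from the Gaifman partition of its head. Since $\Sigma$ is full, $\beta$ contains no existentially quantified variables, so every atom of $\beta$ forms its own singleton block of the partition; consequently $\abdset$ is a set of \abds\; each of whose right-hand side is a single annotated atom, and $\abdsetright$ is a collection of full \sttgds, as required. Theorem~\ref{infer-ABD} then yields $\sol{I}{\abdset}{abd}=\sol{I}{\Sigma}{inf}$ for every source instance $I$, so it remains only to prove $\sol{I}{\Sigma}{inf}=\sol{I}{\Sigma}{cwa}=\sol{I}{\Sigma}{gcwa^*}$.

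The key observation is that a set of full \sttgds\; has a unique least solution. Chasing $\Sigma$ with $I$ introduces no nulls (there are no existential variables to witness), so the chase terminates with a single ground instance $T_0$ consisting of exactly the facts derivable from $I$ under $\Sigma$, and $T_0$ is the unique subset-minimal instance with $(I,T_0)\models\Sigma$. I would then show that each of the three semantics equals $\{T_0\}$. For the inference-based semantics, note that with no existential variables every inferred tuple is \emph{strongly} inferred, so no tuple is ever weakly inferred and Condition~3 of Definition~\ref{inf-sem} is vacuous; Condition~2 forces every trigger to fire, Condition~1 forbids any tuple not so inferred, and Condition~4 is automatically met, so the only instance admitting a valid inference strategy is $T_0$. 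For CWA-semantics, the absence of nulls makes the null-justification clauses trivial, while the fact-justification clause together with $(I,T)\models\Sigma$ pins down the unique CWA-solution as $T_0$, and $\repe{T_0}=\{T_0\}$. For GCWA$^*$-semantics, uniqueness of the minimal solution gives $Sol_{min}(I,\Sigma)=\{T_0\}$, and every union of copies of $T_0$ is again $T_0$, whence $\sol{I}{\Sigma}{gcwa^*}=\{T_0\}$.

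The main obstacle, such as it is, lies in justifying that full \sttgds\; admit a unique minimal model and that the inference machinery produces no weakly inferred tuples; both reduce to the standard fact that full \tgds\; behave as a monotone operator with a unique least fixpoint, so no subtle interaction with Condition~3 or with the GCWA$^*$ union closure can arise. Once uniqueness of $T_0$ is in hand, the three equalities follow by the elementary checks sketched above, and combining them with Theorem~\ref{infer-ABD} gives $\sol{I}{\abdset}{abd}=\sol{I}{\Sigma}{cwa}=\sol{I}{\Sigma}{gcwa^*}$.
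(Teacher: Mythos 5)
Your construction of $\abdset$ (one \abd\ per head atom with a fresh annotation, obtained as the degenerate Gaifman partition of a full head) is exactly the construction in the paper's proof, and your collapse of all four semantics to the unique least fixpoint $T_0$ of the full \tgds\ correctly fills in the details the paper dismisses as ``easy to observe.'' The proposal is correct and takes essentially the same approach as the paper, just with the semantic equalities spelled out rather than asserted.
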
  
\begin{proof}
For each atom $R(\vect{x})$ in the head of a full \tgd\; $\xi\in \Sigma$
add the following \abd\; to $\abdset$: 
$\body{\xi} \leftrightarrow R^i(\vect{x})$,
where $i$ is a new  annotation not used before.
It is easy to observe that for this $\abdset$ we have that 
$\abdsetright$ contains only full \tgds
\end{proof}

\setcounter{theorem}{7}
\begin{theorem}
Let $\abdset$ be a set of \abds\; and safe \aegds\;
with  $\ad{\abdset}=1$. 
Then for any $\ucqino$ query $\query{q}$
 the 
{\sc Eval}$_{\sf ABD}(\smap{\abdset}$,$\query{q})$ problem
is tractable and one may use only the universal representative, 
if it exists, 
to evaluate the query. 
\end{theorem}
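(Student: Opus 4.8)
The plan is to reduce the whole question to query evaluation over the universal representative and then adapt the naïve-evaluation argument of Fagin et al.~\cite{DBLP:journals/tcs/FaginKMP05}. First I would invoke Theorem~\ref{abdsemanticsTheo}: since $\ad{\abdset}=1$, for the given source $I$ either $\sol{I}{\abdset}{abd}=\emptyset$, in which case the queried tuple is vacuously certain and the answer is trivially determined, or the annotated chase produces in polynomial time a universal representative $(T,\varphi^*)$ with $\sol{I}{\abdset}{abd}=\repe{T,\varphi^*}$. In the second case it suffices to decide certain answers over $\repe{T,\varphi^*}$ using only $(T,\varphi^*)$, so the overall procedure is polynomial as long as this last step is, which is what the theorem also asserts.

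For the evaluation step I would use naïve evaluation of $\query{q}$ on $T$ with the following convention for the at most one unequality $s\neq t$ per disjunct: a homomorphism $h$ from the positive atoms of a disjunct into $T$ counts as a witness precisely when the disjunct has no unequality, or $h(s),h(t)$ are two \emph{distinct constants}, or $\varphi^*$ already forces $h(s)\neq h(t)$ in every valuation (the latter being decidable in polynomial time since $\varphi^*$ is a conjunction of disjunctions of unequalities over $\dom{T}$). Soundness is then immediate: every $J\in\repe{T,\varphi^*}$ contains a homomorphic image of $T$, so the positive atoms of a witnessing disjunct always match, and in each of the three cases the unequality survives under every $\varphi^*$-compatible valuation.

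The heart of the proof is completeness, and this is where the single-unequality restriction is essential. I would argue by contraposition: assuming naïve evaluation returns no witness for the queried tuple $\vect{t}$, I construct a counterexample solution using $n=1$ copy, so that a single valuation $\mu$ of $\dom{T}\cap\nulls$ governs all open and closed nulls. I would form the equivalence relation on $\dom{T}$ that, for every disjunct with unequality $s\neq t$ and every homomorphism $g$ of its positive atoms into $T$, forces $g(s)$ and $g(t)$ equal, and close it under transitivity. If this collapsing ever identifies two distinct constants, or identifies a pair that $\varphi^*$ forces apart, then that $g$ is already a witness under the convention above, contradicting the assumption; otherwise $\mu$ is chosen constant on each class (the constant of the class if one exists, a fresh constant otherwise), and by the safety of the \aegds\; the unconstrained classes can be filled with fresh distinct values so that $\varphi^*$ holds, exactly as in the argument for Theorem~\ref{restabdsemanticsTheo}. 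In the resulting instance $\mu(T)\in\repe{T,\varphi^*}$ every disjunct with an unequality fails and no purely conjunctive disjunct matches (it would have been a naïve witness), so $\vect{t}\notin\query{q}(\mu(T))$ and $\vect{t}$ is not certain.

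The main obstacle is precisely this construction. One must verify that every witness of a disjunct inside the collapsed instance $\mu(T)$ descends from a homomorphism into $T$, so that forcing all $T$-matches to collapse genuinely destroys every witness, and one must reconcile these forced equalities with $\varphi^*$ simultaneously. The restriction to one unequality per disjunct is exactly what keeps the forcing a plain transitive (union--find) closure rather than a disjunctive constraint; with two unequalities the same construction would require solving a colouring-style problem, which is why tractability breaks down, as witnessed by Theorem~\ref{unequality2}. I expect the full version to carry out the lifting of matches and the free-variable bookkeeping in detail: the input fixes $\vect{t}\in(\dom{I})^{\ar{\vect{t}}}$, hence the images of the free variables, so only polynomially many homomorphisms into $T$ need to be inspected, keeping the whole test within \ptime.
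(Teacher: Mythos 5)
Your overall architecture --- compute $(T,\varphi^*)$ via Theorem~\ref{abdsemanticsTheo}, then decide certain answers using only the universal representative --- matches the paper, which likewise follows Theorem~5.12 of Fagin et al.: naively evaluate the unequality-free part of $\query{q}$ on $T$, then chase $T$ with the derived \egds\ $q_i(\vect{x}_i)\rightarrow x_i=y_i$ while respecting $\varphi^*$, the outcome of the chase deciding the answer. However, your evaluation step has a genuine gap: deciding each disjunct by a \emph{local} per-homomorphism test ("$h(s),h(t)$ are two distinct constants, or $\varphi^*$ forces them apart") is not complete, and the bridge you offer in the completeness argument --- that if the transitive closure of the forced identifications ever merges two distinct constants then some single $g$ was already a local witness --- is false. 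Take $\abdset=\{\,S(x,y)\leftrightarrow \annt{R}{1}(x,z),\annt{R}{2}(z,y)\,\}$ (density $1$) and $I=\{S(a,b)\}$, so that $T=\{R(a,\nullo_1),R(\nullo_1,b)\}$ and $\varphi^*\equiv\true$, and let $\query{q}\defined\exists x\exists y\; R(x,y)\wedge x\neq y$. The two homomorphisms of the positive atom send $(x,y)$ to $(a,\nullo_1)$ and to $(\nullo_1,b)$; neither satisfies your local convention, so your algorithm answers "not certain". Yet in every solution either $\nullo_1$ is valued to something other than $a$ (the first tuple is a witness) or it is valued to $a$ and the second tuple becomes $R(a,b)$ (a witness since $a\neq b$); the certain answer is \true. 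The identification $a\sim\nullo_1\sim b$ collapses two distinct constants only after \emph{chaining} two homomorphisms, neither of which is a local witness.

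This is exactly why the paper runs the standard \egd-chase on $T$ rather than a one-shot check: the chase substitutes each forced equality back into the table, so in the example above the second trigger fires on $R(a,b)$ and fails on two distinct constants, certifying the answer \true. Your one-shot transitive closure has a second defect of the same kind: it only collects homomorphisms into the \emph{original} $T$, whereas merging nulls can create new matches of a disjunct's positive atoms (a join that only closes once two open nulls are identified), so a fixpoint computation is unavoidable. The counterexample-construction half of your argument --- choose $\mu$ constant on the surviving equivalence classes, fresh and distinct elsewhere, using safety of the \aegds\ and satisfiability of $\varphi^*$ --- is the right idea and is essentially what the paper's reduction relies on, but it must be applied to the \emph{chased} table, and the decision criterion must be "does the chase (against $\varphi^*$) fail", not "is there a local witness".
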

\begin{proof}
Next we describe the polynomial algorithm that decides the evaluation problem
for mapping $\abdset$, source instance $I$ 
and boolean query $\query{q} \in \ucqino$.
Let $(T,\varphi^*)$ be a universal representative for $I$ and $\Sigma$. 
Consider query 
$\query{q}$  of the form:
$$\query{q}\;\defined\; \exists \vect{x}\; q(\vect{x}) \bigvee_{i\leq k} (q_i(\vect{x}_i)
\wedge x_i\neq y_i).$$
\noindent 
Where $q(\vect{x})$ is a $\ucq$ and each $q_i(\vect{x}_i)$ is a conjunctive 
query.
First, if $q(T)\hspace{-0,1cm}\downarrow$ 
(na\"ive evaluation of query $q$ over $T$) returns
\true, then the algorithm will return \true\; 
for $\query{q}$ too.
Using similar methodology as in the proof of Theorem 5.12
 \cite{DBLP:journals/tcs/FaginKMP05}, 
in case $q(T)\hspace{-0,1cm}\downarrow=\false$
we
construct the following set of \egds:
\begin{equation*}\label{eq1:eg:bissdirect5}
\begin{split}
q_1(\vect{x}_1) \rightarrow x_1 = y_1; \\
q_2(\vect{x}_1) \rightarrow x_2 = y_2; \\
\ldots \\
q_k(\vect{x}_1) \rightarrow x_k = y_k. 
\end{split}
\end{equation*}
\noindent 
Next we use the standard chase on $T$ with the previous set of \egds\;
also with respect to $\varphi^*$\footnote{This condition makes the difference 
between the certain answer result in OWA compared with the ABD-semantics.}.
Compared with the proof of Theorem 5.12 in 
 \cite{DBLP:journals/tcs/FaginKMP05}, because the \aegds\; in $\abdset$
are safe we can directly use the materialized target table $T$
and don't need to repeat the chase process 
for each query as done in  \cite{DBLP:journals/tcs/FaginKMP05}.
In case the chase process does fail
 either due
to trying to equate two distinct constants or due to 
contradicting with $\varphi^*$, then it returns \false. Otherwise it 
returns \true
\end{proof}

\bigskip
\setcounter{theorem}{8}
\begin{theorem}
There exists a  set
$\abdset$ 
 of \abds\; with $\ad{\abdset}=1$
and there is a conjunctive query with
two unequalities such that 
{\sc Eval}$_{\sf ABD}(\smap{\abdset}$,$\query{q})$ problem
is $\conp$-complete. 
\end{theorem}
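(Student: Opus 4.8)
The plan is to prove the two halves separately: first that {\sc Eval}$_{\sf ABD}(\abdset,\query{q})$ lies in $\conp$ whenever $\ad{\abdset}=1$, and then that a suitable fixed density-$1$ set $\abdset$ together with a fixed Boolean conjunctive query carrying two unequalities makes the problem $\conp$-hard, by adapting the reduction of Madry~\cite{DBLP:journals/ipl/Madry05} from the OWA setting to the ABD-semantics.

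For membership I would invoke Theorem~\ref{abdsemanticsTheo}: since $\ad{\abdset}=1$, either the semantics is empty (and every tuple is trivially certain) or we can compute, in polynomial time in the size of $I$, a universal representative $(T,\varphi^*)$ with $\sol{I}{\abdset}{abd}=\repe{T,\varphi^*}$. To witness $\vect{t}\notin\cert{q}{I}{\abdset}{abd}$ it then suffices to exhibit one solution $J=v(\bigcup_{i\leq n}v_i(T))$, obtained from valuations $v$ over $\nullsc$ and $v_i$ over $\nullso$ satisfying $\varphi^*$, with $\query{q}(J)=\false$. Because $\query{q}$ is \emph{fixed}, a falsifying solution of polynomial size always exists (a bounded number of copies and a valuation using only polynomially many fresh constants from $\cons$ suffice), so I would guess $n$, the valuations, verify $(v,\{v_1,\dots,v_n\})\models\varphi^*$ and that $\query{q}(J)=\false$, all in polynomial time. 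This puts the complement in $\np$, hence the problem in $\conp$.

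For hardness I would reduce from graph $3$-colourability, arranging that $\cert{q}{I}{\abdset}{abd}=\true$ exactly when the input graph is \emph{not} $3$-colourable (a $\conp$-complete condition); equivalently, some solution falsifies $\query{q}$ iff the graph admits a proper $3$-colouring. Following Madry, I would fix a density-$1$ set $\abdset$ whose universal representative is a semi-na\"ive table whose open nulls encode the vertices' colour choices together with a target copy of the edge relation, and a fixed Boolean conjunctive query $\query{q}$ with exactly two unequalities that is satisfied by a solution precisely when the encoded colouring is improper or uses an illegal colour. To stay inside a manageable fragment I would take $\abdset$ so that $\abdsetright$ is GAV-reducible, so that by Proposition~\ref{restabdsemanticsTheo} the global condition $\varphi^*$ is trivial and only open nulls occur, and add safe \aegds\ that constrain each vertex to a single colour drawn from a fixed three-element relation; the two unequalities of $\query{q}$ are exactly what is needed to detect that two adjacent vertices receive the same colour.

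The main obstacle is twofold, and it is precisely what separates this theorem from the tractable one-unequality case of Theorem~\ref{unequality}: a single unequality can be absorbed into an \egd\ and chased away, whereas two unequalities create a genuine disjunction, so the reduction must arrange that \emph{avoiding} the query pattern is equivalent to solving an $\np$-complete constraint, and I would need to argue that the second unequality is genuinely indispensable. The second, ABD-specific difficulty is that the ABD-semantics does not agree with the OWA-semantics on queries with unequalities (Example~\ref{unequalities}), so Madry's OWA analysis cannot be transported verbatim; I must verify directly, against Definition~\ref{ABD-semantics} and the semi-na\"ive interpretation, that the solutions are exactly the complete colourings, that the open-null duplication does not manufacture spurious solutions capable of changing the certain answer, and that the safe \aegds\ indeed confine each vertex to one of the three admissible colours.
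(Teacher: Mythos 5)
Your membership argument is sound (the paper dismisses membership in a single sentence, and your observation that a one-copy instantiation of the universal representative already yields a polynomial-size falsifying witness is correct). The hardness half, however, has a genuine gap. The paper follows Madry's reduction from 3CNF satisfiability, keeping Madry's query verbatim and replacing his dependencies by a density-$1$ set of \abds\ that build path gadgets of lengths $1$, $2$ and $4$ over a relation $P'$, anchored by two new relations $M$ and $V$; the two unequalities interact with that engineered path structure, not with a colouring.

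Your reduction from $3$-colourability does not work as sketched. The statement requires a \emph{single} conjunctive query, so you cannot express the disjunction ``the colouring is improper \emph{or} uses an illegal colour''; detecting an illegal colour against a three-element palette already needs three unequalities ($c\neq r\wedge c\neq g\wedge c\neq b$); and detecting that two adjacent vertices received the \emph{same} colour is a join (an equality), not an unequality, so your remark that the two unequalities ``are exactly what is needed to detect that two adjacent vertices receive the same colour'' is backwards. Moreover, safe \aegds\ can only equate values --- they cannot confine a null to a fixed three-element relation --- and by insisting on $\ad{\abdset}=1$ with a GAV-reducible $\abdsetright$ you have, via Proposition~\ref{restabdsemanticsTheo}, eliminated both the global condition and the closed nulls that could otherwise couple choices. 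Concretely, the solution obtained by sending every open colour-null to a distinct fresh constant always belongs to $\sol{I}{\abdset}{abd}$; in that solution no two adjacent vertices share a colour and no two unequalities can certify membership in a three-element palette, so your query is false there and the certain answer is $\false$ irrespective of the graph's colourability. (Note that the paper's own $3$-colourability reductions, in Theorems~\ref{solexistence-Theo}, \ref{solcheck-Theo} and \ref{ucqdichotomy}, need annotation density $2$ or cardinality $2$ precisely to create that coupling.) Repairing the argument essentially forces you back to a two-valued, Madry-style 3SAT encoding, which is the route the paper takes.
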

\begin{proof}
This reduction from the 3CNF satisfiability problem
 is similar to the one provided by Madry in 
\cite{DBLP:journals/ipl/Madry05} with the set of $\abdset$ changed to
(note the new relation symbols $M$ and $V$):
\begin{equation*}\label{eq1:eg:bsidssirect5}
\begin{split}
& P(x,y) \leftrightarrow \annt{P'}{1}(x,y); \\
& L(x,y) \leftrightarrow \annt{P'}{2}(x,z),\annt{P'}{3}(z,y); \\
& R(x,y) \leftrightarrow \annt{P'}{4}(x,u),\annt{P'}{5}(u,v),\annt{P'}{6}(v,t),\\
& \hspace{4cm}\annt{P'}{7}(t,y), 
\annt{V'}{1}(x),\annt{M'}{1}(y);\\
& N(x,y) \leftrightarrow \annt{N'}{1}(x,y);\\
& V(x) \leftrightarrow \annt{V'}{2}(x); \\
& M(x) \leftrightarrow \annt{M'}{2}(x).
\end{split}
\end{equation*}
Where instance $I$ is defined similarly as in  \cite{DBLP:journals/ipl/Madry05} 
to which we add
$M^I=\{ (x_k^i) \setdef\; i \in \{1, \ldots,m \}, k \in \{ 1,2,3 \} \}$
and 
$V^I= \{ (T) \}$. The query is the same as the one in \cite{DBLP:journals/ipl/Madry05}. The new relational symbols $M$
and $V$ are needed in order to ensure that the third dependency
creates only paths between the variables from the 3CNF formula 
and vertex $T$
\end{proof}

\medskip
Some new definitions are needed in order to sketch the next proofs.
Some of these definitions are slight modifications of 
definitions from \cite{Onet12}. These modifications were needed
because of the different behavior between open and close nulls.

\medskip
\begin{definition}
Let $T$ be a semi-na\"ive table and $C$ be a finite set of 
constants. A $C$-retraction for $T$ is a mapping
$h$ from $\dom{T}$ to $\dom{T}\cup C$
that is identity on $\cons \cup \dom{h(T)}$.
\end{definition}

\medskip
\begin{definition}
Let $V$ be a na\"ive table and $T$ 
a semi-na\"ive table. 
A unifier for $V$ and $T$,
if it exists, is a pair 
$(\theta_1, (\theta_2,\{ \theta^1_2,\theta^2_2,\ldots,\theta^k_2 \}))$,
where $\theta_1$ is a homomorphism from 
the set $\dom{V}$ to $\dom{T} \cup \domc{V}$,
$\theta_2$ is a $\domc{V}$-retraction for $T$ 
identity on $\nullso$ 
and 
$\theta^i_2$, $1 \leq i \leq k$, 
is a $\domc{V}$-retraction for table $T$ 
identity on $\nullsc$ 
such that
$\theta_1(V)=\theta_2(\bigcup_{1\leq i \leq k } \theta^i_2(T))$.
Where $\domc{V}$ represents the set of constants occurring in $V$.
\end{definition}

\noindent
Note the asymmetrical role of instances $V$ and $T$.
In our case $V$ will usually play the role of the 
na\"ive table associated with a conjunctive query $\alpha$,
where each variable is replaced with a null value.
The semi-na\"ive table $T$ will be a subset of the 
universal representative $(T,\varphi^*)$ returned by the
annotated chase algorithm.

\medskip
\begin{definition}
A unifier 
$(\theta_1, (\theta_2,\{ \theta^1_2,\theta^2_2,\ldots,\theta^k_2 \}))$
 for tables 
$V$ and $T$ is {\em more general} than another unifier 
$(\mu_1, (\mu_2,\{ \mu^1_2,\mu^2_2,\ldots,\mu^k_2 \}))$,
if there exist  mappings 
$(f,\{ f^1,f^2,\ldots,f^k \})$ 
on $\dom{T}$, with at least one mapping not identity,
such that 
$\mu_2=f \circ \theta_2$ and $\mu_2^i=f^i \circ \theta^i_2$
for all $1 \leq i \leq k$.
\end{definition}

\medskip
\begin{definition}
A table $T$ is said to be {\em minimally unifiable} with 
table $V$ if for any unifier $\theta$ of $V$ and $T$,
there is no $T' \subsetneq T$ such that 
$\theta$ is unifier of $V$ and $T'$.
\end{definition}

From the previous definition we have the following important result
that ensures tractability for our last theorems.

\medskip
\begin{proposition}
If $T$ is minimally unifiable with 
table $V$, then $|T|\leq |V|$.
\end{proposition}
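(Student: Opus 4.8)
The plan is to extract from the minimality hypothesis an injection from the tuples of $T$ into the tuples of $\theta_1(V)$; since $\theta_1(V)$ is a homomorphic image of $V$ and hence has at most $|V|$ tuples, the bound $|T| \le |V|$ follows at once. First I would fix one unifier $\theta = (\theta_1, (\theta_2, \{ \theta^1_2, \ldots, \theta^k_2 \}))$ of $V$ and $T$, which exists because $T$ is unifiable with $V$. Unfolding the defining equation gives
\[
\theta_1(V) \;=\; \theta_2\Big( \bigcup_{1 \le i \le k} \theta^i_2(T) \Big) \;=\; \big\{\, \theta_2(\theta^i_2(t)) \;:\; 1 \le i \le k,\ t \in T \,\big\},
\]
so every tuple of the image has the form $\theta_2(\theta^i_2(t))$ for some copy index $i$ and some source tuple $t \in T$.

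Next I would apply minimality one tuple at a time. For each $t \in T$ the set $T \setminus \{ t \}$ is a proper subset of $T$, so by hypothesis $\theta$ is not a unifier of $V$ and $T \setminus \{ t \}$; since $\theta_2(\bigcup_i \theta^i_2(T \setminus \{ t \}))$ is always contained in $\theta_1(V)$, the only way it can fail to equal $\theta_1(V)$ is by strict containment. Hence there is a tuple $w_t \in \theta_1(V)$ lying outside $\theta_2(\bigcup_i \theta^i_2(T \setminus \{ t \}))$. Writing $w_t = \theta_2(\theta^i_2(t'))$ for suitable $i, t'$ and observing that any choice $t' \ne t$ would place $w_t$ back inside the excluded set, I conclude that $t$ is the only source tuple, across all $k$ copies, producing $w_t$; equivalently $w_t \ne \theta_2(\theta^j_2(t''))$ for every $j$ and every $t'' \ne t$.

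Finally I would check that $t \mapsto w_t$ is injective: if $w_t = w_s$ with $t \ne s$, then $w_s$ is produced by $s$, whereas the private property of $w_t$ forbids any $t'' \ne t$ from producing it, so taking $t'' = s$ gives a contradiction. This injection yields $|T| \le |\theta_1(V)|$, and since $\theta_1$ maps each tuple of $V$ to a single tuple we have $|\theta_1(V)| \le |V|$, which proves the proposition.

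I expect the delicate point to be the middle step, where deleting a single tuple $t$ removes it \emph{simultaneously} from all $k$ retracted copies: I must show that the witness $w_t$ is private to $t$ across every copy rather than merely within one copy. Pinning down this ``for all $i$'' quantification is exactly what makes the map injective and what makes the bound independent of the number of copies $k$.
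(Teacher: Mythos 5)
The paper states this proposition without proof, so there is no authorial argument to compare yours against; on its own terms, your strategy --- extract, for each tuple $t\in T$, a witness tuple $w_t\in\theta_1(V)$ that is ``private'' to $t$ across all $k$ retracted copies, conclude that $t\mapsto w_t$ is injective, and finish with $|\theta_1(V)|\le|V|$ --- is the natural reading of why minimality bounds $|T|$, and the injectivity argument itself (every representation $\theta_2(\theta_2^j(t''))$ of $w_t$ forces $t''=t$, so $w_t=w_s$ forces $t=s$) is sound. You also correctly isolated the crux: the witness must be private across every copy index simultaneously, which is exactly what deleting $t$ from the union over all $i$ delivers.

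Two points deserve tightening. First, the step ``$\theta$ is not a unifier of $V$ and $T\setminus\{t\}$, hence the image equation fails by strict containment'' tacitly assumes that the only clause of the unifier definition that can break when $T$ is replaced by a subtable is the equation $\theta_1(V)=\theta_2(\bigcup_i\theta_2^i(T'))$. But the definition also types its components relative to the table: $\theta_1$ must map $\dom{V}$ into $\dom{T'}\cup\domc{V}$, and $\theta_2$, $\theta_2^i$ must be $\domc{V}$-retractions \emph{for $T'$}. If removing $t$ shrinks $\dom{T}$, the triple could cease to be a unifier of $V$ and $T\setminus\{t\}$ for these bookkeeping reasons even though the image equation still holds, and then your witness $w_t$ need not exist. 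You should either argue that whenever the equation holds for a subtable the maps restrict to a unifier of $V$ and that subtable, so that the minimality hypothesis still bites, or state explicitly that you read minimal unifiability as quantifying over the image equation alone. Second, as literally defined, a table admitting no unifier at all is vacuously ``minimally unifiable,'' so your opening assumption that a unifier exists is an (entirely reasonable, and surely intended) strengthening of the hypothesis that should be made explicit.
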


\medskip
\begin{definition}
A unifier 
$(\theta_1, (\theta_2,\{ \theta^1_2,\theta^2_2,\ldots,\theta^k_2 \}))$
 is a most general unifier (mgu)
for tables $V$ and $T$, if all unifiers 
$(\mu_1, (\mu_2,\{ \mu^1_2,\mu^2_2,\ldots,\mu^k_2 \}))$
 of $V$ and $T$ that are more general 
than unifier
$(\theta_1, (\theta_2,\{ \theta^1_2,\theta^2_2,\ldots,\theta^k_2 \}))$
 actually are isomorphic with it.
We denote $\mgu(V,T)$ the set of (representatives of the)
equivalence classes of all mgu's of $V$ and $T$.
\end{definition}

\medskip
By abusing the notations  with 
$\mgu(V,T)$ we will also denote the following set
$\mgu(V,T)\defined \bigcup_{T'\subseteq T} \mgu(V,T')$.
The set will contain only the representatives of the equivalence classes
of mgu's.
The following lemma ensures that the set 
of most general unifiers can be computed in polynomial time.

\medskip
\begin{lemma}\label{mgu}
Let $V$ be a na\"ive table and $T$ a semi-na\"ive table with
$c=|V|$ and $n=|T|$. Then one may compute the set
$\mgu(V,T)$ in $O((2c)^{2c}c^2n^{c!})$
\end{lemma}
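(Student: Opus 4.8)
The plan is to turn the defining equation $\theta_1(V)=\theta_2(\bigcup_{i\le k}\theta_2^i(T))$ into a finite enumeration problem, using the preceding Proposition (every minimally unifiable table $T$ satisfies $|T|\le|V|$) to keep each combinatorial choice bounded in terms of $c=|V|$ rather than $n=|T|$. First I would observe that every unifier of $V$ with a subtable $T'\subseteq T$ restricts to a unifier of $V$ with a \emph{minimally} unifiable subtable, which by the Proposition has at most $c$ tuples. Hence, when computing $\mgu(V,T)=\bigcup_{T'\subseteq T}\mgu(V,T')$, it suffices to let each of the copies act through a subtable of size at most $c$, and one may further assume the number of copies $k$ is at most $c$: since the right-hand side must equal $\theta_1(V)$, which has at most $c$ tuples, any copy whose image is subsumed by the others can be discarded without losing generality.

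Next I would build each candidate mgu from a \emph{matching}, namely a partial assignment sending each of the $\le c$ tuples of $V$ to a pair (copy index, tuple of $T$). Fixing such a matching determines, by ordinary atom-at-a-time unification, a most general $\theta_1$ together with the closed retraction $\theta_2$ (identity on $\nullso$, collapsing elements of $\nullsc$) and the per-copy open retractions $\theta_2^i$ (identity on $\nullsc$, collapsing elements of $\nullso$); the asymmetry between closed and open nulls is exactly what forces the split into one global $\theta_2$ and several $\theta_2^i$. Each unification runs in polynomial time and returns at most one equivalence-class representative, failing matchings being dropped. Collecting the representatives over all matchings and then discarding any that is strictly less general than another yields $\mgu(V,T)$.

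For the running time I would count the choices. The identifications realizable by $\theta_1$ on $\dom{V}$, a set of size $O(c)$ (at most $2c$ under the fixed-arity schemas considered), are bounded by $(2c)^{2c}$; the bookkeeping over the at most $c$ copies and their pairing contributes the $c^2$ factor; and distributing the $\le c$ tuples of $V$ among copies while choosing, per copy, the underlying tuples of $T$ contributes the $n^{c!}$ factor, which for fixed $c$ is polynomial in $n$. The main obstacle, and the step deserving the most care in the full version, is the completeness claim: showing that no equivalence class of most general unifiers escapes this enumeration. This is precisely where the Proposition is essential, since the bound $|T'|\le c$ is what lets a finite, $c$-indexed family of matchings witness every mgu; the remainder is the standard argument that atom-level unification yields representatives of all mgu classes.
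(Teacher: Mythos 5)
Your proposal takes a genuinely different route from the paper for a simple reason: the paper offers no argument at all. Its entire proof is the sentence that the result ``directly follows from Proposition~18 in \cite{Onet12}'', deferring to an external bound for ordinary na\"ive tables that is implicitly claimed to survive the modified, open/closed-null definitions of unifier given in the appendix. Your sketch is a self-contained reconstruction of what that proof must look like: use the preceding proposition to cut every relevant subtable down to $|T'|\le c$, bound the number of open retractions $\theta_2^1,\dots,\theta_2^k$ by $c$ because the right-hand side $\theta_2(\bigcup_i\theta_2^i(T'))$ has at most $c$ tuples, enumerate matchings of the tuples of $V$ to (copy, tuple-of-$T$) pairs, and run atom-level unification on each candidate. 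That is the standard skeleton, it correctly isolates where the closed/open asymmetry forces one global $\theta_2$ versus per-copy $\theta_2^i$, and it gives the reader an actual argument where the paper gives a pointer.

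Two loose ends would need attention in a full write-up. First, the paper's ``more general'' relation is only defined between unifiers with the same number $k$ of open retractions, so your step ``any copy whose image is subsumed by the others can be discarded without losing generality'' is not automatic: unifiers with different $k$ are formally incomparable under that definition, and you must argue separately that every mgu equivalence class has a representative with $k\le c$. Second, your own accounting --- at most $c$ copies, each drawing at most $c$ tuples from $T$ --- yields an enumeration of size roughly $n^{c^2}$, yet you simply assert that this ``contributes the $n^{c!}$ factor''; since $c^2>c!$ for $c\in\{2,3\}$, your enumeration does not literally fit the stated bound for small $c$, and for larger $c$ you have not explained where a $c!$ in the exponent would come from. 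Neither issue touches the only property used downstream (polynomiality in $n$ for fixed $c$), but both are gaps relative to the precise statement being proved.
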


\begin{proof}
The proof directly follows  from Proposition~18 in \cite{Onet12}
\end{proof}

\medskip
\setcounter{theorem}{9}
\begin{theorem}
Let $(T,\varphi^{*})$ be a universal representative
for some source instance $I$ and a set $\abdset$, 
 of \abds\; and safe \aegds\;with $\ad{\abdset}=1$.
Then there exists a polynomial time algorithm, 
 with input $(T,\varphi^*)$ and $\vect{t} \subset \cons$,
such that for any universal query $\query{q}$
decides if $\vect{t} \in \bigcap_{J \in \repe{T,\varphi^*}} \query{q}(J)$. 
\end{theorem}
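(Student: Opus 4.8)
The plan is to decide the \emph{complement} problem: the tuple $\vect{t}$ fails to be a certain answer precisely when some instance $J\in\repe{T,\varphi^*}$ together with some assignment $\vect{b}$ to the universally quantified variables $\vect{y}$ satisfies $\neg\beta(\vect{t},\vect{b})$. Since $\beta$ is quantifier-free, I would first rewrite $\neg\beta$ in disjunctive normal form, $\neg\beta(\vect{x},\vect{y})\equiv\bigvee_{j}\gamma_j(\vect{x},\vect{y})$, where each disjunct $\gamma_j$ is a conjunction of positive atoms, negated atoms, and (in)equalities over the target schema. The query is fixed, so the number of disjuncts and the size of each are constants; the algorithm therefore reduces to testing, for each $\gamma_j$ independently, whether $\gamma_j$ is \emph{realizable} by some instance in $\repe{T,\varphi^*}$. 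The certain answer is $\true$ iff no disjunct is realizable.

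For a single disjunct I would separate its positive part $\gamma_j^{+}$ from its negated part $\gamma_j^{-}$. View $\gamma_j^{+}$ as a na\"ive table $V$, treating the entries of $\vect{t}$ as constants and the variables of $\vect{y}$ as nulls. An instance of $\repe{T,\varphi^*}$ contains exactly the tuples obtained by valuating $T$ (a single valuation on $\nullsc$ shared across all copies, arbitrary valuations on $\nullso$ per copy), so every positive atom demanded by $\gamma_j^{+}$ must be produced from some subset of $T$. This is precisely the unification requirement captured by the machinery developed just above: I would enumerate $\mgu(V,T)$, which by Lemma~\ref{mgu} is computable in polynomial time in $|T|$ and, since minimally unifiable subtables have size at most $|V|$, only involves bounded-size fragments of $T$. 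Each element of $\mgu(V,T)$ is a candidate way of placing the required positive atoms, committing certain nulls to constants and equating certain nulls; note that the asymmetric unifier definition, allowing several retractions $\theta_2^i$, already accounts for realizing conflicting positive atoms across distinct open-null copies.

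For each such candidate I would then run three polynomial checks. First, consistency with the global condition $\varphi^*$ and with the (in)equality literals of $\gamma_j$, both of which are Boolean combinations of (in)equalities over $\dom{T}$ and can be evaluated directly against the commitments of the unifier. Second, that none of the negated atoms of $\gamma_j^{-}$ is \emph{forced} to appear: because open nulls may be expanded into arbitrarily many fresh constants, I would complete the instance by assigning every null not pinned down by the unifier a fresh constant distinct from all others, so that the only tuples over the value set $\{\vect{t},\vect{b}\}$ that appear are exactly those dictated by the shared $\nullsc$-valuation and the chosen copies; the test then reduces to checking that no tuple of $T$, under the closed-null commitment, is unavoidably mapped onto a forbidden atom. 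Since $\ad{\abdset}=1$, the universal representative has the clean Gaifman-block structure produced by the chase of Section~\ref{abdchase}, so realizing the positive atoms touches only a bounded, block-separated portion of $T$ and the remainder can be filled with fresh values without interference. If some candidate for some disjunct passes all three checks it witnesses a violating instance and the algorithm returns $\false$; otherwise it returns $\true$.

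The delicate point---and the step I expect to be the main obstacle---is the soundness of the fresh-value completion, i.e.\ proving that if no unifier-based local realization of any disjunct survives the checks, then genuinely \emph{no} instance of $\repe{T,\varphi^*}$ violates $\forall\vect{y}\,\beta$. The subtlety is that a violating assignment $\vect{b}$ might in principle exploit coincidences forced between open-null copies and the closed-null valuation to produce exactly the tuples needed; one must argue that any such coincidence can be folded back into a unifier between $V$ and a subtable of $T$, so that the polynomial enumeration of $\mgu(V,T)$ already captures it. This is where the size bound on minimally unifiable subtables and the open/closed separation enforced by safety of the \aegds\ are essential, guaranteeing that the search over locally realizing unifiers is exhaustive while each step stays polynomial in $|T|$, hence in $|I|$.
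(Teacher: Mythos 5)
Your proposal follows essentially the same route as the paper's proof: negate the universal query into an existential disjunction of conjunctions of literals, realize the positive atoms of each disjunct by enumerating $\mgu(\cdot,T)$ (polynomial by Lemma~\ref{mgu} because the query, hence $|V|$, is fixed), and then verify that the negated atoms, the inequality literals, and $\varphi^*$ can be jointly accommodated. The only substantive difference is that the step you flag as the delicate obstacle---soundness of the fresh-value completion against the negated atoms---is closed in the paper's algorithm (Figure~1) not by a separate unavoidability argument but by enumerating, for each negated atom $S_j$, all of its mgu's with $T$ and conjoining the negation of each such unification to $\varphi^*$, the inequality literals, and the positive-unifier commitments, so that everything collapses into one polynomial satisfiability test over a conjunction of (dis)equality clauses of bounded width.
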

\begin{proof}

A universal query is a $\fo$ query of the form 
$\query{q}(\vect{x})\; \defined\; \forall \vect{y}\; \varphi(\vect{x},\vect{y})$,
where $\varphi$ is a quantifier-free $\fo$ formula.
It is easy to verify that $\vect{t}\notin \cert{q}{I}{\abdset}{abd}$ if and only if
there exists a nonempty ground instance $J$ in $\sol{I}{\abdset}{abd}$
such that $J \models \neg \query{q}$. Note that 
$\neg \query{q}$ is logically equivalent with 
\cite{DBLP:journals/corr/abs-1107-1456}:

\begin{equation}\label{universal_not}
\begin{split}
\hspace{0cm}
\bar{\query{q}}(\vect{x}) \defined
\bigvee_{i\letbe 1}^{m} (\exists y_i \bigwedge_{j\letbe 1}^{n} (\delta_{ij}))\comma
\end{split}
\end{equation}
\noindent 
where $\delta_{ij}$ is an atomic formula or the negation of an 
atomic formula.
In order to compute $\eval{q}{\abdset}{abd}$ it is enough
to check that there exists an instance 
$J$ in $\sol{I}{\abdset}{abd}$ such that
$J \models \bar{\query{q}}(\vect{t})$. If such instance
exists, then $\eval{q}{\abdset}{abd}$ will return \false.
For this it is enough to find an integer $i \leq m$ and instance $J$
such that $J \models \varphi_i(\vect{t})$,
where: 
\begin{equation}\label{universal_not_and}
\begin{split}
\hspace{0cm}
\varphi_i(\vect{x}) \defined
\exists y_i \bigwedge_{j\letbe 1}^{n} (\delta_{ij}).
\end{split}
\end{equation}

Figure 1 lists the algorithm which for a fix
query of the form (\ref{universal_not_and}), a given
semi-na\"ive table $T$ 
and a tuple $\vect{t}$ 
decides
if there exists a ground instance $J \in \repe{T}$
with $J \models \varphi_i(\vect{t})$. Thus the 
result of $\eval{q}{\abdset}{abd}$ will be 
the negation of the result returned by the  
algorithm below. In the following we will also view  
$f=\{ x_1/a_1, x_2/a_2, \ldots, x_n/a_n \}$ 
as formula $(x_1=a_1 \wedge x_2=a_2 \wedge \ldots \wedge x_n=a_n)$.
Thus $\neg f$ will represent formula
$(x_1\neq a_1 \vee x_2\neq a_2 \vee \ldots \vee x_n\neq a_n)$.
A conjunctive formula $\bar{\query{q}}^+ \defined \bigwedge_{i\letbe 1}^{\ell} R_i(v(\vect{u}_i))$ we view it also as a na\"ive table 
where each variable from the formula is replaced by a new distinct null.


For our problem membership problem
is $\vect{t} \in \bigcap_{J \in \repe{T,\varphi^*}} \query{q}(J)$?
we will return \true\; if
EXISTS\_EVAL($(T,\varphi^*)$,$\vect{t}$) 
returns \false, and it returns \false\; otherwise.
Note that the problem from Step 23 of the algorithm is solvable in 
polynomial time as $\varphi$ is a conjunction of disjunctive formulae 
of the following form $(x_1\neq y_1 \vee x_2\neq y_2 \vee \ldots \vee x_n\neq y_n)$,
 $p \leq |\bar{\query{q}}^+|$ 
and $\theta_1$, $\theta_2$ and each
$\theta^i_2$, $1\leq i \leq p$, are of fixed size bounded by
the maximum arity of a relation in $\bar{\query{q}}^+$
\end{proof}

\medskip
\setcounter{theorem}{10}
\begin{theorem}
Let $\abdset$ be a set of \abds\; and safe \aegds\; with $\ad{\abdset}=1$
and such that $\abdsetright$ is GAV-reducible and 
each \aegd\; does not equate two 
variables both occurring in affected positions.
Then for any $\cqnp$ query
the {\sc Eval}$_{\sf ABD}(\abdset,Q)$
problem is polynomial and  can be decided using 
a universal representative.
\end{theorem}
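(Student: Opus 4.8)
The plan is to first reduce the problem to reasoning about a particularly simple universal representative, and then to show that certain answering of a $\cqnp$ query collapses to a polynomial number of unification tests. First I would invoke Theorem~\ref{abdsemanticsTheo} to obtain a universal representative $(T,\varphi^{*})$ with $\sol{I}{\abdset}{abd}=\repe{T,\varphi^{*}}$. Because $\abdsetright$ is GAV-reducible, Proposition~\ref{restabdsemanticsTheo} (through Proposition~\ref{gav-reducibility}) guarantees $\varphi^{*}\equiv\true$, so the representative is a plain semi-na\"ive table $T$ with $\sol{I}{\abdset}{abd}=\repe{T}$. Next I would use the hypothesis that no \aegd\; equates two variables both in affected positions: inspecting the ``\egd'' step of the annotated chase (Appendix~\ref{abdchase}), such an \aegd\; never equates two open nulls but only replaces an open null by a constant, so by induction on the chase \emph{no closed null is ever created} and $T$ contains only open nulls. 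Finally, GAV-reducibility forces every existential variable to occur in a single head atom, so each open null of $T$ occurs in exactly one tuple; hence every Gaifman block of $T$ is a single tuple and $\repe{T}=\{\bigcup_{i\le n}v_i(T)\}$ where tuples may be valuated fully independently.

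With this representation I would substitute $\vect{t}$ for the free variables and treat $\query{q}=\exists\vect{x}\,A(\vect{x})\wedge\bigwedge_{j}\neg B_j(\vect{u}_j)$ as boolean, the single positive atom $A$ guarding all variables (so each $\vect{u}_j\subseteq\vect{x}$). A \emph{witness} in $J\in\repe{T}$ is a valuated tuple of $A^{T}$ whose free positions equal $\vect{t}$ and whose induced negative tuples $B_j(\vect{u}_j)$ are all absent from $J$; thus $\vect{t}\in\cert{q}{I}{\abdset}{abd}$ iff every $J\in\repe{T}$ contains a witness. The key structural observation, using that blocks are single tuples and all nulls open, is that a tuple $\tau\in A^{T}$ forces a match in \emph{every} $J$ exactly when its free positions are constants equal to $\vect{t}$ (a \emph{forced} tuple); a match through a null in a free position can always be avoided by the adversarial valuation. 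For a forced $\tau$, the adversary can kill its match iff some $\rho\in B_j^{T}$ \emph{unifies} with the projection $\vect{u}_j[\tau]$ (open nulls on both sides treated as unification variables), since within one copy it may set the nulls of $\tau$ and $\rho$ so that $v(\rho)=v(\vect{u}_j[\tau])$. This is precisely the test the \mgu\;machinery of the Appendix computes, and by Lemma~\ref{mgu} the relevant $\mgu$ sets are computable in polynomial time for the fixed $\query{q}$.

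Putting these together, I would prove the criterion: $\vect{t}\in\cert{q}{I}{\abdset}{abd}$ iff there is a forced tuple $\tau\in A^{T}$ that is \emph{unkillable}, i.e.\ for every $j$ no $\rho\in B_j^{T}$ unifies with $\vect{u}_j[\tau]$. The ``$\Leftarrow$'' direction is easy: an unkillable forced $\tau$ yields a surviving witness in every $J$, as its induced negative tuples can never be produced by any valuation of any $B_j^{T}$-tuple. The algorithm then enumerates the polynomially many $\tau\in A^{T}$ and, for each, runs the fixed-size unification tests against every $B_j^{T}$; this is polynomial by Lemma~\ref{mgu} and the bound $|T'|\le|V|$ on minimally unifiable subtables.

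The hard part, and the main obstacle, is the ``$\Rightarrow$'' (completeness) direction: showing that if \emph{every} forced tuple is individually killable, the adversary can build a single $J\in\repe{T}$ that kills them \emph{all at once} while exposing no new surviving witness. The danger is a cascade, since each killing copy $v_i(T)$ re-introduces all of $A^{T}$ and hence the forced tuples again. I would resolve this by fixing one valuation $\hat v$ that sends every free-position null to a fresh constant distinct from the entries of $\vect{t}$ (so only forced tuples match) and sends each forced tuple's existential nulls to the constants dictated by its chosen unifier, and then taking polynomially many copies (at most one per distinct required killer value) that vary only the nulls of the $B_j^{T}$-tuples so as to realise every needed killer tuple $v(\vect{u}_j[\tau])$. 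Since distinct forced tuples have disjoint nulls (single-tuple blocks) and all matches carry the \emph{fixed} values given by $\hat v$, finitely many killer values suffice; the subtle case where two forced tuples must be killed through the \emph{same} $\rho\in B_j^{T}$ with incompatible values is handled by using separate copies for the two killer values, which is sound exactly because every match value is fixed across copies. Verifying that this construction stays polynomially bounded and really kills every match — including the case $A=B_j$, where the guard relation is itself a killer — is where the bulk of the technical work lies.
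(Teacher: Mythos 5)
Your reduction to a well-behaved universal representative --- $\varphi^{*}\equiv\true$ via Proposition~\ref{restabdsemanticsTheo}, only open nulls because no \aegd\; equates two affected positions, and single-tuple Gaifman blocks by GAV-reducibility --- is exactly the setup the paper works in, and your soundness direction (an unkillable forced match survives in every $J\in\repe{T}$) is fine. The genuine gap is that the criterion you then set out to prove is false in precisely the direction you single out as the hard one. Take $\abdset=\{\,P(x)\leftrightarrow \annt{A}{1}(x,z)\,\}$ and $I=\{P(c)\}$, so that $T=\{A(c,\nullo_1)\}$, and let $\query{q}\defined \exists y\,\big(A(c,y)\wedge \neg A(y,d)\big)$ with $d\neq c$; this is a $\cqnp$ query and $\abdset$ satisfies every hypothesis of the theorem. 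Every $J\in\repe{T}$ has the form $\{(c,a_1),\ldots,(c,a_n)\}$: if $d\notin\{a_i\}$ then any $a_i$ is a witness (killing it would require $(a_i,d)\in A^{J}$, which forces $d\in\{a_l\}$), and if $d\in\{a_i\}$ then $y=d$ is a witness (killing it would require $(d,d)\in A^{J}$, which forces $d=c$). So the certain answer is \true. Yet the unique forced tuple $\tau=A(c,\nullo_1)$ is killable in your sense: its induced negative $A(\nullo_1,d)$ unifies with a renamed copy $A(c,\nullo_2)$ of the guard tuple via $\nullo_1\mapsto c$, $\nullo_2\mapsto d$. Your criterion therefore answers \false.

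What fails is exactly the ``separate copies'' device: the killer $v_l(\rho)$ you add in another copy is itself a valuation of a tuple of $A^{T}$ carrying the pattern's constants, hence a \emph{new} forced match --- here $(c,d)$, whose own induced negative $(d,d)$ can never be produced --- so the cascade does not terminate harmlessly. The case $A=B_j$ that you defer as ``the bulk of the technical work'' is therefore not a verification detail; handled the way you propose, the statement you are proving becomes wrong. Note that the paper's algorithm performs the unification of each induced negative $U_i$ against $T$ with the nulls of $T$ kept \emph{shared} between the two sides (this is the point of the remark after Lemma~\ref{mgu} that the mgu ``considers two semi-na\"ive tables that may share null values''): the killer must be realizable consistently with the very valuation that produces the match, not against a freshly renamed copy. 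To repair your argument you would need to replace ``some $\rho\in B_j^{T}$ unifies with $\vect{u}_j[\tau]$ after renaming apart'' by such a consistency-constrained unification, and then show that same-valuation killing introduces no matches beyond those already enumerated.
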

\begin{proof}
Let $\query{q}$ be a $\cqnp$, that is $\query{q}$ has exactly one
positive atoms. 
The polynomial algorithm listed in 
Figure 2  decides if 
$\vect{t} \in \cert{q}{I}{\abdset}{abd}$, 
where $I$ is an input instance and $T$ is the universal representative for
$I$ and $\abdset$. 
Table $T$ may be computed in polynomial time 
by the annotated-chase algorithm.
Because  
none of the \aegds\;  equate variable occurring in affected positions 
in the body of the \aegds, 
the same annotated chase algorithm
will return 
$T$ as a table containing only nulls from $\nullo$.


Few clarifications are in order here. 
Let $k$ be the maximum number of atoms occurring in the head of any
\tgd\; from $\abdsetright$.
It can be easily verified that for each block $B$
from Step 11 we have $|B|\leq k$. From this it follows that
the size of table $U_i$ from Step 13 is also bounded by the same
$k$.
At Step 14 for each $i$ the list of mgu's can be listed in time
$O((2k)^{2k}k^2n^{k!})$, where $n=|T|$. Note that in this case
the mgu considers two semi-na\"ive tables that may share null values.
At Step 17, the size of instance $J_i$ is not bounded by any constant,
still we may verify if for an instance $J_i$ and table $T$ if there 
exists the set of homomorphisms,
as per Step 13, 
with the following algorithm:

\begin{codebox}
\Procname{STEP18\_CHECK($T$,$J_i$)}
\li Let $v: J_i \rightarrow \{ \true,\false \}$;
\li Let $v(\vect{t})\defined \false$, for all $\vect{t} \in J_i$;
\li \For all $\vect{s} \in T$ and all $h$ such that $h(\vect{s})\in J_i$
\li \Do 
\li Let $v(h(\vect{s}))\defined \true$; \End 
\li \If $v(\vect{t})=\true$ for all $\vect{t}\in J_i$ 
\li \Then \Return \true;
\li \Else \Return \false; \End
\end{codebox}

It can be verified that $\cqn$\_EVAL($T$,$\vect{t}$)
algorithm is sound and complete in deciding
if $\vect{t} \in \cert{q}{I}{\abdset}{abd}$
\end{proof}

\begin{figure*}[h]\label{alg1}
\caption{EXISTS\_EVAL algorithm}
\begin{codebox}
\Procname{EXISTS\_EVAL($(T,\varphi^*)$,$\vect{t}$)}
\li Let $\bar{\query{q}}(\vect{x}) \defined 
\exists \vect{y} \bigwedge_{i\letbe 1}^{\ell} R_i(\vect{u}_i) \wedge 
\bigwedge_{j\letbe 1}^{m} \neg S_j(\vect{w}_j) 
\bigwedge_{k\letbe 1}^{n} \neg t_{k1} = t_{k2}$;
\li \ \ \ \ \ \ \ \ \ \ \ \ \ \ 
where $\vect{x} \subseteq \big( \bigcup^{\ell}_{i\letbe 1} \vect{u}_i \cup \bigcup^m_{j\letbe 1} \vect{w}_j\big )$;
\li \ \ \ \ \ \ \ \ \ \ \ \ \ \ \ \ \ \ \ \ \ \ \ 
 $\vect{y} = (\bigcup^{\ell}_{i\letbe 1} \vect{u}_i \cup \bigcup^m_{j\letbe 1} \vect{w}_j) \setminus \vect{x}$; and
 \li \ \ \ \ \ \ \ \ \ \ \ \ \ \ \ \ \ \ \ \ \ \ \ 
 $t_{k1}, t_{k2} \in  \vect{x} \cup \vect{y} \cup \cons$, for $1 \leq k \leq n$;
\li
\li Let $v : \vect{x} \cup \vect{y} \rightarrow \vect{t} \cup \vect{y}$, such that $v(\vect{x})=\vect{t}$ and $v(y)=y$ for all $y \in \vect{y}$;
\li
\li Let $\bar{\query{q}}^+ \defined \bigwedge_{i\letbe 1}^{\ell} R_i(v(\vect{u}_i))$;
\li
\li \For all 
$(\theta_1, (\theta_2,\{ \theta^1_2,\theta^2_2,\ldots,\theta^p_2 \}))$ 
mgu for $\bar{\query{q}}^+$ and $T$ 
\li \Do
\li  Let $\varphi \letbe \varphi^*$;
\li \For j:=1 \To m 
\li \Do 
\li \For all $(\xi^j_1,(\xi^j_2,\{ \bar{\xi}^j_2 \}))$ mgu for $S_j(v(\vect{w}_j))$ and $T$ 
\li \Do
\li  Let $\varphi \letbe \varphi \wedge (\neg \xi^j_1 \vee \neg \xi^j_2 \vee \neg \bar{\xi}^j_2)$;
\li \End 
\li \For k:=1 \To n 
\li \Do 
\li Let $\varphi \letbe \varphi \wedge  (v(t_{k_1})\neq v(t_{k_2}))$;
\li  \End \End
\li \If $(\theta_1 \wedge \theta_2 \wedge \theta^1_2 \wedge \theta^2_2 \wedge \ldots \wedge \theta^p_2 \wedge \varphi)$ is satisfiable
\li \Then \Return $\true$
\li \End \End
\li  \Return $\false$
\end{codebox}
\end{figure*}

\begin{figure*}\label{alg2}
\caption{$\cqn$\_EVAL algorithm}
\begin{codebox}
\Procname{$\cqn$\_EVAL($T$,$\vect{t}$)}
\li Let $\query{q}(\vect{x}) \defined 
\exists \vect{y}\; 
R_i(\vect{u}) \wedge \bigwedge_{i\letbe 1}^{n} \neg S_{i}(\vect{w}_{i})$;
\li \ \ \ \ \ \ \ \ \ \ \ \ \ \ \ \ \ \ \ where $\vect{x}\cup \vect{y} = \vect{u}\setminus \cons$; and
\li \ \ \ \ \ \ \ \ \ \ \ \ \ \ \ \ \ \ \ \ \ \ \ \ \ \ \ \  $\bigcup^{n}_{i\letbe 1} \vect{w}_i \setminus \cons \subseteq \vect{u}\setminus \cons$;
\li Let $v : \vect{x} \cup \vect{y} \rightarrow \vect{t} \cup \vect{y}$, such that $v(\vect{x})=\vect{t}$ and $v(y)=y$ for all $y \in \vect{y}$;
\li Let $g_{i}$ be the function that maps vector 
$\vect{u}$ to $\vect{w}_{i}$, 
for $i\in \{1,\ldots,n \}$;
\li 
\li Let $V\defined\{ R(h\circ v(u))\setdef R(h\circ v(u))\in T \mbox{ for homomorphism } h \}$;
\li Let $J=V \cap (\cons)^{\ar{R}}$;
\li Let $V'=V \setminus J$;
\li
\li \For each set $B$ from the Gaifman-partition of $V'$ 
\li \Do
\li Let $U_{i}\defined\{ S_{i}(g_{i}(\vect{z}))\setdef R(\vect{z})\in B \}$,
for $i\in \{1,\ldots,n \}$;
\li \If $\neg$($\exists i$ and 
$(\theta_{1,i},(\theta_{2,i},\{\theta_{2,i}^1, \theta_{2,i}^{2},\ldots,\theta_{2,i}^{|B|} \}))$ mgu for $U_{i}$ and $T$)
\li \Then \Return \true; \End \End 
\li 
\li Let $J_{i}\defined\{ S_{i}(g_{i}(\vect{z}))\setdef R(\vect{z})\in J \}$,
for $i\in \{1,\ldots,n \}$;
\li \If $\neg$($\exists i$, $\{h,h_1,h_2,\ldots,h_n \}$ 
 and $T'\subset T$ such that $h(\bigcup_{i\letbe 1}^n h_i(T'))=J_i$ )
\li \Then \Return \true; 
\li \End
\li  \Return \false;
\end{codebox}
\end{figure*}

\end{document}